\documentclass[11pt]{article}
\usepackage[utf8]{inputenc}
\usepackage{amsmath}
\usepackage{amsthm}
\usepackage{amssymb}
\usepackage{algorithm}
\usepackage{subfig}
\usepackage{color}
\usepackage[english]{babel}
\usepackage{graphicx}
\usepackage{grffile}
\usepackage{wrapfig,epsfig}
\usepackage{epstopdf}
\usepackage{url}
\usepackage{color}
\usepackage{epstopdf}
\usepackage{algpseudocode}
\usepackage[T1]{fontenc}
\usepackage{bbm}
\usepackage{dsfont}
\usepackage{thmtools}
\usepackage{thm-restate}
\usepackage{bm}
\usepackage{tcolorbox}
\usepackage{enumerate}

\usepackage{hyperref}  
\hypersetup{colorlinks=true,citecolor=blue,linkcolor=blue}

\usepackage[margin=1in]{geometry}

\graphicspath{{./figs/}}
\usepackage{mathtools}

\newtheorem{theorem}{Theorem}[section]

\newtheorem{lemma}[theorem]{Lemma}
\newtheorem{definition}[theorem]{Definition}

\newtheorem{corollary}[theorem]{Corollary}

\newtheorem{fact}[theorem]{Fact}
\newtheorem{remark}[theorem]{Remark}
\newtheorem{claim}[theorem]{Claim}

\newcommand{\wh}{\widehat}
\newcommand{\wt}{\widetilde}

\newcommand{\eps}{\epsilon}

\newcommand{\R}{\mathbb{R}}

\renewcommand{\varepsilon}{\epsilon}
\renewcommand{\tilde}{\wt}
\renewcommand{\hat}{\wh}
\renewcommand{\bar}{\overline}
\renewcommand{\eps}{\epsilon}

\newcommand{\bx}{\mathbf{x}}
\newcommand{\ba}{\mathbf{a}}

\newcommand{\br}{\mathbf{r}}

\newcommand{\bd}{\mathbf{d}}

\def\calM{\mathcal{M}}

\def\cc{\mathbf{c}} \def\bpi{\boldsymbol{\pi}} \def\calF{\mathcal{F}} \def\calG{\mathcal{G}}\def\calA{\mathcal{A}}

\DeclareMathOperator*{\E}{{\mathbb{E}}}

\DeclareMathOperator{\OPT}{OPT}

\DeclareMathOperator{\poly}{poly}
\DeclareMathOperator{\polylog}{polylog}

\DeclareMathOperator{\rank}{rank}



\makeatletter
\newcommand*{\RN}[1]{\expandafter\@slowromancap\romannumeral #1@}
\makeatother

\title{On the Complexity of Dynamic Submodular Maximization}
\author{Xi Chen\\ Columbia University\\ \texttt{xichen@cs.columbia.edu} 
\and Binghui Peng\\ Columbia University\\ \texttt{bp2601}@columbia.edu}
\date{\today}

\begin{document}

\maketitle

\begin{abstract}
We study dynamic algorithms for the problem of maximizing a monotone submodular function over a stream of $n$ insertions and deletions. We show that any algorithm that maintains a $(0.5+\eps)$-approximate solution under a cardinality constraint, for any constant $\eps>0$, must have an amortized query complexity that is {\em polynomial} in $n$. Moreover, a linear amortized query complexity is needed in order to maintain a $0.584$-approximate solution. This is in sharp contrast with recent dynamic algorithms of \cite{lattanzi2020fully,monemizadeh2020dynamic} that achieve $(0.5-\eps)$-approximation with a $\polylog(n)$ amortized query complexity.

On the positive side, when the stream is insertion-only, we present efficient algorithms for the problem under a cardinality constraint and under a matroid constraint with approximation guarantee $1-1/e-\eps$ and amortized query complexities $\smash{O(\log (k/\eps)/\eps^2)}$ and $\smash{k^{\tilde{O}(1/\eps^2)}\log n}$, respectively, where $k$ denotes the cardinality parameter or the rank of the matroid.
\end{abstract}\thispagestyle{empty}\newpage


\section{Introduction}
\label{sec:intro}

Initiated by the classical work of \cite{nemhauser1978analysis,nemhauser1978best} in the 1970s, submodular maximization has~developed into a central topic of  discrete 
optimization during past decades~\cite{calinescu2011maximizing, feldman2011unified, vondrak2013symmetry,chekuri2014submodular,buchbinder2015tight} (see \cite{buchbinder2018submodular} for a survey).
Capturing the natural notion of diminishing returns, submodular functions and their optimization
problems have found numerous applications in areas such as
machine learning~\cite{golovin2011adaptive,wei2015submodularity}, data mining~\cite{lin2011class}, algorithmic game theory~\cite{roughgarden2010algorithmic}, social networks~\cite{kempe2003maximizing}, etc.
The canonical form of the problem is to maximize a monotone submodular function 
under a cardinality constraint $k$, for which the greedy algorithm of \cite{nemhauser1978analysis}
achieves the optimal approximation ratio of $1-1/e$ \cite{nemhauser1978best, Feige98}.
However, despite the simplicity and optimality of this celebrated algorithm, 
there has been a surge of recent research effort to reexamine the problem
under a variety of computational models motivated by unique challenges of working with massive datasets.
These 
include streaming algorithms~\cite{badanidiyuru2014streaming, feldman2020one}, parallel algorithms~\cite{balkanski2018adaptive, balkanski2019exponential, ene2019submodular, fahrbach2019submodular, balkanski2019optimal, chekuri2019parallelizing,ene2019submodular1}, 
learning algorithms~\cite{balcan2011learning, balkanski2017limitations} and distributed algorithms~\cite{barbosa2015power, mirrokni2015randomized,barbosa2016new}.

\vspace{+1mm}
\noindent\textbf{Dynamic submodular maximization.}
We study the power and limitations of \emph{dynamic} algori\-thms {for submodular maximization}.
In this model, the set of elements that one can choose from is subject to changes
incurred by a stream of $n$ insertions and deletions.
Letting $V_t$ denote the current set of elements after the first $t$ operations,
an algorithm needs to maintain a subset $S_t\subseteq V_t$ of size at most $k$
that achieves a certain approximation guarantee for every round $t$, and its
performance is measured by its amortized query complexity (i.e.,
the average number of queries it makes per operation on
the underlying unknown monotone submodular function); see the formal definition of the model in Section \ref{sec:pre}.

The dynamic model is motivated by real-world applications of submodular maximization 
over massive
datasets that evolve frequently. 
Many of these applications arise in machine learning and data mining tasks,
including the data 
subset selection problem~\cite{wei2015submodularity, golovin2011adaptive},
movie recommendation system~\cite{balkanski2018approximation},
influence maximization in social networks~\cite{kempe2003maximizing}, etc.
As an example,~in influence maximization, one is given a social network as well as a stochastic diffusion model,
and the goal is to select a seed set of size at most $k$ to maximize the influence spread over the network.
The spread function is submodular for many well-studied models and therefore, the problem becomes
submodular maximization with a cardinality constraint.
Given that social networks, like Twitter or Facebook, are involving continuously over time, 
an old seed set could become outdated quickly. 
A natural rescue is to have an efficient dynamic algorithm that maintains a seed set over time.





The problem of dynamic submodular maximization under a cardinality constraint was studied~in two recent papers \cite{monemizadeh2020dynamic, lattanzi2020fully}, giving
algorithms that achieve an $(1/2-\eps)$-approximation
guarantee with amortized query complexities that are $O(k^2\log^2 n\cdot \eps^{-3})$ and $O(\log^8 n \cdot \eps^{-6})$ in
a stream of length $n$. Compared with the $1-1/e$ approximation guarantee of the offline greedy algorithm,  however,  
a natural open question posted by \cite{lattanzi2020fully} is whether 
the $1/2$ can be further improved under the dynamic setting,
or even under the setting when the stream is insertion-only.

\subsection{Our results}

We resolve the open question of \cite{lattanzi2020fully} by showing that any algorithm
with an approximation ratio of $1/2+\eps$ 
must have amortized query complexity polynomial in the stream length $n$:

\begin{restatable}{theorem}{thmlowertwo} \label{thm:lower2}
	For any constant $\eps>0$, there is a constant $C_\eps> 0$ with the following property.~When $k\ge C_\eps$, any randomized algorithm that achieves an approximation ratio of $1/2+\eps$ for dynamic submodular~maxi\-mization
	under cardinality constraint $k$ requires amortized query complexity $\smash{n^{\tilde{\Omega}(\eps)}/ k^3}$.	
\end{restatable}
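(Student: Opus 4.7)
The plan is to apply Yao's minimax principle: I will construct a random distribution over update sequences of length $n$ and argue that any deterministic dynamic algorithm that maintains a $(1/2+\eps)$-approximation on every sequence in the support must, in expectation over the distribution, make at least $n^{1+\tilde{\Omega}(\eps)}/k^3$ oracle queries. Dividing by $n$ then yields the amortized bound.

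The first ingredient is a hard \emph{static} instance realizing the $1/2$-barrier. On a ground set of size roughly $m = k \cdot n^{\tilde{\Theta}(\eps)}$, I would build a random family $\{f_H\}$ of monotone submodular functions parameterized by a hidden structure $H$ (for example, a random balanced partition of the elements into "good" and "decoy" groups, or a random perfect matching between two sides of a bipartite ground set), defined as a capped coverage function. The construction should be arranged so that (a) there is a canonical "symmetric" size-$k$ set attaining value $\tfrac12\OPT$ that is oblivious to $H$, (b) every $(1/2+\eps)$-approximate size-$k$ solution must be heavily correlated with $H$, and (c) a standard indistinguishability / coupling argument in the spirit of the symmetry-gap lower bounds for streaming submodular maximization shows that any algorithm issuing fewer than $m^{\Omega(\eps)}$ value queries to $f_H$ cannot output such a correlated set with constant probability. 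The polynomial dependence of $m$ on $n$ is what transports the static query lower bound into the $n^{\tilde{\Omega}(\eps)}$ factor in the dynamic theorem.

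The dynamic reduction then organizes the stream into $T = \Theta(n/m)$ phases. In each phase the adversary inserts a fresh copy of the hard ground set equipped with an independently sampled hidden structure $H_t$, waits for the algorithm to commit to its maintained solution (which must be $(1/2+\eps)$-approximate by hypothesis at every round), and finally deletes every element of the copy before the next phase begins. Because the $H_t$'s are mutually independent and the elements of past phases have been removed from $V_t$, any queries the algorithm made in earlier phases are information-theoretically useless for the current $H_t$. Applying the static lower bound per phase and summing yields $T \cdot m^{\Omega(\eps)} \ge n \cdot n^{\tilde{\Omega}(\eps)}/k^{O(1)}$ expected queries in total.

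The main obstacle, and the step that I expect absorbs the $1/k^3$ slack, is rigorously decoupling the per-phase query cost against an algorithm that can interleave queries across the whole stream and that may even query elements belonging to past or future phases. To handle this I would define the global oracle as a sum of per-phase coverage contributions so that the value returned on a query can be split into per-phase terms, each depending only on that phase's $H_t$, and then use a chain-rule argument over phases to convert the constant-probability per-instance lower bound into a high-probability statement for the entire sequence. A second technical point is converting "failure to $(1/2+\eps)$-approximate" into "recovery of a non-trivial portion of $H_t$"; this reduction typically loses a $\poly(k)$ factor, which together with the packing overhead $m/k$ should account for the $k^3$ in the denominator of the final bound.
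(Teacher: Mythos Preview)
Your high-level scaffold (Yao + hard random family + phase argument) is reasonable, but the concrete construction you propose cannot give the stated bound, and the gap is quantitative rather than merely technical.

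First, the arithmetic in your summation does not close. With $T=\Theta(n/m)$ independent phases and a per-phase lower bound of $m^{\Omega(\eps)}$ queries, the total is $T\cdot m^{\Omega(\eps)} = n\cdot m^{\Omega(\eps)-1}$, which is \emph{sublinear} in $n$ once $\eps<1$. More importantly, there is a matching upper bound for your stream: since each phase is a fresh static instance of size $m$ with all elements present, the algorithm can simply run lazy greedy on that phase, make $\tilde O(m)$ queries, and output a $(1-1/e)$-approximation. Over $T=n/m$ phases that is $\tilde O(n)$ queries in total, i.e.\ amortized $\tilde O(1)$. So no flat ``insert a batch, solve, delete the batch'' construction can yield an amortized lower bound growing with $n$; at best it gives the linear-in-$n$ total bound that the paper obtains (by a different flat construction) for the constant $0.584$, not for $1/2+\eps$.

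What the paper does to break through this barrier is precisely to abandon independence between phases. It builds a depth-$L$ tree with $L=1/\eps$, places a hidden bijection at every internal node, and drives the stream by a DFS that inserts all children of the current node, recurses into a few of them, and deletes on the way back up. At each leaf $u$ the current live set is the union of the sibling sets along the root-to-$u$ path, and the only $(1/2+\eps)$-good solutions must hit the (shuffled) path $\calA_u^\pi$ at some level. The weight sequence from \cite{feldman2020one} is what makes the $1/2$ threshold hold regardless of how the algorithm spreads its $k$ elements among levels. The crucial feature is that the high-level bijections are \emph{shared} across exponentially many leaf visits: choosing $m_\ell = n^{(L-\ell+1)\eps}/(2k)$ and branching factor $d=n^\eps$ balances the cost so that whichever level $\ell$ the algorithm tries to learn, it must spend $\Omega(m_{\ell+1}d^{\ell+1}/k^2)=\Omega(n^{1+\eps}/k^3)$ queries in total. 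This nesting is the missing idea in your proposal; without it you cannot force super-constant amortized cost for a $(1/2+\eps)$-approximation.
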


Moreover, we show that any algorithm with approximation ratio $0.584$ must have 
an amortized query complexity that is linear in $n$:

\begin{restatable}{theorem}{thmlowerone}
	\label{thm:lower1}
	There is a constant $C>0$ with the following property. 
	When $k\ge C \log n$,~any~randomized algorithm for dynamic submodular maximization 
	under cardinality constraint $k$~that obtains an approximation guarantee of $0.584$
	must have amortized query complexity at least $\Omega(n/k^3)$. 
\end{restatable}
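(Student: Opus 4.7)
My plan is to prove the total query lower bound $\Omega(n^2/k^3)$ by embedding into a length-$n$ stream roughly $n/k$ informationally independent \emph{phases}, each of which forces the algorithm to solve from scratch a static hard instance of monotone submodular maximization to within a $0.584$-approximation, and each of which requires $\Omega(n/k^2)$ value queries. Dividing the $\Omega(n^2/k^3)$ total across the $n$ stream operations yields the claimed amortized bound.

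\emph{Step 1: a static $0.584$-hardness gadget.} The first task is to exhibit a distribution $\mathcal D$ over monotone submodular functions $f:2^V\to\mathbb R_{\ge 0}$ on a ground set of size $\Theta(n)$ such that any randomized value-oracle algorithm making fewer than $\Omega(n/k^2)$ queries cannot return a size-$k$ subset of value at least $0.584\cdot\OPT(f)$ with constant probability. A natural template is a ``planted'' construction in which a size-$k$ optimum is hidden inside a larger decoy structure --- for example, a random subfamily of a carefully calibrated coverage function, or a random sparse perturbation of a symmetric base function. The specific constant $0.584$ should be exactly the ratio between the best value a blind strategy can certify and the value of the planted optimum, and the hypothesis $k\ge C\log n$ is precisely what is needed to hide the plant via Chernoff concentration. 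The static lower bound itself would follow from Yao's principle together with an information-theoretic distinguisher argument showing that each value query leaks only $O(1)$ bits about the identity of the planted structure.

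\emph{Step 2: dynamic embedding and amortization.} I would split the $n$ updates into $\Theta(n/k)$ phases of $\Theta(k)$ operations each. At the start of phase $i$ I would draw an independent $f_i\sim\mathcal D$ and use deletions followed by insertions to arrange, on a fresh disjoint copy of the ground set $V$, that the currently realized function is $f_i$; disjointness of the phase universes (permitted because the dynamic model allows unrestricted deletion) guarantees that all queries issued during phases $1,\dots,i-1$ are independent of $f_i$. Invoking Yao's principle, I may assume the algorithm is deterministic and randomize over $(f_i)_i$. Since the maintained solution at the end of phase $i$ must be a $0.584$-approximation to $f_i$, Step~1 combined with the phase independence forces $\Omega(n/k^2)$ fresh queries in phase $i$; summing over the $\Theta(n/k)$ phases gives the desired $\Omega(n^2/k^3)$ total.

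\emph{Where I expect the difficulty to lie.} The substantive work is Step~1. Standard ``masking'' arguments --- of the type that underlies Theorem~\ref{thm:lower2} and its $1/2+\eps$ threshold --- do not reach $0.584$, and one needs a more intricate construction, likely of a symmetry-gap or coverage-based flavor, in which the ratio between the best blind-strategy value and the planted optimum is tightened precisely to $0.584$ while still yielding the required $\Omega(n/k^2)$ query lower bound. A secondary care point is Step~2: one must verify that the per-phase deletions genuinely ``reset'' the algorithm's useful information about the upcoming $f_i$. Using strictly disjoint phase universes makes the information-theoretic independence transparent, but the combined construction must still behave as a single coherent monotone submodular function on the union of all phase universes; I expect this is straightforward by declaring the global function to be the sum of per-phase functions plus a uniform penalty, but it requires checking.
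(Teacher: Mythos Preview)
Your Step~2 contains a parameter inconsistency that breaks the argument. You propose $\Theta(n/k)$ phases of $\Theta(k)$ operations each, on ``fresh disjoint copies'' of the ground set; but then each phase's active ground set has only $\Theta(k)$ elements, and no static value-oracle lower bound on that phase can be $\Omega(n/k^2)$ --- any such bound depends only on the local universe size $\Theta(k)$, not on the global stream length $n$. Conversely, if each phase really uses a ground set of size $\Theta(n)$ as Step~1 asserts, inserting a fresh disjoint copy already costs $\Theta(n)$ operations, and you fit only $O(1)$ phases into a length-$n$ stream. Either way the product ``(\#phases) $\times$ (queries per phase)'' cannot reach $\Omega(n^2/k^3)$ via independent disjoint phases.

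The paper's construction avoids this trap precisely by \emph{not} making the phases informationally independent. It inserts a single large block $A=\bigcup_{i\in[m]}A_i$ of size $\Theta(n)$ once at the start and then, in $m=\Theta(n/k)$ short phases of length $\Theta(k)$, inserts and deletes a small block $B_t$. All phases share $A$; the hidden object is a single bijection $\pi:[m]\to[m]$ pairing each $B_t$ with $A_{\pi(t)}$. The large-gap lemma shows that any balanced $S\subseteq A\cup B_t$ with $S\cap A_{\pi(t)}=\emptyset$ has value at most $0.5839+O(\eps)$ (the constant $0.584$ arises from optimizing two parameters $\alpha,\beta$ in a specific symmetric-gap coverage-type construction, not from a generic planted argument), so a $0.584$-approximate solution after phase $t$ must intersect $A_{\pi(t)}$ and hence reveals $\pi(t)$. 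The indistinguishability lemma shows that each function query reduces to one ``is $S$ balanced?'' query plus at most $O(k)$ single-entry queries of the form ``is $\pi(t)=i$?'', so recovering $\Omega(m)$ entries of $\pi$ costs $\Omega(m^2/k)$ queries in total. The idea your proposal is missing is that the per-phase hardness is not an independent static instance but one coordinate of a global combinatorial secret living in the shared block $A$; this sharing is exactly what lets the per-phase cost scale with $m=\Theta(n/k)$ rather than with the $\Theta(k)$ phase length.
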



In the proof of both theorems we construct a family of hard functions to hide
a secret matching. 
The main challenge is to achieve the following two properties at the same time:
The first one, which we will refer to as the \emph{large-gap} property,
states that any dynamic algorithm with a certain approximation guarantee
can only succeed by recovering the secret matching hidden in the function.
On the other hand, the second so-called \emph{indistinguishability} property
shows that each query made by an algorithm can only reveal very little information
about the secret matching. 
The construction for Theorem \ref{thm:lower1} uses a simple bipartite structure
but requires a more detailed analysis to optimize for the constant $0.584$.
The construction for Theorem \ref{thm:lower2}, on the other hand, 
is based on a more sophisticated $(1/\eps)$-level tree structure
which, at a high level,
can be viewed as a~novel tree extension of the path-alike construction from 
\cite{feldman2020one}.
We discuss in more details about our lower bound proofs in Section \ref{sec:techoverview}.

On the positive side, when the stream is insertion only,
we obtain an algorithm that achieves an approximation guarantee
of $1-1/e-\eps$ with amortized query complexity $O(\log (k/\eps)/\eps^2)$:

\begin{restatable}{theorem}{thmupperone}
	\label{thm:insert-cardinality}
	Given any $\eps>0$,
	there is a deterministic algorithm that achieves an approximation guarantee of $ 1-1/e -\eps $
	for dynamic submodular maximization under cardinality constraint $k$ over insertion-only streams.
	The amortized query complexity of the algorithm is $O(\log (k/\eps)/\eps^{2})$.
\end{restatable}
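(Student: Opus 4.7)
The plan is to adapt the offline multi-pass thresholded greedy of Badanidiyuru--Vondr\'ak to the insertion-only dynamic setting. The algorithm maintains a running estimate $M$ of $\max_{v \in V_t} f(\{v\})$ together with a geometric ladder of thresholds $\tau_\ell = (1-\eps)^\ell M$ for $\ell = 0, 1, \ldots, L-1$ where $L = O(\log(k/\eps)/\eps)$, chosen so that the ladder covers the range from $M$ down to roughly $\eps M/k$. For each level we keep a greedy set $S_\ell$ of size at most $k$, enforcing the invariant $S_0 \subseteq S_1 \subseteq \cdots \subseteq S_{L-1}$. When a new element $v$ arrives, we first update $M$, then cascade $v$ through the ladder: we find the smallest $\ell^*$ with $|S_{\ell^*}| < k$ and $f(v \mid S_{\ell^*}) \geq \tau_{\ell^*}$ and insert $v$ into $S_\ell$ for every $\ell \geq \ell^*$. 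Each insertion makes only $O(L)$ function evaluations, and the algorithm returns $\arg\max_\ell f(S_\ell)$ at query time.

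For the approximation analysis, we fix the level $\ell^*$ whose threshold $\tau_{\ell^*}$ is within a $(1-\eps)$ factor of $\OPT(V_t)/k$. Because submodular marginals are non-increasing as the set grows, once an arriving element is rejected at a level it is rejected there forever, so the evolution of each $S_\ell$ under the cascade faithfully mimics a decreasing-threshold greedy execution on $V_t$ with within-pass ordering inherited from the arrival order. Chaining the standard two-case bound for thresholded greedy across levels $0, 1, \ldots, \ell^*$, as in the offline Badanidiyuru--Vondr\'ak proof, then yields $f(S_{\ell^*}) \geq (1 - 1/e - \eps)\,\OPT(V_t)$, which is exactly the target guarantee.

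The main obstacle is drift in the estimate $M$: a new insertion can raise $M$ and shift every threshold upward, potentially invalidating the sets $S_\ell$ built at the old thresholds and breaking the cascade invariant. I would handle this by keeping the ladder slightly wider than strictly necessary and rebuilding a level $S_\ell$ only when its effective threshold has changed by more than a $(1+\eps)$ factor since its last rebuild. Each such rebuild can be charged against a $(1+\eps)$-factor growth event in $M$; since $M$ is monotone non-decreasing over an insertion-only stream and the relevant range of $M$ contributes only a $\log$-factor, a standard amortization bounds the total rebuild cost per level by an $O(1/\eps)$ overhead on the $O(L)$ per-insertion cascade cost, producing the claimed $O(\log(k/\eps)/\eps^2)$ amortized query complexity. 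The most delicate part of the argument is verifying that the dynamic cascade really does preserve the per-level invariants needed by the offline analysis despite arbitrary arrival order, and that the rebuild mechanism never erases a set that is currently supporting the $(1-1/e-\eps)$ bound at time $t$.
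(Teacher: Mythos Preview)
Your approach is genuinely different from the paper's, and it has a real gap.

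\textbf{The cascade is not well-defined.} You enforce $S_0 \subseteq S_1 \subseteq \cdots$ and $|S_\ell| \le k$. With nesting, $|S_\ell|$ is non-decreasing in $\ell$; once some $|S_{\ell_0}| = k$, every $S_{\ell}$ with $\ell \ge \ell_0$ must equal $S_{\ell_0}$. Now suppose a later element $v$ has $\ell^* < \ell_0$ (large marginal against the small set $S_{\ell^*}$). Your rule says to insert $v$ into every $S_\ell$ with $\ell \ge \ell^*$, but $S_{\ell_0}$ is full. If you skip the full levels you break nesting ($v \in S_{\ell_0-1}\setminus S_{\ell_0}$); if you insert anyway you break the cardinality bound. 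A concrete instance: $k=2$, arrivals $a,b,c$ with $f(a)=f(b)=1$, $f(c)=0.6$, $f(\{a,b\})=1.1$, $f(\{a,c\})=f(\{b,c\})=1.6$; after $a,b$ arrive some deep level fills with $\{a,b\}$, and $c$ then qualifies at a shallower level.

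\textbf{The ``faithful mimic'' claim is false even without the size conflict.} In offline decreasing-threshold greedy, pass $\ell$ begins only after pass $\ell-1$ has seen \emph{all} of $V_t$. In your cascade, when $v_s$ is tested at level $\ell$, the current $S_\ell$ does not yet contain elements $v_{s+1},\ldots,v_t$ that offline would have placed at levels $<\ell$ before starting pass $\ell$. Your ``once rejected, rejected forever'' observation handles one direction (dynamic rejects $\Rightarrow$ offline rejects), but the analysis needs the other direction too: you must know that every $o \in O_t$ not in $S_\ell$ has small marginal with respect to the set that was present \emph{when the slot-filling decisions were made}, and the cascade does not deliver that.

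\textbf{What the paper does instead.} The paper keeps a \emph{single} solution set $S$ per guess of $\OPT$, with a \emph{moving} threshold $(\OPT - f(S))/k - \Delta$ that drops each time $S$ grows. Elements not yet added are stored in $O(1/\eps)$ buckets indexed by (a stale estimate of) $f_S(e)/\Delta$. Whenever $S$ grows, a \textsc{Revoke} step scans only the buckets at or above the new threshold; each probed element is either absorbed into $S$ or demoted to a strictly lower bucket, so every element is touched $O(1/\eps)$ times in total. This is a lazy simulation of the \emph{classic} greedy (dynamic threshold), not of multi-pass threshold greedy (fixed thresholds), and that is exactly why it avoids the pass-ordering problem your cascade runs into. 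The $\OPT$-guessing wrapper then contributes the remaining $O(\log(k/\eps)/\eps)$ factor.
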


We also obtain a randomized algorithm that achieves an approximation guarantee of $1-1/e-\eps$
for the problem under a matroid constraint, with amortized query complexity $\smash{k^{\tilde{O}(1/\eps^2)}\log n}$.

\begin{restatable}{theorem}{thmuppertwo}
	\label{thm:insert-matroid}
	Given any $\eps > 0$, there is a randomized algorithm that achieves
	an approximation guarantee of $1-1/e-\eps$
	for dynamic submodular maximization under matroid constraints over 
	insertion-only streams.
	The amortized query complexity of the algorithm is $\smash{k^{\tilde{O}(1/\eps^2)}\log n}$,
	where~$k$~is the rank of the matroid.
\end{restatable}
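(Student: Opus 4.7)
The plan is to design the dynamic algorithm on top of a static $(1-1/e-\eps)$-approximation primitive for matroid-constrained monotone submodular maximization whose query cost depends only on $k$ and $\eps$ (not on $|U|$), combined with a lazy rebuild strategy that exploits the insertion-only stream. I would maintain $L = O(\log_{1+\eps} n)$ parallel instances indexed by geometrically spaced guesses $\tau_i = (1+\eps)^i$ of $\OPT(V_t)$, and at each time $t$ output the solution among these instances whose objective value is largest. Since $\OPT(V_t)$ is within a $(1+\eps)$ factor of some $\tau_i$ for every $t$, it suffices that each instance achieves a $(1-1/e-\eps)$-approximation against its own guess.

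For the static primitive, given a ground set $U$ and a guess $\tau$, I would first filter $U$ down to $F_\tau = \{u \in U : f(\{u\}) \ge (\eps/k)\tau\}$. By submodularity, $|F_\tau| \le k/\eps$ whenever $\tau \ge \OPT(U)/(1+\eps)$, and the elements removed contribute at most $\eps\tau$ in aggregate to any feasible set. On this filtered universe I would run the continuous greedy algorithm of Calinescu-Chekuri-Pal-Vondrak on the multilinear extension restricted to the matroid polytope, followed by swap rounding; with $O(k/\eps^2)$ discretization steps and an appropriately designed sample-based gradient estimator, the total query cost of a single call on a universe of size $O(k/\eps)$ is $k^{\tilde{O}(1/\eps^2)}$, and the output is a $(1-1/e-\eps)$-approximate independent set with constant probability, which can be boosted to $1-1/\poly(n)$ with $O(\log n)$ independent trials.

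For the dynamic layer, on each insertion I would update every filter $F_{\tau_i}$ in $O(1)$ queries per guess, and trigger a rebuild of instance $i$ whenever $|F_{\tau_i}|$ has grown multiplicatively by $(1+\eps)$ since its last rebuild. Because the stream is insertion-only, filters are monotone non-decreasing, so each of the $L$ instances undergoes at most $O(\log_{1+\eps} n)$ rebuilds across the stream. Summing the rebuild costs and the per-insertion bookkeeping, the total query cost is bounded by $n \cdot k^{\tilde{O}(1/\eps^2)}\log n$ after absorbing polylogarithmic factors from the boosting into the $\tilde{O}$, yielding the claimed amortized bound of $k^{\tilde{O}(1/\eps^2)}\log n$.

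The main obstacle is arguing correctness \emph{between} rebuilds: an insertion can shift $\OPT(V_t)$ upward without immediately triggering a rebuild, so the stale solution may lose its approximation guarantee. This is handled by showing that any newly arrived element either has singleton value below the threshold $(\eps/k)\tau_i$, in which case submodularity bounds its aggregate contribution to $\eps\tau_i$ when summed over up to $k$ elements, or it enters the filter, in which case the multiplicative $(1+\eps)$ growth trigger guarantees a rebuild before the approximation quality can degrade by more than an $\eps$ factor. A secondary technical point is the randomization in the gradient estimator and swap rounding, which is resolved via a union bound over the $\tilde{O}(1)$ rebuilds per guess once each rebuild has been boosted to succeed with probability $1 - 1/\poly(n)$.
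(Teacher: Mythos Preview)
Your proposal has a genuine gap: the claim that the filter $F_\tau = \{u : f(\{u\}) \ge (\eps/k)\tau\}$ has size at most $k/\eps$ whenever $\tau \ge \OPT(U)/(1+\eps)$ is false. Submodularity gives $f(S)\le \sum_{u\in S} f(\{u\})$, not a lower bound, so large singleton values do not force a large optimum. Concretely, take $f(S)=\min(|S|,1)$ on any matroid of rank $k\ge 1$: every element has singleton value $1$, $\OPT=1$, yet for $\tau=1$ the filter is all of $U$, of size $n$. With the filter unbounded, (i) a single continuous-greedy rebuild no longer costs only $k^{\tilde O(1/\eps^2)}$ queries, and (ii) the number of $(1+\eps)$-growth rebuilds per guess is $\Theta(\log_{1+\eps} n)$ with each rebuild potentially running on a set of size $\Theta(n)$, so the amortized bound collapses. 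The ``staleness'' argument between rebuilds is also not justified: a single high-value arrival that enters $F_{\tau_i}$ can shift the optimum for guess $\tau_i$ by far more than an $\eps$ factor before the multiplicative trigger fires, since the trigger depends only on $|F_{\tau_i}|$, not on values.

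For contrast, the paper never tries to reduce to a $\poly(k)$-size core. Instead it designs a streaming subroutine \textsc{Prune-Greedy} that, given a guess of $\OPT$ and a ``schedule'' $a\in\mathcal A$ with $|\mathcal A|=k^{\tilde O(1/\eps)}$, simulates an $L$-pass threshold-greedy with pruning and has amortized cost $O(L)$ per insertion; some branch $a^*$ is guaranteed to return a set $S$ with $f(S)\ge (1-O(\eps))f_S(O)$ for every feasible $O$. This subroutine is then plugged into $m=O(1/\eps)$ rounds of accelerated continuous greedy (each round branching over $\mathcal A$ and over $|D|=\tilde O(1/\eps)$ guesses of the round's progress), giving $k^{\tilde O(1/\eps^2)}$ branches in total; the $\log n$ comes from Monte-Carlo estimation of the multilinear extension. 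The amortization is therefore obtained by bounding per-element work across branches, not by rebuilding on a small filtered universe.
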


Our algorithms are inspired by the classic greedy solution and we turn it into a dynamic algorithm via techniques involving lazy update, multilinear extension and accelerated continuous greedy, together with the introduction of a $O(1/\eps)$-pass prune-greedy algorithm that could be of independent interest. We elaborate about the intuition behind our algorithms in Section~\ref{sec:techoverview}.

\begin{remark}
	The best offline algorithm for maximizing monotone submodular function with cardinality constraints requires $O(n\log(1/\eps))$ queries \cite{buchbinder2017comparing}. Our dynamic algorithm incurs only poly-logarithmic overheads. While for matroid constraints, even the state-of-art offline algorithm requires $O(n\sqrt{k})$ value queries \cite{buchbinder2017comparing}.
	Hence, for matroid constraints, our focus is to design dynamic algorithms with $\poly(k)$ amortized query complexity.
\end{remark}

We review related work and then give a technical overview of
our results in Section \ref{sec:techoverview}.


\subsection{Related work}


Dynamic submodular maximization has only been studied recently, and the work of \cite{monemizadeh2020dynamic, lattanzi2020fully} are most relavant to us. The concurrent work of \cite{monemizadeh2020dynamic} and \cite{lattanzi2020fully} give $(1/2-\eps)$-approximation algorithm to the dynamic submodular maximization problem under cardinality constraints, with amortized query complexity $O(k^2 \log^2 n \cdot \eps^{-3})$ and $O(\log^8 n\cdot \eps^{-6})$ respectively.

Our work is also closely related to the streaming setting. For cardinality constraints, \cite{badanidiyuru2014streaming} give an $(1/2 -\eps)$-approximation algorithm in the streaming model, The approximiation is tight, Feldman et al. \cite{feldman2020one} prove $\Omega(\eps n/k^3)$ space is necessary for achieving $(1/2+\eps)$ approximation. For matroid contraints, \cite{chekuri2015streaming} give an $1/4$-approximation algorithm and the approximation ratio is improved to 0.3178 by \cite{feldman2021streaming}.
We remark the algorithm of \cite{badanidiyuru2014streaming} and \cite{chekuri2015streaming} can be implemented in the insertion-only dynamic setting with the same approximation guarantee, and the amortized runing query complexity are $O(\eps^{-1}\log (k/\eps))$ and $O(k)$ respectively. 

On lower bound side, our work make use of the symmetric gap techniques of \cite{mirrokni2008tight, vondrak2013symmetry} and the weight scheduling of \cite{feldman2020one}.

\subsection{Technical overview}\label{sec:techoverview}

\def\calF{\mathcal{F}}
We provide a streamlined technique overview of our approach.\vspace{0.2cm}

{\noindent\bf A linear lower bound for $0.584$-approximation. } 
Our lower bound is based on the construction of 
a family of monotone submodular functions 
with the following properties.  
They share the same ground set $V$ which is 
partitioned into $2m$ 
sets $V = A_1 \cup\cdots \cup A_{m}\cup B_1 \cup\cdots \cup B_m$
and the algorithm knows both $V$ and the partition. 
Each $A_i$ or $B_j$ contains $O(k)$ elements to be specified later. 
What is hidden inside the function is a secret bijection $\pi:[m]\rightarrow [m]$,
which is unknown to the algorithm, and we write the function as 
$\calF_\pi:2^V\rightarrow [0,1]$.
The three main properties we need about $\calF$ are:
\begin{flushleft}\begin{enumerate}
		\item[i)]  For any $j\in [m]$, there is an $S\subseteq A_{\pi(j)}\cup B_j$ of size $k$
		that achieves the optimal $\calF_\pi(S)=1$;
		\item[ii)] \textbf{Large gap:} For any $j\in [m]$, every $S\subseteq A_1\cup\cdots\cup A_m\cup B_j$ of size at most $k$ has $\calF_\pi(S)\le \kappa$
		for some (as small as possible) parameter $\kappa>0$, unless $S\cap A_{\pi(j)}\ne \emptyset;$ and 
		\item[iii)] \textbf{Indistinguishability:} Every query made by the algorithm reveals very little information about 
		$\pi$. (Looking ahead, we show that each query on $\calF_\pi$ is roughly 
		equivalent to no more than $O(k)$ queries on $\pi$, each of the limited form as ``whether 
		$\pi(a)$ is equal $b$ or not.'')
	\end{enumerate}\end{flushleft}
	With these properties in hand, 
	we use the following simple dynamic stream of length $\Theta(mk)$ in our lower bound proof:
	$A_1,\ldots,A_m$ are inserted first and then
	we insert $B_1$, delete $B_1$, insert $B_2$, delete $B_2,\ldots$, until $B_m$ is 
	inserted and deleted. 
	The large-gap property (ii)  
	ensures that any algorithm with an approximation guarantee of $\kappa$
	must recover each entry $\pi(j)$ of the hidden $\pi$ after each 
	set $B_j$ is inserted.\footnote{Note that we are a bit sloppy here given that
		the dynamic algorithm only needs to output a set $S_j$ after $B_j$ is inserted
		to overlap with $A_{\pi(j)}$. But given that $|S_j|\le k$, this can be considered
		almost as good as knowing $\pi(j)$.}
	The indistinguishability property (iii), on the other hand, shows that~$\Omega(m^2/k)$
	queries are needed to recover the hidden bijection $\pi$ (by a standard calculation) 
	and thus, the amortized query complexity is at least 
	$\Omega(m/k^2)=\Omega(n/k^3)$ by plugging in the stream length $n=\Theta(mk)$.
	The main challenge is to find a construction that satisfies (i) and (iii) and 
	at the same time satisfies (ii) with a parameter $\kappa$ 
	that is as small as possible (optimized to be
	$0.584$ at the end).

	
	We focus on minimizing $\kappa$ in the large-gap property (ii)  in this part, given that achieving the indistinguishability (iii)
	will become more challenging  in the polynomial lower bound for  
	$(0.5+\eps)$-approximation later and will be what we focus on there.
	Our first attempt is the following simple coverage function $\calF$ (though our final
	construction will not be a coverage function).
	Let $U$ be a large universe set (for intuition consider $U$ to be arbitrarily 
	large so that every expectation about random sets we draw holds exactly).
	For each $j\in [m]$ we randomly partition $U$ into two parts $U_{j,1}$ and $U_{j,2}$ of the same size:
	the union of elements of $A_{\pi(j)}$ (which correspond to subsets of $U$) will be $U_{j,1}$ and the union 
	of elements of $B_j$ will be $U_{j,2}$.
	Each $B_j$ is further partitioned into $w$ groups $B_{j,1},\ldots,B_{j,w}$, where $w$ is a large enough constant,
	such that each $B_{j,\ell}$ has size $k/2$ and corresponds to an even $(k/2)$-way random partition of $U_{i,2}$.
	The same happens with $A_{\pi(j)}$.
	Now imagine that the algorithm does not know $\pi(j)$ and returns 
	$S\subset A_i\cup B_j$ with $i\ne \pi(j)$ by picking $k/2$ elements randomly from $A_i$ and $k/2$ elements 
	randomly from $B_j$.
	Simple calculation shows that elements in $S$ would together cover
	$0.5(1-1/e)+0.25(1-1/e^2)\approx 0.53$, which is very close to our goal of $0.5$, and splitting 
	elements not evenly between $A_i$ and $B_j$ can only cover less.
	
	However, the algorithm can deviate by (1) not picking elements randomly from $A_i$ and $B_j$~and furthermore,
	(2) picking from not just $A_i\cup B_j$ but also the much bigger set of $A_1\cup\cdots\cup A_m$.~For (1), if the algorithm manages to find out all $k/2$ elements
	of group $A_{i,\ell}$ and all $k/2$ elements of group $B_{j,\ell'}$ for some $\ell,\ell'$,
	then together they would cover $3/4$.
	For (2), the algorithm can pick one random element each from $k$ different $A_i$'s 
	to cover $1-1/e$.
	In our final construction,~we~overcame~(1)~by applying a symmetric function constructed from
	\cite{mirrokni2008tight} on top of the coverage function above. It guarantees that 
	no algorithm (with a small number of queries) can ever pick a set of elements from any of $A_i$ or $B_j$
	that is non-negligibly ``\emph{unbalanced},'' by having noticeably more elements from any group $A_{i,\ell}$
	or $B_{j,\ell}$ than the average among $A_i$ or $B_j$.
	Accordingly, the large-gap property (ii) needed can be updated to be the following weaker version:
	\begin{enumerate}
		\item[ii)] \textbf{Large gap, updated:} For any $j\in [m]$, every ``balanced'' $S\subseteq A_1\cup\cdots\cup A_m\cup B_j$ of size at most $k$ has $\calF_\pi(S)\le \kappa$
		for some parameter $\kappa>0$, unless $S\cap A_{\pi(j)}\ne \emptyset;$ 
	\end{enumerate}
	To address the challenge of (2), we use two parameters to introduce a different type of \emph{unbalancedness}:
	First, $U$ is no longer partitioned $50$-$50$; instead $U_{j,1}$ is $\beta$-fraction and 
	$U_{j,2}$ is $(1-\beta)$-fraction of $U$.
	Second, it is no longer the case that all groups $A_{i,\ell}$ and $B_{j,\ell}$ have
	size $k/2$; instead every $A_{i,\ell}$ has size $\alpha k$ and every $B_{j,\ell}$ has size $(1-\alpha)k$.
	With a detailed analysis we show that $\kappa$ can be set to $0.584$ in the updated
	version of the large-gap property (ii), when $\alpha=0.56$ and $\beta=0.42$.\vspace{0.3cm}

	{\noindent\bf A polynomial lower bound for $(1/2+\eps)$-approximation.}
	Hard instances for our linear lower bound sketched earlier reduce at the end to a simple problem
	of recovering a hidden matching. 
	To obtain a polynomial lower bound for $(1/2+\eps)$-approximation, we need to 
	further extend the idea to amplify the (hardness of) approximation ratio by creating a depth-$L$ tree structure
	with $L = 1/\eps$ to hide multiple bijections at every level.
	Let $m_1, \ldots, m_{L}$ be a sequence of positive integers: $m_L=1$ but the other integers will be specified at the end.
	The ground set $V$ is then defined using a depth-$L$ tree $T$, in which each internal node at depth $\ell$ ($\ell \in [0: L-1]$) has $m_{\ell+1}$ children. 
	Let $U_{\ell}$ denote the set of nodes $u = (u_1, \ldots, u_{\ell})\in [m_1]\times \cdots \times [m_\ell]$ at depth $\ell$;
	its children are given by~$(u,1),\ldots,$ $(u,m_{\ell+1})$. The ground set is defined as $V= \cup_{u\in U_1 \cup \cdots \cup U_{L}}A_u$, where 
	each $A_u$ contains $\eps k$ elements.
	
	In the dynamic update stream, we perform a limited DFS walk on the tree,
	meaning that after reaching a node $u$, one only explores the first $d:=n^\eps$ children $(u,1),\ldots,(u,d)$ of 
	$u$ (except when $u$ is at depth $L-1$ in which case we just explore the only child of $u$).
	Let $U_L^*:= [d]^{L-1}\times \{1\}$ be the set of leaves visited by the DFS walk.
	We design the stream in a way such that whenever a leaf $u$ is reached,
	the current set of elements is  given by
	$$
	W_u:=\bigcup_{\ell\in [L ]} \bigcup_{i\in [m_{\ell }]} A_{u_1,\ldots,u_{\ell-1},i},
	$$ 
	i.e., the union of sets of nodes that are children of any node along the path from the root to $u$.
	Let $\calA_u=A_{u_1}\cup \cdots\cup A_{u_1,\ldots,u_{L-1}}$.
	The first two properties of our $\calF:2^V\rightarrow [0,1]$ can now be stated:
	\begin{enumerate}
		\item[i)] Let $S=\calA_u\cup A_u$. We have $|S|=k$ and $\calF(S)$ achieves the optimal  $\calF (S)=1$; 
		\item[ii)] \textbf{Large gap:} For any $S\subseteq W_u$ of size at most $k$, 
		$\calF (S)< 1/2+\eps$
		unless $S\cap \calA_{u}\ne \emptyset.$  
	\end{enumerate}
	At a high level, the construction of our monotone submodular function $\calF$   
	can be viewed as a~novel tree extension of a path-alike construction from 
	\cite{feldman2020one} (for one-way communication complexity of submodular maximization).
	In particular, we adopted a weight sequence $\{w_\ell\}$ from \cite{feldman2020one} 
	to play the role of parameter $\beta$ in our linear lower bound~discussed earlier,
	in order to achieve the large-gap property (ii) above no matter how the $S$ is spread 
	among different levels of the tree.

	
	Now $\calF$ is only the base function that we use to construct the family of hard functions.~To~this end, we introduce the notion of a \emph{shuffling} $\pi$ of tree $T$,
	which consists of one bijection $\pi_u$ at every internal node 
	$u$ of $T$ to shuffle its children.
	Let $\calF_\pi$ be the function obtained after applying a hidden shuffling $\pi$ on $\calF$.
	Under the same stream, at the time when  a leaf $u\in U_L^*$ arrives, the current set of elements remains to be $W_u$ but
	the $\calA_u$ we care about becomes
	$$
	\calA_u^\pi:=A_{\pi_\gamma(u_1)} \cup A_{u_1,\pi_{u_1}(u_2)}\cup \cdots \cup A_{u_1,\ldots,u_{L-2},\pi_{u_1,\ldots,u_{L-2}}}
	(u_{L-1}),
	$$ 
	where $\pi_\gamma$ denotes the hidden bijection at the root, $\pi_{u_1}$ denotes the hidden bijection
	at the depth-$1$ node $u_1$, so on and so forth.
	The two properties (i) and (ii) hold after replacing $\calA_u$ with $\calA_u^\pi$.
	So finding $S\subseteq W_u$ of size at most $k$ with $\calF(S)\ge 1/2+\eps$ 
	requires the algorithm to (essentially) identify one of the edges along the path
	after shuffling, i.e., one of the entries $\pi_\eps(u_1),\pi_{u_1}(u_2),\ldots,\pi_{u_1,\ldots,u_{L-2}}$ $(u_{L-1})$
	in the hidden bijections.
	The challenge left is to establish the indistinguishability property:
	\begin{flushleft}\begin{enumerate}
			\item[iii)] \textbf{Indistinguishability:} Every query made by the algorithm reveals very little information about 
			$\pi$. (More formally, we show that each query on $\calF_\pi$ is roughly 
			equivalent to no more than $O(k^2)$ queries on $\pi$, each of the limited form as ``whether 
			$\pi_u(a)$ for some $u$ is equal $b$.''
		\end{enumerate}\end{flushleft}
		With (iii) if an algorithm tries to win 
		by finding out $\pi_{u_1,\ldots,u_{L-2}}(u_{L-1})$ every time a leaf $u$ is~reached then each hidden entry
		requires $\Omega(m_{L-1}/k^2)$ queries and thus,
		$\Omega(m_{L-1}d^{L-1}/k^2)$ queries in total.
		In general if an algorithm keeps working on $\pi_{u_1,\ldots,u_{\ell}}(u_{\ell+1})$,
		the total number of queries needed is   $\Omega(m_{\ell+1}d^{\ell+1}/k^2)$. 
		Setting $m_\ell=n^{(L-\ell+1)\eps}/(2k)$ for each $\ell$ leads to a lower bound
		of $\Omega(n^{1+\eps}/k^3)$ for the total query complexity in all cases, while maintaining 
		the stream length to be $n$.
		(It is also not difficult to show that mixing effort among different levels does not help.)
		
		Finally let's discuss how the indistinguishability property (iii) is implemented.
		The first trick is to cap  the function and have the final function take the form as
		$$\calF_\pi^*(S)=\min\big\{ \calF_\pi (S)+\eps |S|/k, 1\big\}.$$
		This simple modification changes the value of solutions $S$ we care about slightly, no more than $\eps$,
		given that $|S|\le k$, and at the same time guarantees no algorithm makes any
		query of size more than $k/\eps$; otherwise the value returned is trivially $1$ and reveals no information.
		This implies that each query $S$ made by an algorithm can only involve no more than $k/\eps$ 
		nodes in the tree; let $U_S$ denote the set of such nodes.
		A final key observation is that the function $\calF_\pi$ on $S$ is uniquely determined 
		by the structure of the tree formed by paths of nodes in $U_S$ to the root after applying the shuffling $\pi$.
		The latter is uniquely determined by the depth of the lowest common ancestor of 
		every pair of nodes in $U_S$ after shuffling $\pi$, which in turn can be obtained by making at
		most one query to $\pi$ of the form described in (iii) for each pair in $U_S$.
		The bound $O(k^2)$ follows given that $|U_S|\le k/\eps$.

		\vspace{0.3cm}
		
		{\noindent\bf Insertion-only streams: Submodular maximization with a cardinality constraint.}
		Our starting point is the classic (offline) greedy approach, where in each round an element with the~maximum marginal contribution is added to the solution set.
		We adapt the greedy algorithm to the dynamic setting.
		For simplicity, let's assume in the overview 
		that a value $\OPT$ is given to the algorithm at the beginning and its goal is to
		maintain a set that achieves a $(1-1/e-\eps)$-approximation
		when the actual optimal value reaches $\OPT$. 
		While it is impossible to find the element with the maximum margin before having all of them in hands, we can instead choose one whose margin gives a moderate  improvement, i.e., add any element that satisfies $f_{S}(e) \geq ({\OPT - f(S)})/{k}$, where $S$ is the current solution set.
		This suffices to achieve the optimal $(1-1/e)$-approximation.
		A naive implementation, however, requires $O(k)$ amortized query complexity since the threshold $ ({\OPT - f(S)})/{k}$ is updated every time a new element is added to solution set, and the algorithm needs to scan over all elements in the worst case.
		We circumvent this with the idea of {\em lazy update}. 
		We divide the mariginal contribution into $O(1/\eps)$ buckets, with the $i$-th buckets containing marignal smaller than $i\cdot ({\eps \OPT}/{k})$.
		We don't update the margin value every time the solution set is augmented. 
		Instead, each time a new element is inserted, the algorithm only checks the bucket whose marginal is larger than the current threshold.
		One can show either a new element is added to $S$, or it is pushed down to the next level. The latter can happen at most $O(1/\eps)$ times, and after that, the element has negligible marginal contribution that can be ignored safely.\vspace{0.3cm}

		{\noindent\bf Insertion-only streams: Submodular maximization with a matroid constraint.}
		We first provide a deterministic combinatorial algorithm that achieves a $(1/2-\eps)$-approximation.
		We build upon the previous idea, but with significant adaptations.
		Again, our goal is to simulate an offline greedy algorithm but this time, one cannot relax the condition and hope to augment the solution set whenever the margin has a moderate improvement.
		The analysis of the offline greedy algorithm relies crucially on picking the maximum margin element in each step.
		Our first idea is to find the element with approximate maximum margin and branch over all possibilities.
		In particular, we divide the marginal gain into $L = O(\eps^{-1}\log (k/\eps))$ many levels, where the $\ell$-th level corresponds to $[ (1+\eps)^{-\ell+1}\OPT,  (1+\eps)^{-\ell}\OPT]$.
		When working on an insertion-only stream, we proceed to the $(\ell+1)$-th level only when there is no element in the $\ell$-th level anymore.
		Of course, we still don't know when the $\ell$-th level becomes empty so that we can move to $\ell+1$, but we can enumerate over all possibilities and guarantee that there exists one branch that fits the sequence of the offline approximate greedy algorithm.
		The caveat is that the total number of branch is $\binom{k}{L}$, which is quasi-polynomial in $k$.
		We further reduce this number by considering a pruned version of the offline approximate greedy algorithm, where the algorithm prunes extra element in each level and only keeps the majority.
		We prove the algorithm still guarantees a $(1/2-\eps)$-approximation and the total number of branch reduces to  $\smash{k^{\tilde{O}(1/\eps)}}$.
		It requires careful analysis to make the idea work, but the very high level intuition is that one needs to be more careful in the first few buckets (since the marginal is large) and less careful at the end. 
		To amplify the approximation ratio to $(1-1/e-\eps)$, we make use of the multi-linear extension and use the accelerated continuous greedy framework \cite{badanidiyuru2014fast}. In each iteration, we use the above combinatorial algorithm to find the direction of improvement of the multi-linear extension. 
		We remark a similar amplifying procedure has been used in the previous work on adaptive submodular maximization \cite{chekuri2019parallelizing, balkanski2019optimal}.\vspace{0.3cm}
		
		{\noindent\textbf{Organization.}} We begin by presenting the linear lower bound for 0.584-approximation under fully dynamic stream in Section~\ref{sec:lower1}. Section~\ref{sec:lower2} is devoted to prove the polynomial lower bound on $(1/2+\eps)$-approximation.
		We then turn to the insertion-only stream and provide our algorithms and analysis.
		The algorithm for a cardinality constraint is presented in Section~\ref{sec:insert-cardinality}, and we provide an efficient $(1-1/e)$-approximation algorithm for a matroid constraint in Section~\ref{sec:insert-matroid}.
		We discuss future research directions in Section~\ref{sec:discussion}.

\section{Preliminary}
\label{sec:pre}

\noindent\textbf{Submodular functions.}
Let $V$ be a finite ground set. 
A function $f:2^V\rightarrow \mathbb{R}$ is \emph{submodular} if 
$$
f(S\cup T)+f(S\cap T)\le f(S)+f(T)
$$
for all pairs of sets $S,T\subseteq V$.
We say $f$ is \emph{nonnegative} if $f(S)\ge 0$ for all $S\subseteq V$ and 
$f$ is \emph{monotone} if $f(S)\le f(T)$ whenever $S\subseteq T\subseteq V$.
We study submodular functions that are both nonnegative and monotone,
and assume without loss of generality that $f(\emptyset)=0$ (see Remark \ref{remark2}).
We say an algorithm has query access to $f$ if it can adaptively 
pick subsets $S\subseteq V$ to reveal the value of $f(S)$.


Given $f:2^V\rightarrow \mathbb{R}$ and $S,T\subseteq V$, the
marginal gain of adding $e$ to $S$ is defined as
$$
f_{S}(T):= f(S\cup T)-f(S).
$$
When $T=\{e\}$ is a singleton, we write $f_{S}(e)$ to denote
$f_{S}(\{e\})$ for convenience. 
Monotonicity and submodularity can be defined equivalently using marginal gains:
$f$ is monotone iff $f_{S}(e)\ge 0$ for all $e\in V$ and $S\subseteq V$;
$f$ is submodular iff $f_{S}(e)\ge f_{T}(e)$ for any $e\notin T$ and $S\subseteq T\subseteq V$.\medskip



\noindent\textbf{Dynamic submodular maximization under a cardinality constraint. \ \ }
Given $V$, a positive integer $k$ and query access 
to a nonnegative, monotone submodular function $f:V\rightarrow \mathbb{R}$,
the goal is to find a $\gamma$-approximate solution $S_i$ of size at most $k$
at the end of each~round $i$ when making a pass on a dynamic stream of insertions and deletions
(the stream is \emph{insertion-only} if no deletions are allowed).
More formally, starting with $V_0=\emptyset$, an element $e_i\in V$ is either inserted or deleted
at the beginning of round $i=1,\ldots,$ so that the current ground set
$V_{i}$ is set to be either $V_{i-1}\cup \{e_i\}$ 
if $e_i$ is inserted
or $V_i=V_{i-1}\setminus \{e_i\}$ if $e_i$ is deleted.
After this, the algorithm makes queries to $f$ 
to find a \emph{$\gamma$-approximate solution $S_i$ of $f$ with respect to $V_i$}.
This means that $|S_i|\le k$ and 
$$
f(S_i)\ge \gamma \OPT_i,\quad\text{where}\quad\OPT_i:= \max_{T\subseteq V_i,|T|\le k} f(T)
$$
We emphasize that an algorithm remembers every query it has made so far. Thus results
of queries
made in previous rounds may help finding $S_i$ in the current round. 

We will consider algorithms that are both deterministic and randomized.
We say a deterministic dynamic algorithm achieves an approximation guarantee of $\gamma$
if given  $n$ and any stream
of length $n$, it returns a $\gamma$-approximate solution $S_i$ of the $i$-th round for every $i\in [n]$.
We say a randomized dynamic algorithm achieves an approximation guarantee of $\gamma$ if
given $n$ and any stream of length $n$, with probability at least $2/3$ it returns
a $\gamma$-approximate solution $S_i$ for every round $i\in [n]$ at the same time.
We say an (deterministic or randomized) algorithm has \emph{amortized} query complexity $Q $
if the total number of queries it makes is no more than $n\cdot Q $.

\begin{remark}\label{detailremark}
	We discuss some details of the model behind our upper and lower bounds:
	\begin{flushleft}\begin{enumerate}
			\item 
			Our lower bounds hold even if the algorithm is given $V$ and  it is allowed to query during the $i$-th round
			any set $S$ of elements
			that have appeared in $V_j$ for some $j\le i$; our algorithms only query sets in $V_i$ and
			does not need to know $V$ initially.
			
			
			\item Our lower bounds hold even if the algorithm is given $n$, the stream length;
			our algorithms do not need to know $n$ and meet the stated amortized bounds at the end of every round.
		\end{enumerate} \end{flushleft}
	\end{remark}
	
	
	
	\noindent\textbf{Dynamic submodular maximization under a matroid constraint \ \ }
	A set system $\mathcal{M} \subseteq 2^{V}$ is a \emph{matroid} if it satisfies (i) $\emptyset\in \mathcal{M}$, (ii) the {\em downward closed} and (iii) {\em augmentation} {properties}. A set system~$\mathcal{M}$ is downward closed if $T \in \mathcal{M}$ implies $S\in \mathcal{M}$ for all $S\subseteq T$. The augmentation property is that~if $S, T \in\mathcal{M}$ and $|S| < |T|$, then there must be an element $e\in T\backslash S$ such that $S \cup \{a\} \in \mathcal{M}$. When $S\in \mathcal{M}$, we say $S$ is {\em feasible} or {\em independent}. The rank of the matroid $\mathcal{M}$, denoted as $\rank(\mathcal{M})$, is the maximum size of an independent set in $\mathcal{M}$.
	
	The setting of dynamic submodular maximization under a matroid constraint is similar.
	In addition to $V$ and query access to $f$,
	the algorithm is given query access to a matroid $\calM$ over $V$:
	it can pick any  $S\subseteq V$ to query if $S\in \calM$ or not. 
	To goal is to find a  $\gamma$-approximate solution $S_i $ of $f$ with respect to $V_i$
	at the end of every round. This mdeans that 
	$S_i\subseteq V_i$, $S_i\in \calM$ and
	\[
	f(S_i)\ge \gamma \cdot\OPT_i,\quad\text{where}\quad \OPT_i:= \max_{T\in \calM,T\subseteq V_i} f(T).
	\]
	When measuring the amortized query complexity,
	we count queries to both $f$ and $\calM$.
	
	\begin{remark}\label{remark2}
		In both problems, one can assume without loss of generality
		that $f(\emptyset)=0$ (since any $\gamma$-approximate solution to $g$ with respect to $V_i$, where  $g(S):=f(S)-f(\emptyset)$, must be a $\gamma$-approximate solution to $f$
		with respect to $V_i$ as well).
		We will make this assumption in the rest of the paper.
	\end{remark}


	%
	
	\noindent\textbf{The multilinear extention} The multilinear extension $F: [0,1]^{|V|}$ $\rightarrow \R^{+}$ of a function $f$ maps a point $\bx \in [0,1]^{|V|}$ to the expected value of a random set $S\sim \bx$, i.e.
	\[
	F(\bx) = \sum_{S\subseteq V} \prod_{e\in S}x_{e} \prod_{e'\in V\backslash S}(1 - x_{e'}) f(S)
	\]
	We write $F(\bx) = \E_{S\sim \bx}[f(\bx)]$ for simplicity.  For any $\bx \in [0,1]^{|V|}$, $\lambda \in [0,1]$, $S\subseteq V$, we write $F(\bx + \lambda S)$ to denote $F(\bx')$, where $x'_i = x_i$ if $i \notin S$ and $x'_i = \min\{x_i + \lambda, 1\}$ if $i \in S$.
	
	For a continuous function $F:[0,1]^{|V|} \rightarrow \R$, we say it is monotone if $\frac{\partial F}{\partial x_i} \geq 0$ and it is submodular if $\frac{\partial^2 F}{\partial x_i \partial x_j} \leq 0$ for every $i, j$.
	When $f$ is monotone and submodular, the multilinear extension $F$ is also monotone and submodular.


\section{A linear lower bound for $0.584$-approximation}
\label{sec:lower1}

We restate the main theorem of this section:

\thmlowerone*


\subsection{Construction of the symmetric function} 
\label{sec:base}

Let $w$ be a positive integer and let $\eps>0$ be a small constant to be fixed later. 
Given $x\in [0, 1]^{w}$, we let $\bar{x} =  \sum_{i=1}^{w}x_{i}/w$. Consider the following
  function $f$ over $[0,1]^w$ and its symmetric version $g$: 
\begin{align*}
f(x): = 1 - \prod_{i\in [w]}(1 - x_i)\quad\text{and}\quad
g(x): = 1 - (1 - \bar{x})^w.
\end{align*}
The following theorem is from \cite{mirrokni2008tight}. We need to make some minor changes on the choice of parameters and include a proof in Appendix~\ref{sec:lower1-app} for completeness.
\begin{theorem}[\cite{mirrokni2008tight}]
	\label{thm:base}
Given any positive integer $w$ and $\eps > 0$, let $\gamma=\gamma(w,\eps): = w^{-1}\exp(-4w^6/\eps)$. There is a
	monotone submodular function
$\hat{f}: [0,1]^{w}\rightarrow [0,1]$ with the 
	following two properties:
	\begin{enumerate}
		\item Whenever $\max_{i, j\in [w]}|x_i - x_j| \leq \gamma$, we have $\hat{f}(x) = g(x)$;\vspace{-0.1cm}
		\item For any $x \in [0,1]^{w}$, we have $f(x)-\eps\le \hat{f}(x) \le f(x)$.
	\end{enumerate}
\end{theorem}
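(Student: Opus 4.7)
The plan is to construct $\hat{f}$ by smoothly blending $g$ (which is symmetric and will serve as the value on the balanced region) with $f$ (which we want to approximately recover everywhere else), in a manner that preserves monotonicity and submodularity. The starting observation is that $g$ is the permutation-symmetrization of $f$ under the coordinate-permutation group: $g$ and $f$ coincide on the main diagonal $\{x : x_1 = \cdots = x_w\}$, both are monotone submodular on $[0,1]^w$, and both are Lipschitz with constants depending only on $w$. So the task reduces to interpolating between two nearby monotone submodular functions in a $\gamma$-thin neighborhood of the diagonal.

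To build $\hat{f}$, I would introduce a smooth imbalance measure such as $\tau(x) := \sum_{i<j}(x_i - x_j)^2$ and a non-decreasing smooth cutoff $h:[0,\infty)\to[0,1]$ with $h \equiv 0$ on $[0,\gamma^2]$ and $h \equiv 1$ on $[4\gamma^2,\infty)$, then set
\[
\hat{f}(x) := g(x) + h(\tau(x)) \cdot \bigl(f(x) - g(x)\bigr).
\]
By construction $\hat{f}(x) = g(x)$ whenever $\max_{i,j}|x_i - x_j| \leq \gamma$, giving property (1), while $\hat{f} = f$ on the ``far'' region $\tau(x) \geq 4\gamma^2$. To verify property (2), I would bound $|f(x) - g(x)| \leq O(\poly(w)\cdot\tau(x)^{1/2})$ via a Taylor expansion of $f - g$ around the closest symmetric point (using that $f \equiv g$ on the diagonal and that both have $w$-polynomially bounded higher derivatives). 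Inside the transition zone $\gamma^2 \leq \tau(x) \leq 4\gamma^2$, the correction $h(\tau)\cdot(f-g)$ is then at most $O(\poly(w)\cdot\gamma)$, which is $\leq \eps$ for the prescribed $\gamma = w^{-1}\exp(-4w^6/\eps)$.

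The main obstacle is preserving submodularity across the transition zone. The mixed partial $\partial^2 \hat{f}/\partial x_i \partial x_j$ picks up two potentially bad terms, namely $h''(\tau)\cdot\partial_i\tau\cdot\partial_j\tau\cdot(f-g)$ and $h'(\tau)\cdot\partial_{ij}\tau\cdot(f-g)$, plus cross terms coupling $h'$ with the gradients of $f-g$. The strategy is to show that all of these error terms are of order $\poly(w)\cdot\gamma$ in the transition zone, while the strictly negative second-order contribution coming from the submodularity of $g$ itself is of order $\Omega(1/w)$ on the near-diagonal region by a direct calculation for $g(x) = 1 - (1-\bar{x})^w$. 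Calibrating these two estimates forces the exponentially small choice $\gamma = w^{-1}\exp(-4w^6/\eps)$ stated in the theorem. Monotonicity is easier since $h,h' \geq 0$ and both $f$ and $g$ are monotone; the only point requiring care is that $h'(\tau)\cdot\partial_i\tau\cdot(f-g)$ does not flip the sign of $\partial_i \hat{f}$, which again follows from the transition-zone bound $|f-g| = O(\poly(w)\cdot\gamma)$. The only place where I would deviate from the argument of \cite{mirrokni2008tight} is in the constants of the parameter tuning to match the claimed $\gamma(w,\eps)$, which is a direct recalculation.
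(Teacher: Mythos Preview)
Your blending scheme has a genuine gap in the submodularity step. A smooth cutoff $h$ that goes from $0$ to $1$ on an interval of width $O(\gamma^2)$ necessarily has $|h'|=O(\gamma^{-2})$ and $|h''|=O(\gamma^{-4})$. In the transition zone $\tau\in[\gamma^2,4\gamma^2]$ one has $|\partial_i\tau|=O(w\gamma)$ and, by the first bound in Lemma~\ref{lem:base-prop}, $|f-g|=O(w\gamma(1-\bar x)^{w-1})$. Hence the term you single out,
\[
h''(\tau)\,\partial_i\tau\,\partial_j\tau\,(f-g)\;=\;O\!\bigl(\gamma^{-4}\bigr)\cdot O\!\bigl(w^2\gamma^2\bigr)\cdot O\!\bigl(w\gamma(1-\bar x)^{w-1}\bigr)\;=\;O\!\left(\frac{w^3(1-\bar x)^{w-1}}{\gamma}\right),
\]
is of order $\poly(w)/\gamma$, not $\poly(w)\cdot\gamma$ as you claim. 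Since $\partial_{ij}g=-\tfrac{w-1}{w}(1-\bar x)^{w-2}$, the negative contribution you hope to use is only $O((1-\bar x)^{w-2})$, so the ratio is $\Theta(w^3(1-\bar x)/\gamma)$, which is unbounded for small $\bar x$. Widening the transition zone does not help either: if the zone has width $M$ one gets error $\poly(w)/\sqrt{M}$, while property~(2) forces $M\le \eps^2/\poly(w)$. (Your own calibration remark is a symptom of this: if the errors really were $\poly(w)\gamma$ against $\Omega(1/w)$, you would only need $\gamma$ polynomially small, not $\exp(-4w^6/\eps)$.)

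The paper's construction avoids this by a different mechanism. It does \emph{not} use an external imbalance measure; it blends through the difference $h=f-g$ itself, setting $\tilde f=f-\phi(h)$ with $\phi''(t)=-\alpha/t$ and $\alpha=\tfrac{1}{2}w^{-6}\eps$. The dangerous term becomes
\[
-\phi''(h)\,\partial_i h\,\partial_j h \;=\;\frac{\alpha}{h}\,\partial_i h\,\partial_j h,
\]
and the crucial estimate from \cite{mirrokni2008tight} (Lemma~\ref{lem:base-prop}) is $|\partial_j h|\le w^3(1-\bar x)^{w/2-1}\sqrt{h}$, so that $|\partial_i h\,\partial_j h|/h\le w^6(1-\bar x)^{w-2}$ and the whole term is at most $\tfrac{\eps}{2}(1-\bar x)^{w-2}$, \emph{independent of how small $h$ is}. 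One then adds $\eps g$ and renormalizes to make $\hat f$ genuinely submodular. The price is that $\phi'$ decays only logarithmically, so $\phi'$ reaches $0$ at $t=\eps_1 e^{1/\alpha}$; this is exactly what forces the exponentially small $\gamma=w^{-1}\exp(-4w^6/\eps)$. Your bump-function approach does not exploit the relation $|\nabla h|^2\lesssim h$ and cannot be repaired by tuning constants.
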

Let $m$ be a positive integer.
We use $\hat{f}$ from Theorem \ref{thm:base} to construct a function
  $\hat{F}:[0,1]^{mw}\rightarrow \mathbb{R}$ as follows:
Writing $x\in [0,1]^{mw}$ as $x=(x_1,\ldots,x_s)$ and $x_i=(x_{i,1},\ldots,x_{i,w})\in [0,1]^w$, 
  we have 
\begin{align}
\label{eq:base1}
\hat{F}(x):= 1 - \prod_{i \in [m]}(1 - \hat{f}(x_i)).
\end{align}
We also define its symmetric version $G: [0,1]^{mw}\rightarrow \R^{+}$ as
\begin{align*}
G(x):=1 - \prod_{i \in [m]}(1 - g(x_i)) = 1 - \prod_{i \in [m]} (1 - \overline{x_i})^{w}.
\end{align*}

Here are some basic properties we need about $\hat{F}$, we defer detailed proof to Appendix~\ref{sec:lower1-app}.
\begin{lemma}
	\label{lem:base2}
	%
$\hat{F}$ satisfies the following properties:
	\begin{itemize}
		\item $\hat{F}$ is monotone, submodular and satisfies $\hat{F}(x) \in [0,1]$ for all 
		  $x\in [0,1]^{mw}$.
		\item 
		   $\hat{F}(x) = G(x)$ when $x\in [0,1]^{mw}$ satisfies 
		   $$\max_{i\in [m], j, j'\in [w]}|x_{i, j} - x_{i, j'}| \leq \gamma$$ and in this case, $\hat{F}(x)$ depends on 
		  $\overline{x_1}, \ldots, \overline{x_m}$ only.
		\item $\hat{F}(x) \geq 1- \eps$ when 
		  $x$ satisfies $x_{i, j} = 1$ for some $i\in [m]$ and $j\in [w]$, 
		  
	\end{itemize}
\end{lemma}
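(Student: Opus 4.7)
My plan is to derive all three bullets directly from the two properties of $\hat{f}$ granted by Theorem~\ref{thm:base}, using the explicit product form $\hat{F}(x)=1-\prod_{i\in[m]}(1-\hat f(x_i))$ and the fact that the blocks $x_i=(x_{i,1},\ldots,x_{i,w})$ are on disjoint coordinates. This structural disjointness is what lets me pass monotonicity and submodularity through the ``$\mathrm{OR}$-like'' aggregator $1-\prod(1-\cdot)$.

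For the first bullet, I would first observe that $1-\hat f(x_i)\in[0,1]$ (by property~2 of Theorem~\ref{thm:base} together with $f\in[0,1]$), so the product lies in $[0,1]$ and hence $\hat F(x)\in[0,1]$. For monotonicity I compute
\[
\frac{\partial \hat F}{\partial x_{i,j}}(x) \;=\; \frac{\partial \hat f(x_i)}{\partial x_{i,j}}\cdot\prod_{i'\ne i}\bigl(1-\hat f(x_{i'})\bigr),
\]
which is nonnegative since $\hat f$ is monotone and the remaining factors lie in $[0,1]$. For submodularity I check the mixed partial $\partial^2\hat F/\partial x_{i,j}\partial x_{i',j'}$ in two cases. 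When $i\ne i'$, the blocks are disjoint and the mixed partial equals $-\frac{\partial \hat f(x_i)}{\partial x_{i,j}}\cdot\frac{\partial \hat f(x_{i'})}{\partial x_{i',j'}}\cdot\prod_{i''\ne i,i'}(1-\hat f(x_{i''}))\le 0$. When $i=i'$, the mixed partial factors as $\frac{\partial^2 \hat f(x_i)}{\partial x_{i,j}\partial x_{i,j'}}\cdot\prod_{i''\ne i}(1-\hat f(x_{i''}))$; the first factor is $\le 0$ by submodularity of $\hat f$ and the second is $\ge 0$, so the whole thing is $\le 0$.

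For the second bullet, if $\max_{i,j,j'}|x_{i,j}-x_{i,j'}|\le\gamma$ then in particular each block $x_i$ satisfies the hypothesis of Theorem~\ref{thm:base}(1), so $\hat f(x_i)=g(x_i)=1-(1-\overline{x_i})^w$. Substituting into the definition of $\hat F$ yields $\hat F(x)=1-\prod_{i\in[m]}(1-\overline{x_i})^w=G(x)$, which depends only on the averages $\overline{x_1},\ldots,\overline{x_m}$.

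For the third bullet, suppose $x_{i,j}=1$ for some $i,j$. Then the OR-form $f(x_i)=1-\prod_\ell(1-x_{i,\ell})$ equals $1$ because the $j$-th factor vanishes, and Theorem~\ref{thm:base}(2) gives $\hat f(x_i)\ge f(x_i)-\eps=1-\eps$. Hence $1-\hat f(x_i)\le\eps$, and since each other factor lies in $[0,1]$,
\[
\prod_{i'\in[m]}\bigl(1-\hat f(x_{i'})\bigr)\;\le\;\eps,\qquad\text{so}\qquad \hat F(x)\;\ge\;1-\eps.
\]
None of these steps is a real obstacle; the only place I expect to be careful is justifying that $\hat f$ is continuously differentiable (so that the submodularity check via second partials is legitimate), but since Theorem~\ref{thm:base} produces $\hat f$ as a smooth function on $[0,1]^w$ one can either invoke the standard equivalence or, if needed, verify the discrete submodularity of $\hat F$ coordinatewise directly from that of $\hat f$ using the same product factorization.
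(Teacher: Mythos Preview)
Your proposal is correct and follows essentially the same approach as the paper: both verify $\hat F\in[0,1]$ from $\hat f\in[0,1]$, check monotonicity and submodularity via the first and second partials of the product form (splitting the mixed-partial case into $i\ne i'$ and $i=i'$ exactly as you do), deduce the second bullet by substituting $\hat f(x_i)=g(x_i)$ blockwise, and obtain the third bullet from $\hat f(x_i)\ge f(x_i)-\eps=1-\eps$ when some $x_{i,j}=1$.
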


\def\calF{\mathcal{F}}

\subsection{Construction of the hard functions $\calF_{c,\pi}$}
\label{sec:lower1-contruction}

We now present the construction of the family of hard functions 
   that will be used in the proof.\medskip
   
\noindent\textbf{Choice of parameters.} 
Let $n$ be the length of the dynamic stream.
Let $\eps>0$ be a constant.
Let $\alpha$ and $\beta$ be two constants in $(0,1)$ that we fix at the end of the proof.\footnote{Looking ahead, we will choose $\alpha$ and $\beta$ to minimize the quantity 
  $Q(\alpha,\beta)$ discussed in Lemma \ref{hardlemma}; we will set them to be $\alpha = 0.56$ and $\beta =0.42$,
  respectively.}
Let
$$
w = 10\left(\frac{1}{\alpha\eps} + \frac{1}{(1 -\alpha)\eps}\right)\quad\text{and}\quad
\gamma = w^{-1}\exp\left(-\frac{4w^6}{\eps}\right)
$$ 
be two constants.
Let $k$ be the cardinality constraint parameter that is at most $n^{1/3}$ (otherwise the lower bound 
  $\Omega(n/k^3)$ in the main theorem becomes trivial) and satisfies
$$
k\ge \frac{10}{\eps\gamma\alpha^2(1-\alpha)^2}\cdot \log n.
$$
Finally, let $m$ be such that $n=(2-\alpha)mkw$ so we have $m=\Omega(n^{2/3})$.\medskip

 
\noindent\textbf{The ground set $V$.}
We start with the definition of the ground set $V$, where
$$
V=A\cup B,\quad A= \bigcup_{i=1}^{m} A_i\quad\text{and}\quad B=\bigcup_{i=1}^{ m} B_i,
$$  
where each $A_i$ has $\alpha k w$ elements and each $B_j$ has 
  $(1-\alpha)kw$ elements and they are pairwise disjoint.\medskip
  
\noindent\textbf{The function $\calF_{c,\pi}$.}
Let $c:V\rightarrow [w]$.
We say $c$ is a \emph{proper} $w$-coloring of $V$ if
  each $A_{i,j}$, the set of elements in $A_i$ with color $j\in [w]$, has size $\alpha k$
  and each $B_{i,j}$, the set of elements in $B_i$ with color $j$, has size $(1-\alpha)k$.
(So $A_{i,1},\ldots,A_{i,w}$ form an even partition of $A_i$ and 
  $B_{i,1},\ldots,B_{i,w}$ form an even partition of $B_i$.)
Let $\pi:[m]\rightarrow [m]$ be a bijection which we will view as 
  matching $A_{\pi(i)}$ with $B_{i}$ for each $i\in [m]$.
Given any proper $w$-coloring $c$ of $V$ and any bijection $\pi:[m]\rightarrow [m]$, we define a function $\smash{\calF_{c,\pi}:2^V\rightarrow \mathbb{R}}$ as follows.
 
For any $S\subseteq V$, 
  let $\smash{y^S \in [0,1]^{mw}}$
  and $z^S\in [0,1]^{mw}$ be defined from $S$ as
$$
y^S_{i, j} = \frac{|S\cap A_{i, j}|}{\alpha k}\quad\text{and}\quad
z^{S}_{i,j} = \frac{|S\cap B_{i, j}|}{(1 -\alpha) k}.$$
For any any $I \subseteq [m]$, let $x^{S, I,\pi} \in [0,1]^{mw}$ be
$$
x^{S,I,\pi}_{i,j}=\begin{cases} y^S_{\pi(i),j} & \text{if $i\in I$} \\[0.8ex]
z^S_{i,j} & \text{if $i\notin I$}\end{cases}.
$$ 
Finally we define $\calF_{c,\pi}:2^V\rightarrow \mathbb{R}$: For any $S\subseteq V$,
\begin{align}
\label{eq:construction1}
\mathcal{F}_{c,\pi}(S) = \min\left\{ \sum_{I \subseteq [m]}\beta^{|I|}(1 - \beta )^{m - |I|} 
\cdot \hat{F}(x^{S, I,\pi}) + \frac{\eps}{k}|S|, \hspace{0.05cm}1 \right\}.
\end{align}

We state some basic properties of the function $\mathcal{F}_{c,\pi}$.
\begin{lemma}
	\label{lem:basic}
	For any $c$ and $\pi$, $\calF_{c,\pi}$ satisfies the following two properties:
	\begin{enumerate}
		\item $\mathcal{F}_{c,\pi}$ is monotone, submodular and $\mathcal{F}_{c,\pi}(S) \in [0,1]$ for any
		  $S\subseteq V$.
		\item When $|S|\geq k/\eps$, $\mathcal{F}_{c,\pi}(S) = 1$.
		\item For any $i \in [m]$ and $j\in [w]$, we have $|A_{\pi   (i), j} \cup B_{i, j}| = k$ and $\mathcal{F}_{c,\pi}(A_{\pi(i), j} \cup B_{i, j}) \geq 1 - \eps$. 
	\end{enumerate}
\end{lemma}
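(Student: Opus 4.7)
The proof of Lemma~\ref{lem:basic} decomposes naturally into the three claims, and I would handle each in turn.

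For Part~1, I would first show that for each fixed $I \subseteq [m]$, the set function $S \mapsto \hat{F}(x^{S,I,\pi})$ is monotone and submodular. The key observation is that every element $e\in V$ lies in exactly one $A_{i,j}$ or exactly one $B_{i,j}$, so adding $e$ to $S$ changes at most one coordinate of $x^{S,I,\pi}$: by $1/(\alpha k)$ if $e\in A_{i,j}$ and $\pi^{-1}(i)\in I$, by $1/((1-\alpha)k)$ if $e\in B_{i,j}$ and $i\notin I$, and by $0$ otherwise. Combining this single-coordinate interference with the fact from Lemma~\ref{lem:base2} that $\hat{F}$ is monotone and DR-submodular on $[0,1]^{mw}$, the set-function marginal $\hat{F}(x^{S\cup e,I,\pi})-\hat{F}(x^{S,I,\pi})$ (expressed as an integral of $\partial\hat{F}/\partial x_\ell$ over an interval of length $1/(\alpha k)$ or $1/((1-\alpha)k)$) is nonnegative and nonincreasing in $S$. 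Since $\{\beta^{|I|}(1-\beta)^{m-|I|}\}_I$ is a nonnegative probability distribution, the convex combination over $I$ remains monotone submodular, and so does its sum with the modular term $(\eps/k)|S|$. Finally, the minimum of a monotone submodular set function and the constant $1$ is again monotone submodular (a standard fact verified by a short case analysis on whether the unclipped value crosses $1$), which yields Part~1; the range $[0,1]$ is immediate since each $\hat{F}$ value lies in $[0,1]$, the coefficients sum to $1$, and the expression is then clipped to $1$.

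Part~2 is immediate: when $|S|\ge k/\eps$, the modular term $(\eps/k)|S|$ alone is already at least $1$, so the quantity inside the $\min$ in \eqref{eq:construction1} is $\ge 1$ and $\calF_{c,\pi}(S)=1$.

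For Part~3, the size claim $|A_{\pi(i),j}\cup B_{i,j}|=\alpha k+(1-\alpha)k=k$ follows from disjointness of $A$ and $B$ together with the fact that $c$ is a proper $w$-coloring. For the value, set $S=A_{\pi(i),j}\cup B_{i,j}$ and observe that $y^S_{\pi(i),j}=|A_{\pi(i),j}|/(\alpha k)=1$ and $z^S_{i,j}=|B_{i,j}|/((1-\alpha)k)=1$. Consequently, for every $I\subseteq[m]$, the vector $x^{S,I,\pi}$ has coordinate $(i,j)$ equal to $1$: it equals $y^S_{\pi(i),j}=1$ when $i\in I$ and $z^S_{i,j}=1$ when $i\notin I$. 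Invoking the third bullet of Lemma~\ref{lem:base2}, we get $\hat{F}(x^{S,I,\pi})\ge 1-\eps$ for every $I$, so the probability-weighted sum is at least $1-\eps$. Adding the nonnegative term $(\eps/k)|S|$ and clipping at $1$ preserves the lower bound, giving $\calF_{c,\pi}(S)\ge 1-\eps$.

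I do not anticipate a serious obstacle: the only step requiring a moment of care is the submodularity composition in Part~1, and even there the argument rests entirely on the single-coordinate nature of how each element acts on the maps $S\mapsto y^S$, $S\mapsto z^S$, and $S\mapsto x^{S,I,\pi}$, combined with the DR-submodularity supplied by Lemma~\ref{lem:base2}.
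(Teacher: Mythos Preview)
Your proposal is correct and follows essentially the same approach as the paper's proof: Part~2 and Part~3 match the paper line for line, and for Part~1 you spell out in more detail (via the single-coordinate action and DR-submodularity of $\hat{F}$) the step the paper dismisses as ``easy to see,'' before invoking the same closure properties under nonnegative combinations, addition of a modular term, and taking $\min$ with a constant.
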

\begin{proof}
We start with the first claim. By Lemma~\ref{lem:base2}, $\hat{F}$ is monotone and submodular. 
Hence, it~is easy to see that for each $I \subseteq [m]$, $\hat{F}(x^{S, I,\pi})$ is monotone and submodular in $S$. As addition and the min operation keep both submodularity and monotonity, $\mathcal{F}_{c,\pi}$ is both monotone and submodular.
	
	For the second claim, since $\smash{\hat{F}}$ is nonnegative, $\mathcal{F}_{c,\pi}(S) \geq \min\{\eps |S|/k, 1\} = 1$ when $|S| \geq k/\eps$.
	
	For the last claim, note that $|A_{\pi(i), j}| = \alpha k$, $|B_{i, j}| = (1 - \alpha)k$ and $A_{\pi(i), j} \cap B_{i,j} = \emptyset$. Therefore,  $|A_{\pi(i), j}\cup B_{i, j}| = k$. 
	Let $S = A_{\pi(i), j} \cup B_{i, j}$. 
	We have $\smash{x^{S,I,\pi}_{i,j}=1}$ for  all $I\subseteq [m]$ since 
	$\smash{y^{S}_{\pi(i),j}=z^S_{i,j}=1}$.
	By  Lemma~\ref{lem:base2}, $\smash{\hat{F}(x^{S, I,\pi}) \geq 1 -\eps}$ for all $I$ 
	and thus,
	\begin{align*}
	\mathcal{F}(S) =&~ \min\left\{ \sum_{I \subseteq [m]}\beta^{|I|}(1 - \beta )^{m - |I|}\cdot  \hat{F}(x^{S, I,\pi}) + \frac{\eps}{k}|S|, 1 \right\}  
	\ge \min\left\{  (1 - \eps)+ \frac{\eps}{k}|S|, 1 \right\}\ge 1 - \eps.
	\end{align*}
This finishes the proof of the lemma.\end{proof}

For any subset $S\subseteq V$, we say $S$ is {\em balanced} with respect to $c$ if $$\max_{\substack{i\in [m]\\ j, j'\in [w]}}\left|y^{S}_{i, j} - y^{S}_{i, j'}\right| \leq \gamma\quad\text{and}\quad\max_{\substack{i\in [m]\\ j, j'\in [w]}}
\left|z^{S}_{i, j} - z^{S}_{i, j'}\right| \leq \gamma.$$ Define $\calG_\pi:2^V\rightarrow \mathbb{R}$ as
\begin{align}
\label{eq:construction}
\mathcal{G}_{\pi}(S) = \min\left\{ \sum_{I \subseteq [m]}\beta^{|I|}(1 - \beta )^{m - |I|}\cdot G(x^{S, I,\pi}) + \frac{\eps}{k}|S|, 1 \right\}.
\end{align}
Recall  the function $G(x)$ only depends on $\bar{x_i}$, $i\in [m]$.
As a result, $\calG_\pi(S)$ only depends on $|S\cap A_i|$ and $|S\cap B_i|$, $i\in [m]$.
The following Lemma is a direct consequence of Lemma~\ref{lem:base2}:
\begin{lemma}
When $S\subseteq V$ is balanced with respect to $c$, 
  we have $\mathcal{F}_{c,\pi}(S) = \mathcal{G}_\pi(S)$.
\end{lemma}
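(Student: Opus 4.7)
The plan is to reduce the claim directly to Lemma~\ref{lem:base2} by arguing that, for every choice of $I \subseteq [m]$, the vector $x^{S,I,\pi} \in [0,1]^{mw}$ satisfies the row-wise $\gamma$-closeness hypothesis of that lemma, so that $\hat{F}(x^{S,I,\pi}) = G(x^{S,I,\pi})$ for every $I$. Summing (with the binomial weights) and applying the min operation then gives $\calF_{c,\pi}(S) = \calG_\pi(S)$.

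First I would unpack the definition of $x^{S,I,\pi}$. For any fixed $i \in [m]$, the $w$-tuple $(x^{S,I,\pi}_{i,1},\dots,x^{S,I,\pi}_{i,w})$ is either the tuple $(y^S_{\pi(i),1},\dots,y^S_{\pi(i),w})$ (when $i \in I$) or the tuple $(z^S_{i,1},\dots,z^S_{i,w})$ (when $i \notin I$); the bijection $\pi$ only permutes which row of $y^S$ gets used but does not mix the two cases within a single row. This is the key structural observation.

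Next I would invoke the definition of balancedness of $S$ with respect to $c$: by hypothesis, $\max_{j,j'}|y^S_{i,j}-y^S_{i,j'}| \le \gamma$ and $\max_{j,j'}|z^S_{i,j}-z^S_{i,j'}| \le \gamma$ for every $i \in [m]$. Combining this with the previous observation, we get
\begin{equation*}
\max_{i \in [m],\, j,j' \in [w]} \bigl|x^{S,I,\pi}_{i,j} - x^{S,I,\pi}_{i,j'}\bigr| \le \gamma
\end{equation*}
for every $I \subseteq [m]$, regardless of how $I$ and $\pi$ are chosen. By the first bullet of Lemma~\ref{lem:base2}, this yields $\hat{F}(x^{S,I,\pi}) = G(x^{S,I,\pi})$ for every $I$.

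Finally I would plug this equality into the definitions~\eqref{eq:construction1} and \eqref{eq:construction}: since $\hat{F}$ and $G$ agree term-by-term on every $x^{S,I,\pi}$, the weighted sums over $I$ are identical, and the additive correction $\eps|S|/k$ as well as the outer $\min\{\cdot,1\}$ are the same in both formulas, giving $\calF_{c,\pi}(S) = \calG_\pi(S)$. There is no real obstacle here; the only thing to be careful about is to note that the permutation by $\pi$ (which affects only the assignment of $y^S$-rows to indices $i \in I$) does not destroy the per-row $\gamma$-closeness, since closeness is a within-row property of $y^S$ preserved under reindexing.
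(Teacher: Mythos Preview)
Your argument is correct and is exactly the ``direct consequence of Lemma~\ref{lem:base2}'' that the paper alludes to without spelling out: the balancedness of $S$ ensures each row of $x^{S,I,\pi}$ (being either a row of $y^S$ or of $z^S$) is $\gamma$-close, so $\hat{F}$ and $G$ agree on every $x^{S,I,\pi}$ and the two weighted sums coincide. One small correction: the equality $\hat{F}(x)=G(x)$ under the row-wise $\gamma$-closeness hypothesis is the \emph{second} bullet of Lemma~\ref{lem:base2}, not the first.
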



Given any $S\subseteq V$, the next lemma captures the minimal information needed to evaluate 
  $\calG_{\pi}(S)$ without full knowledge of $c$ and $\pi$. 
  The proof can be found at Appendix~\ref{sec:lower1-app}.

\begin{lemma}\label{hehehehe3}
Let $S\subseteq V$ and let $\pi,\pi':[m]\rightarrow [m]$ be two bijections. 
Then we have $\mathcal{G}_{ \pi}(S)=\mathcal{G}_{ \pi'}(S)$ when 
  the following condition holds: For each $i\in [m]$, we have
  $\pi(i)=\pi'(i)$ if 1) $S\cap B_i\ne \emptyset$ and 2) either $S\cap A_{\pi(i)}\ne \emptyset$
  or $S\cap A_{\pi'(i)}\ne \emptyset$.
\end{lemma}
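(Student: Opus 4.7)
The plan is to reduce $\mathcal{G}_\pi(S)=\mathcal{G}_{\pi'}(S)$ to showing that a single product indexed by $[m]$ agrees for $\pi$ and $\pi'$, and then to check that product factor-by-factor, using the hypothesis to handle the indices on which $\pi$ and $\pi'$ disagree. Since the additive term $\eps|S|/k$ and the outer $\min\{\cdot,1\}$ in \eqref{eq:construction} do not depend on $\pi$, it suffices to show
\[
\sum_{I\subseteq [m]} \beta^{|I|}(1-\beta)^{m-|I|}\, G(x^{S,I,\pi}) \;=\; \sum_{I\subseteq [m]} \beta^{|I|}(1-\beta)^{m-|I|}\, G(x^{S,I,\pi'}).
\]
First I would rewrite the left-hand side using $G(x)=1-\prod_i(1-\overline{x_i})^w$ and interpret the sum as an expectation over a random $I$ that includes each index independently with probability $\beta$. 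Because the $i$-th factor of the product depends only on whether $i\in I$, the factors are mutually independent random variables and the expectation factors, so the sum equals
\[
1 \;-\; \prod_{i\in[m]} \sigma_i^\pi, \qquad \sigma_i^\pi \;:=\; \beta\left(1-\tfrac{|S\cap A_{\pi(i)}|}{\alpha k w}\right)^{w} + (1-\beta)\left(1-\tfrac{|S\cap B_i|}{(1-\alpha)k w}\right)^{w}.
\]

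Next I would prove $\prod_i \sigma_i^\pi = \prod_i \sigma_i^{\pi'}$. Let $D=\{i\in[m]:\pi(i)\ne\pi'(i)\}$; for $i\notin D$ the factors match trivially. For $i\in D$ the contrapositive of the hypothesis gives one of: (a) $S\cap A_{\pi(i)}=S\cap A_{\pi'(i)}=\emptyset$, or (b) $S\cap B_i=\emptyset$. Set $D_B:=\{i\in D: S\cap B_i=\emptyset\}$ and $D_A:=D\setminus D_B$. For $i\in D_A$ we have $S\cap B_i\ne\emptyset$, so the hypothesis forces both $A$-intersections to vanish; then the $A$-terms in $\sigma_i^\pi$ and $\sigma_i^{\pi'}$ both equal $1$ and the two factors coincide. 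For $i\in D_B$ the $B$-term equals $1$, so $\sigma_i^\pi = (1-\beta) + \beta\, h(|S\cap A_{\pi(i)}|)$, where $h(x):=(1-x/(\alpha k w))^{w}$.

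What remains is matching the products over $D_B$, which is the main obstacle: here $\pi$ and $\pi'$ permute the $A$-indices seen by $D_B$, so no factor-wise equality is available. I would handle it via a multiset argument based on two observations. First, $h(0)=1$, so only the set $M:=\{j:S\cap A_j\ne\emptyset\}$ of $A$-indices contributes nontrivially. Second, since $\pi$ and $\pi'$ agree outside $D$ and are bijections, $\pi(D)=\pi'(D)$ as subsets of $[m]$; moreover, by property (a), $\pi(i),\pi'(i)\notin M$ for every $i\in D_A$, so $\pi(D_A)\cap M=\pi'(D_A)\cap M=\emptyset$. Subtracting yields $\pi(D_B)\cap M=\pi(D)\cap M=\pi'(D)\cap M=\pi'(D_B)\cap M$, hence
\[
\prod_{i\in D_B}\!\bigl[(1-\beta)+\beta\, h(|S\cap A_{\pi(i)}|)\bigr] \;=\; \prod_{j\in \pi(D_B)\cap M}\!\bigl[(1-\beta)+\beta\, h(|S\cap A_j|)\bigr],
\]
and the analogous computation for $\pi'$ gives the same product. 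Combined with the factor-wise equality on $D_A$ and $[m]\setminus D$, this yields $\prod_i\sigma_i^\pi=\prod_i\sigma_i^{\pi'}$ and hence the lemma. The genuinely delicate step is the bijection/multiset argument on $D_B$; everything else is bookkeeping.
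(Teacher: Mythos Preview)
Your argument is correct. It differs from the paper's proof in one key structural choice. The paper keeps the full sum $\sum_{I}\beta^{|I|}(1-\beta)^{m-|I|}G(x^{S,I,\pi})$ intact and constructs an explicit bijection $\sigma:[m]\to[m]$ (identity outside $S_{A,\pi}\cup S_{A,\pi'}\setminus S_B$, matching $\pi(i)=\pi'(\sigma(i))$ on the rest) so that $G(x^{S,I,\pi})=G(x^{S,\sigma(I),\pi'})$ term by term; equality of the sums then follows because $\sigma$ induces a cardinality-preserving bijection on $2^{[m]}$. You instead first exploit the independence structure of the random $I$ to factor the sum as $1-\prod_{i}\sigma_i^\pi$, reducing the problem to matching a single product; the bijection argument then enters only in the lighter form $\pi(D_B)\cap M=\pi'(D_B)\cap M$. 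Your route is more elementary and avoids verifying that the paper's $\sigma$ is well-defined, at the cost of relying on the product/independence structure specific to $G$. The paper's approach would generalize more readily to functions $G$ that are symmetric in the blocks but not of product form.
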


\begin{remark}\label{heheremark1}
Let $S\subseteq V$. If (1) $S$ is known to be balanced with respect to $c$
  and (2) we are given 
\begin{equation}\label{hohoho}
\big\{(i,\pi(i)): \text{$i\in [m]$, $S\cap B_i\ne \emptyset$ and
  $S\cap A_{\pi(i)}\ne \emptyset$}\big\},
\end{equation}
  then one can evaluate $\calF_{c,\pi}(S)$ without more information about $c$ and $\pi$.
To see this, (1) implies that it suffices to evaluate $\calG_\pi(S)$.
Lemma \ref{hehehehe3} implies that $\calG_\pi(S)$ is uniquely determined 
  given (\ref{hohoho}).
\end{remark}

We delay the proof of the following lemma to Appendix \ref{sec:lower1-app}. This is 
  where we need to choose the two parameters $\alpha$ and $\beta$ carefully to 
  minimize the constant $0.584$ in the lemma.

\begin{lemma}\label{hardlemma}
Fix any $i^{*}\in [m]$.
Let $S\subseteq A\cup B_{i^{*}}$ be a set such that $S$ is balanced with respect to $c$
  and $S\cap A_{\pi (i^{*})}=\emptyset$.
Then we have $\calF_{c,\pi}(S)\le 0.5839+3\eps$.
\end{lemma}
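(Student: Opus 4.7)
Since $S$ is balanced with respect to $c$, the preceding lemma gives $\calF_{c,\pi}(S) = \calG_\pi(S)$, so it suffices to bound $\calG_\pi(S)$. The plan is to set $p_i := |S\cap A_{\pi(i)}|/(\alpha k w)$ for $i\in [m]$ and $q := |S\cap B_{i^*}|/((1-\alpha) k w)$; the hypotheses $S \cap A_{\pi(i^*)} = \emptyset$ and $S \subseteq A \cup B_{i^*}$ give $p_{i^*} = 0$, together with $\overline{z^S_i} = q$ when $i = i^*$ and $0$ otherwise. Since $G(x)$ depends only on the averages $\overline{x_i}$, the weighted sum over $I$ inside $\calG_\pi(S)$ is the expectation of $G(x^{S,I,\pi})$ under an independent Bernoulli$(\beta)$ draw for each coordinate, which factors into a product over $i\in [m]$. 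A direct computation then yields $\calG_\pi(S) \le 1 - PQ + \eps$ where $P := \prod_{i\ne i^*}[\beta(1-p_i)^w + (1-\beta)]$ and $Q := \beta + (1-\beta)(1-q)^w$, using $|S|\le k$ so that $\eps |S|/k \le \eps$.

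For $P$, I plan to establish the pointwise inequality $\beta(1-p)^w + (1-\beta) \ge e^{-w\beta p}$ for every $p\in[0,1]$, valid whenever $w\ge 1/(1-\beta)$ (which holds with a huge margin in our parameter regime). Set $\phi(p) := (\beta(1-p)^w + (1-\beta))\,e^{w\beta p}$; a direct calculation gives $\phi'(p) \ge 0$ iff $(1-p)^{w-1}(1-\beta+\beta p) \le 1-\beta$, and the $p$-derivative of the left-hand side has the same sign as $\beta - (w-1)(1-\beta) - w\beta p$, which stays non-positive throughout $[0,1]$, so that left-hand side is monotonically decreasing from its value $1-\beta$ at $p=0$. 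Combined with $\phi(0)=1$, this yields $P \ge e^{-w\beta T}$ with $T := \sum_{i\ne i^*} p_i$. For $Q$, I would use the Taylor bound $\log(1-q) \ge -q - q^2$ on $[0,1/2]$ together with $e^{-x}\ge 1-x$ to get $(1-q)^w \ge e^{-wq}(1 - wq^2)$. Writing $s := wT$ and $t := wq$, the constraint $|S|\le k$ reads $\alpha s + (1-\alpha)t \le 1$, so $t \le 1/(1-\alpha)$ and our choice of $w$ forces $wq^2 = t^2/w \le \eps$. Hence $Q \ge Q_0 - \eps$, where $Q_0 := \beta + (1-\beta)e^{-t}$.

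Combining the two lower bounds and absorbing the $O(\eps)$ errors yields $\calG_\pi(S) \le 1 - e^{-\beta s}Q_0 + 3\eps$ over the region $\{(s,t)\,:\,\alpha s + (1-\alpha)t \le 1,\ s,t \ge 0\}$. The objective $e^{-\beta s}(\beta + (1-\beta)e^{-t})$ is strictly decreasing in both $s$ and $t$, so its minimum lies on the boundary line $\alpha s + (1-\alpha)t = 1$. Substituting $s = (1 - (1-\alpha)t)/\alpha$ and differentiating in $t$, the unique interior critical point is $e^{-t^*} = r\beta/((1-r)(1-\beta))$ with $r := \beta(1-\alpha)/\alpha$; a second-derivative check and a comparison with the boundary corners $(1/\alpha,0)$ and $(0,1/(1-\alpha))$ confirm that this critical point gives the maximum of $1 - e^{-\beta s}Q_0$ on the region. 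Plugging in $\alpha = 0.56$ and $\beta = 0.42$ produces $r \approx 0.330$, $t^* \approx 1.031$, $s^* \approx 0.976$, and a maximum value $\approx 0.5838 \le 0.5839$, giving $\calF_{c,\pi}(S) \le 0.5839 + 3\eps$. The main obstacle is this final numerical step: the constants $\alpha = 0.56$ and $\beta = 0.42$ have to be found by minimizing the two-variable upper bound over $(\alpha,\beta)$, and one must verify that the resulting maximum indeed falls just below $0.5839$.
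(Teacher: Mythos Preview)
Your proposal is correct and leads to exactly the same one-variable optimization as the paper (the paper's $F(\alpha,\beta,\lambda)$ coincides with your $1-e^{-\beta s}Q_0$ once one sets $\lambda=(1-\alpha)t$ and uses $\alpha s+(1-\alpha)t=1$), but the route you take to get there is genuinely different. The paper splits the sum over $I$ according to whether $i^{*}\in I$, then inside each product uses the approximation $1-x\ge e^{-(1+\eps)x}$ (valid because each $\overline{x}_i/(wk)$ is $O(1/w)=O(\eps)$) and finally applies Jensen's inequality to the concave map $x\mapsto 1-\exp(-\sum_i x_i/(\alpha k))$ to collapse the average over $I$. You instead exploit the independence of the Bernoulli($\beta$) coins to factor the expectation over $I$ as a product over $i$, obtaining $1-PQ$ exactly, and then lower-bound each factor of $P$ by the clean pointwise inequality $\beta(1-p)^w+(1-\beta)\ge e^{-w\beta p}$, which you establish via a monotonicity argument needing only $w\ge 1/(1-\beta)$. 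This is arguably tidier: your bound on $P$ does not require each $p_i$ to be small (only $w$ large enough), and you avoid both the case split and the Jensen step. The cost is that for $Q$ you still need a Taylor/second-order estimate $(1-q)^w\ge e^{-wq}(1-wq^2)$ together with the smallness of $q$, where the paper's approach handles both pieces uniformly via the $(1+\eps)$-exponent trick.

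One remark: like the paper's own proof, your argument silently uses $|S|\le k$ (both to get $\eps|S|/k\le\eps$ and to get the constraint $\alpha s+(1-\alpha)t\le 1$), even though the lemma as stated does not include that hypothesis. This is not a flaw in your reasoning---the lemma is only ever applied with $|S|\le k$, and is false without that assumption---but you should make the hypothesis explicit when writing it up.
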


\def\br{\mathbf{r}} \def\calD{\mathcal{D}} \def\bQ{\mathbf{Q}}
\def\ALGG{\mathsf{ALG}} 

\subsection{Lower bound for dynamic submodular maximization}

  
\noindent\textbf{Hard streams.} 
The ground set is $V$, which is known to the algorithm; the
  monotone submo\-dular function $\calF^*:2^V\rightarrow \mathbb{R}$ is 
  $\calF_{c,\pi}$, where $c$ is a proper $w$-coloring of $V$ and $\pi:[m]\rightarrow [m]$ is a bijection, both of which are unknown to the algorithm.
In the stream we first insert all elements of $A$, which takes
  $\alpha mkw$ insertions.  
We then divide the rest of the stream into $m$ batches: For the 
  $t$-th batch, we first insert all elements of $B_t$ and then delete them.  
So the total length is $n=(2-\alpha)mkw$.
This is the only stream we will use in the proof but the function is determined 
  by the unknown $c$ and $\pi$.

We will focus on the $0.1m$ rounds when elements of $B_t$, 
  for each $t=1,\ldots,0.1m$, has just been inserted. 
To ease the presentation we consider the following dynamic problem that consists of $0.1m$ stages.
During the $t$-th stage, an algorithm can query $\calF^*:2^V\rightarrow \mathbb{R}$ about any  $S\subseteq
  A\cup B_1\cup \cdots \cup B_t$ and can choose to start the next stage at any time by outputting a set   %
  $S_t\subseteq A\cup B_t$ of size at most $k$. 
The algorithm succeeds after $0.1m$ stages if $\calF^*(S_t)> 0.5839+3\eps$ for every $t\in [0.1m]$. 
We~prove that any randomized algorithm that succeeds with probability at least $2/3$
  must have total query complexity of at least $\Omega(n^2/k^3)$.
It then follows from Lemma \ref{lem:basic}
  (part 3) that any randomized algorithm for dynamic submodular maximization 
  with an approximation
  guarantee of $0.5839+4\eps$ must have amortized query complexity $\Omega(n/k^3)$.
  


\begin{proof}[Proof of Theorem \ref{thm:lower1}]
Let $\ALGG$ be a randomized algorithm for the $0.1m$-stage problem described above 
  that succeeds with probability at least $2/3$. 
By Lemma \ref{lem:basic}, we may assume that 
  $\ALGG$ never queries a set of size more than $k/\eps$.
To lower bound the query complexity of $\ALGG$ we consider the following simple 
  $0.1m$-stage game.

In the game there is a hidden proper $w$-coloring $c$ of $V$ and 
  a hidden bijection $\pi:[m]\rightarrow [m]$.
The game similarly consists of $0.1m$ stages.  
During each round of stage $t$, the algorithm can make a query by either 
  (1) picking a subset $S$ of $V$ of size at most $k/\eps$
  or (2) picking a number $i\in [m]$.
In case~(1) the $c$-oracle returns 
  ``balanced'' or ``unbalanced,'' as whether $S$ is balanced with respect to $c$ or not;
  when receiving ``unbalanced'' the algorithm wins the whole game.
In case (2) the $\pi$-oracle returns 
  ``matched'' or ``not matched,'' as whether $\pi(t)=i$ or not;
  after receiving ``matched,'' the algorithm can choose to proceed to the next stage,
  and it wins the game if it passes all $0.1m$ stages.
%

We show that any randomized algorithm that wins the game with probability at least $2/3$ 
  must use at least $\Omega(m^2)$ queries.
To this end, we consider the distribution of $(\cc,\bpi)$ where $\cc$ and $\bpi$
  are drawn uniformly and independently, and show that any deterministic algorithm
  that wins the game with probability at least $2/3$ must use $\Omega(m^2)$ queries.

Showing this distributional $\Omega(m^2)$ lower bound for each subgame is easy.
For the subgame of finding an unbalanced set with respect to $\cc$, we have
  for any set $S$ of size at most $k/\eps$ that
\begin{align*}
\Pr\left[y^{S}_{i, j} - y^{S}_{i, j'}> \gamma\right] & = \Pr\Big[|S\cap A_{i, j}| - |S\cap A_{i, j'}| > \gamma \alpha k\Big] \\
&\leq \exp\left(- \frac{\eps\gamma^2\alpha^2 k^2}{2k }\right) \leq \frac{1}{n^5},\quad \text{and}\\[1.5ex]
\Pr\left[z^{S}_{i, j} - z^{S}_{i, j'}> \gamma\right]  &= \Pr\Big[|S\cap B_{i, j}| - |S\cap B_{i, j'}| > \gamma (1 -\alpha) k\Big] \\
&\leq \exp\left(- \frac{\eps\gamma^2(1 - \alpha)^2 k^2}{2k }\right) \leq \frac{1}{n^5}
\end{align*}
for any fixed $i$ and $j\ne j'$, given our choice of $k$.
Therefore, any deterministic algorithm that finds an unbalanced set 
  with probability at least $1/3$ requires $\Omega(n^5)$ queries.
For the subgame about $\bpi$,
  at the beginning of each stage $t$, any of the remaining $m-(t-1)$
  indices (other than $\pi (1),\ldots,\pi (t-1)$) is equally likely to be 
  $\pi (t)$.
Given that $m-(t-1)\ge 0.9m$, it takes $\Omega(m)$ queries to pass each of the $0.1m$ stages
  and thus, any deterministic algorithm that wins the subgame about $\bpi$ 
  with probability at 
  least $1/3$ requires $\Omega(m^2)$ queries. 

Back to the original game, if a deterministic algorithm $\calA$ can win with probability at least
  $2/3$, then it can either win the first or the second subgame with probability
  at least~$1/3$.
Assuming~for example it is the latter case, we get a randomized algorithm
  for winning the second subgame over $\bpi$ with probability at least $1/3$ by first drawing
  $\cc$ and then simulating $\calA$ on the original game. Given that the 
  number of queries is at most that of $\calA$, we have that query complexity of $\calA$ is $\Omega(m^2)$.
  
To finish the proof  we show how to use $\ALGG$ to play the game.
Let $c$ be the hidden coloring and $\pi$ be the hidden bijection.
We simulate the execution of $\ALGG$ on $\calF^*=\calF_{c,\pi}$ as follows:
\begin{flushleft}\begin{enumerate}
\item During the first stage (of both the dynamic problem of maximizing 
  $\calF^*$ over $A\cup B_1$ and the game), letting $S\subseteq A\cup B_1$ of size at most $k/\eps$
  be any query made by $\ALGG$,
  we make one query $S$ on the $c$-oracle (to see if $S$ is balanced or not) and make 
  one query $i$ on the $\pi$-oracle for each $i\in [m]$ with $S\cap A_i\ne\emptyset$ (to see if $\pi (1)=i$).
If $S$ is not balanced, we already won the game; otherwise, we know $\calF^*(S)=\calG_\pi(S)$
  and the latter can be computed (see Remark \ref{heheremark1}) using information returned by
  the $\pi$-oracle, so that we can continue the simulation of $\ALGG$.
When $\ALGG$ decides to output $S_1\subseteq A\cup B_1$ of size at most $k$,
  we query $S_1$ on the $c$-oracle and query each $i$ on the $\pi$-oracle with $S_1\cap A_i\ne \emptyset$.
If $S_1$ is unbalanced, we won the game; if $\pi (1)=i$ for some $i$ queried,
  we move to stage $2$ (in both the dynamic problem and the game); if $S_1$ is balanced and $\pi(1) \ne i$ for any $i:S_1\cap A_i\ne \emptyset$,
  it follows from Lemma \ref{hardlemma} that $\ALGG$ has failed so we terminate the simulation and fail the game.
\item During the $t$-th stage (of both the dynamic problem and the game),
  we similarly simulate each query $S\subseteq A\cup B_1\cup \cdots\cup B_t$ of $\ALGG$.
The only difference is that, given that we have passed the first $(t-1)$ stages of the game,
  we already know $\pi (1),\ldots,\pi (t-1)$ and thus, information
  returned by the $\pi$-oracle would be enough for us to evaluate $\calF^*(S)$.
When $\ALGG$ returns $S_t$, we query $S_t$ and each $i$ with $S_t\cap A_i\ne \emptyset$,
  and act according to results similarly.
\end{enumerate}\end{flushleft}
To summarize, the simulation has three possible outcomes: (1) we won the game
  because  an unbalanced set has been found; (2) we won the game because we have passed
  all $0.1m$ stages; or (3)~$\ALGG$ fails to find $S_t$ with $\calF_{c,\pi}(S_t)\ge 0.5839+3\eps$
  for some $t$.
Given that (3) only happens with probability at most $1/3$, we obtain an algorithm
  for the game that succeeds with probability at least $2/3$ and thus, must use
  $\Omega(m^2)$ queries.
To finish the proof, we note that if $\ALGG$ has total query complexity $q$ then
  the algorithm we obtain for the game has total query complexity at most
$$
q\cdot \left(1+\frac{k}{\eps}\right)+0.1m\cdot (1+k),
$$
which implies $q=\Omega(m^2/k)$ and thus, the amortized complexity of $\ALGG$ is $\Omega(m/k^2)=\Omega(n/k^3)$. 
Taking $\eps = 3\times 10^{-5}$, we get the lower bound on approximation ratio.
\end{proof}

\section{A Polynomial Lower Bound for $1/2+\eps$ Approximation}
\label{sec:lower2}

\def\Sample{\textsf{Sample}} \def\bOmega{\boldsymbol{\Omega}} \def\bR{\mathbf{R}}

We restate the main theorem of this section:

\thmlowertwo*


\subsection{The construction}
\label{sec:lower2-construction}

Let $\eps>0$ be a positive constant.
We assume that both $1/\eps$ and $\eps k$ are positive integers.
Our~goal is to show that any randomized algorithm for dynamic submodular maximization
  under a cardinality constraint of $k$ with approximation $\smash{1/2+\tilde{O}(\eps)}$
  must have amortized query complexity $\Omega(n^\eps/k^3)$.

We start with the construction of a monotone submodular function $\calF:V\rightarrow \mathbb{R}$.
The family of functions used in our lower bound proof will be obtained from $\calF$
  by carefully shuffling elements of $V$.
Let $L=1/\eps\in \mathbb{N}$ and let $m=(m_1,\ldots,m_L)$ be a tuple of positive 
  integers to be fixed later, with $m_{L}$ set to be $1$.
Let $T$ be a tree of depth $L$, where the root is at depth $0$ and its~leaves are at depth $L $.
Each internal node of $T$ at depth $\ell \in [0:L-1]$ has $m_{\ell+1}$ children;
   the number of~nodes at depth $\ell\in [L]$ is $m_1\cdots m_\ell$.
We use $\gamma$ to refer to the root of $T$ and write $U_0=\{\gamma\}$;
  for each $\ell\in [L ]$,
  we use $V_\ell= [m_1]\times \cdots \times [m_\ell]$ to 
  refer to nodes of $T$ at depth $\ell$.
So the set of nodes is $U_0\cup U_1\cup\cdots \cup U_{L }$ and whenever we refer to a node 
  $u$ of $T$ at depth $\ell$, it should be
  considered as a tuple $(u_1,\ldots,u_\ell)\in U_\ell$. 
Children of $u\in U_\ell$, $\ell\le L-1$,  
  are given by $(u_1,\ldots,u_\ell,1),\ldots, (u_1,\ldots,u_\ell,m_{\ell+1})$.

The ground set $V$ of $\calF$ is defined as follows. 
For each node $u\in U:=U_1\cup\cdots \cup U_L$, 
  we introduce a set $A_{u}=\{a_{u,1},\ldots,a_{u,w}\}$ of $w:=\eps k$ new elements.
We define $V$ as the union of 
   $A_{u}$ for all $u\in U$.  
To construct $\calF:2^V\rightarrow \mathbb{R}$, we 
  will utilize a weight sequence $\{w_\ell\}_{\ell \in [0:L]}$ from \cite{feldman2020one}
  to define a probability distribution $\calD$ over subsets of nodes of $T$.
We specify the sequence later in Lemma \ref{def:weight-sequence}; for now it suffices to know that
  $w_0=0$, $w_\ell$'s are nonnegative, and they sum to $1$.
  
  \begin{algorithm}[!t]
  	\caption{$\Sample$}
  	\label{algo:samplealgo}
  	\begin{algorithmic}[1]
  		\State \textbf{Input: }A node $v$ of $T$ at depth $\ell\in [0:L]$
  		\State Let $p_\ell$ be given as below:
  		$$
  		p_\ell= \frac{w_\ell}{1-\sum_{i=0}^{\ell-1} w_i}\in [0,1].
  		$$
  		\State With probability $p_\ell$, return $v$.  \Comment{Since $p_L=1$, this always happen when $\ell=L$ }
  		\State Otherwise, run $\Sample(v_i)$ independently
  		on each child $v_i$ of $v$ in the tree and return the union 
  	\end{algorithmic}
  \end{algorithm}

We define the distribution of $\calD$ over $2^U$ as follows.
Drawing a sample from $\bR\sim \calD$ can be done by calling the recursive
  procedure $\Sample$ in Algorithm \ref{algo:samplealgo} on the root of $T$.
Informally, starting with $\bR=\emptyset$, $\Sample$ performs a DFS walk on $T$.
Whenever reaching a node $v$ at depth $\ell$, it adds $v$ to $\bR$ and does not 
  explore any of its children with probability $p_\ell$ (see Algorithm \ref{algo:samplealgo};
   note that $p_0=0$ and $p_L=1$); otherwise, it continues the DFS walk to visit each of its children.  
Because $p_0=0$,
  the root is never included in the set and thus, $\bR$ is always a subset of $U$.

	We need the following properties about $\bR$, the proof can be found at Appendix~\ref{sec:lower2-app}.
\begin{lemma}\label{usefulproperties}
	Let $\bR\sim \calD$, we need have the following properties
	\begin{itemize}
\item For any node $u$ of $T$ at depth $\ell$, the probability of $u\in \bR$ with
  $\bR\sim \calD$ is $w_\ell$.
  
\item For any root-to-leaf path of $T$, every $R$ in the support of $\calD$
  has exactly one node in the path.  
  \end{itemize}
\end{lemma}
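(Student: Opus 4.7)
For the first claim, the approach is to unfold the recursion of $\Sample$ along the unique path from the root $\gamma$ to the given node $u$ at depth $\ell$. The key observation is that $u$ is placed into $\bR$ if and only if (a) for every strict ancestor $v$ of $u$ (at depths $0, 1, \ldots, \ell-1$) the call $\Sample(v)$ took the ``recurse'' branch rather than returning $v$, and (b) the call $\Sample(u)$ itself took the ``return $v$'' branch. Since the coin flips in different recursive calls of $\Sample$ are independent, this gives
\[
\Pr[u \in \bR] \;=\; p_\ell \cdot \prod_{i=0}^{\ell-1}(1 - p_i).
\]
Substituting $p_i = w_i / (1 - \sum_{j=0}^{i-1} w_j)$ yields
\[
1 - p_i \;=\; \frac{1 - \sum_{j=0}^{i} w_j}{1 - \sum_{j=0}^{i-1} w_j},
\]
so the product telescopes to $1 - \sum_{j=0}^{\ell-1} w_j$, and multiplying by $p_\ell$ gives exactly $w_\ell$.

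For the second claim, fix a root-to-leaf path $\gamma = v_0, v_1, \ldots, v_L$ and any realization $R$ in the support of $\calD$. The plan is to trace how $\Sample$ behaves along this path. Because $w_0 = 0$, we have $p_0 = 0$, so $\Sample(\gamma)$ never returns $\gamma$ and always recurses on each child, including $v_1$. Inductively, once the procedure invokes $\Sample(v_\ell)$ for some $\ell < L$, it either returns $v_\ell$ (and so stops recursing into any descendant, in particular never touching $v_{\ell+1}, \ldots, v_L$) or it recurses into every child of $v_\ell$, including $v_{\ell+1}$. Since $p_L = 1$, if the recursion ever reaches $\Sample(v_L)$ it is forced to return $v_L$. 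Therefore exactly one index $\ell \in [1:L]$ satisfies $v_\ell \in R$: at most one because returning $v_\ell$ halts recursion in the whole subtree rooted at $v_\ell$, and at least one because otherwise the recursion would have to reach $\Sample(v_L)$, which always includes $v_L$.

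The main obstacle is essentially bookkeeping rather than anything conceptual: carrying out the telescoping identity cleanly in part one, and pinning down the ``at most one'' and ``at least one'' sides of the case analysis in part two. The independence of coin flips across distinct recursive invocations of $\Sample$ is immediate from the code of Algorithm~\ref{algo:samplealgo}, so no additional probabilistic machinery is needed.
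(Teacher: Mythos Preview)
Your proposal is correct and matches the paper's own proof: the same telescoping computation $\Pr[u\in\bR]=p_\ell\prod_{i<\ell}(1-p_i)=w_\ell$ for the first part, and the same observation that $p_0=0$, $p_L=1$, and the stopping structure of $\Sample$ force exactly one node per root-to-leaf path for the second. The paper states the second part more tersely, but your fuller case analysis is exactly the intended argument.
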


Let $P_1\subseteq U$ and $P_2\subseteq U$ be two subsets of nodes of $T$
  and let $\tau$ be a bijection from $P_1$ to $P_2$ such that
  (1) $u$ and $\tau(u)$ are at the same depth for every  $u\in P_1$; and (2)
  for any $u,v\in P_1$, the~depth of the lowest common ancestor (LCA) of $u$ and $v$
  is the same as that of $\tau(u)$ and $\tau(v)$.
The following lemma follows from how the procedure $\Sample$ works:

\begin{lemma}\label{hhhhheeeee1}
The distribution of $P_2\cap \bR$, $\bR\sim \calD$, is distributed the same as
  first drawing $\bR\sim \calD$, taking $P_1 \cap \bR$ and applying $\tau$ on $P_1\cap \bR$.\end{lemma}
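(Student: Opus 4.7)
The plan is to argue that the joint law of the indicators $(\mathbf{1}[u \in \bR])_{u \in P}$ for $\bR\sim\calD$ depends only on the depth-labeled rooted tree structure of the Steiner subtree spanned by $P$ (with the marked subset $P$ of its nodes), and then observe that the hypotheses on $\tau$ guarantee this data coincides for $P_1$ and $P_2$.

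First, I would isolate the relevant randomness. The procedure $\Sample$ uses an independent coin flip at every node $v$ of $T$, with ``accept'' probability $p_\ell$ depending only on $v$'s depth $\ell$. The recursion reaches $u$ if and only if every proper ancestor of $u$ rejects, and $u \in \bR$ additionally requires $u$ itself to accept. Hence for any $P \subseteq U$, the vector $(\mathbf{1}[u \in \bR])_{u \in P}$ is measurable with respect to the coin flips at the nodes of the Steiner subtree
\[
T[P] := \bigcup_{u \in P} \mathrm{path}(\mathrm{root},u),
\]
and in fact is a determined function of $T[P]$ viewed as a depth-labeled rooted tree carrying the mark $P$.

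Second, I would verify that under the hypotheses of the lemma there is a depth-preserving, root-preserving tree isomorphism $\tilde\tau : T[P_1] \to T[P_2]$ extending $\tau$. The combinatorial content is that the depth-labeled rooted tree $T[P]$ together with the marked set $P$ is reconstructible from the depths of elements of $P$ and the pairwise LCA depths: the branching nodes of $T[P]$ are exactly the distinct LCAs of pairs in $P$, and in any rooted tree the LCA node structure on a finite subset is determined by the pairwise LCA depths; moreover, every segment of $T[P]$ between two consecutive skeleton nodes at depths $d < d'$ is a chain of $d'-d$ nodes at depths $d+1, \ldots, d'$. Since $\tau$ preserves exactly this data, such a $\tilde\tau$ exists.

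Finally, I would conclude via a coupling of two independent runs of $\Sample$. Let $\bR_1, \bR_2 \sim \calD$; couple the flip at each $v \in T[P_1]$ in Run~1 with the flip at $\tilde\tau(v) \in T[P_2]$ in Run~2 (legitimate because the two share a depth, hence the same Bernoulli parameter, and $\tilde\tau$ is a bijection between the two subtrees; all remaining flips are drawn independently). Under this coupling the events ``$u \in \bR_1$'' and ``$\tau(u) \in \bR_2$'' coincide for every $u \in P_1$, so $\tau(P_1 \cap \bR_1) = P_2 \cap \bR_2$ almost surely, yielding the desired equality in distribution. I expect the main obstacle to be the combinatorial Step~2---verifying that pairwise LCA depths determine the depth-labeled Steiner tree---after which the probabilistic conclusion is immediate from the depth-homogeneity and independence of the coin flips.
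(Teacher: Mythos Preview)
Your proposal is correct and follows essentially the same approach as the paper: extend $\tau$ to a depth-preserving isomorphism between the Steiner subtrees $T[P_1]$ and $T[P_2]$ (the paper leaves this to ``induction'' while you spell out why pairwise LCA depths suffice), and then observe that $\Sample$ restricted to each subtree is a function only of the depth-labeled tree structure. Your coupling argument in Step~3 is a cleaner way of phrasing the paper's one-line conclusion that ``$P_1\cap\bR$ and $P_2\cap\bR$ can be obtained by running $\Sample$ on the same tree.''
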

\begin{proof}
Let $T_1$ (or $T_2$) be the subtree that consists of paths from nodes in $P_1$ (or $P_2$)
  to the root.
Using conditions (1) and (2), one can show (by induction) that $\tau$ can be extended to obtain an 
  isomorphism from $P_1$ to $P_2$. 
Then $P_1\cap \bR$ and $P_2\cap \bR$ can be obtained by running $\Sample$
  on the same tree and the two distributions are the same under $\tau$.
\end{proof}

We use $\calD$ to define $\calF$. 
Given $S\subseteq V$, we define $x^S\in [0,1]^U$ as the vector indexed by $u\in U$ with
$$
x^S_u=\frac{|S\cap A_u|}{\eps k}\in [0,1].
$$
For any $R$ in the support of $\calD$ we define $g_R:[0,1]^U\rightarrow [0,1]$ 
  and $\calG:[0,1]^U\rightarrow [0,1]$ as 
\begin{align}
\label{eq:obj1}
g_{R}(x) :=1 - \prod_{u\in R}\left(1 - x_u\right)\quad\text{and}\quad
\calG(x)= \E_{\bR\sim \calD} \big[g_{\bR}(x )\big].
\end{align}
We are now ready to define  $\calF$ over $2^{V}$ as 
\begin{align}
\label{eq:obj}
\mathcal{F}(S) = \min\left\{  \calG(x^S) + \frac{\eps}{k}|S|, \hspace{0.04cm}1\right\},
\quad\text{for any set $S\subseteq V$.}
\end{align}

We state some basic properties of $\calF$. The detailed proof can be found at Appendix~\ref{sec:lower2-app}.

\begin{lemma}\label{lem:basicbasic}
  $\mathcal{F}$ is monotone submodular and $\mathcal{F}(S) \in [0,1]$ for any $S\subseteq V$. Moreover, $\calF(S) = 1$ whenever $|S| \geq k/\eps$.
\end{lemma}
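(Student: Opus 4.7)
My plan is to establish the four claims in turn by tracking how the elementary properties of the building block $g_R$ propagate through the composition defining $\calF$. The key observation is that for each fixed $R$ in the support of $\calD$, the multilinear function $g_R(x)=1-\prod_{u\in R}(1-x_u)$ on $[0,1]^U$ is nonnegative, bounded above by $1$, and coordinate-wise nondecreasing.

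First I would verify that $h_R(S):=g_R(x^S)$ is monotone and submodular in $S$ for every fixed $R$. Monotonicity is immediate because $x^S$ is coordinate-wise nondecreasing in $S$. For submodularity, fix $S\subseteq T\subseteq V$ and $e\in V\setminus T$, and let $u$ be the unique node with $e\in A_u$. If $u\notin R$, both marginals vanish. If $u\in R$, a direct calculation gives
\[
h_R(S\cup\{e\})-h_R(S) \;=\; \frac{1}{\eps k}\prod_{v\in R,\,v\neq u}\bigl(1 - x^S_v\bigr),
\]
and the analogous expression with $T$ in place of $S$. Since $x^T_v\ge x^S_v$ for every $v$, the marginal at $T$ is at most the marginal at $S$, which is the submodular inequality. (Alternatively, one can observe that $h_R(S)$ equals the probability that independently sampled elements $e_u\in A_u$, $u\in R$, satisfy $e_u\in S$ for some $u$; this writes $h_R$ as a mixture of monotone submodular coverage functions.) Taking expectation over $\bR\sim\calD$ then gives that $\calG(x^S)=\E_{\bR\sim\calD}[h_\bR(S)]$ is a monotone submodular function of $S$ with values in $[0,1]$.

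Next, writing $\phi(S):=\calG(x^S)+\eps|S|/k$, I would invoke two standard preservation facts: adding a nonnegative modular function preserves monotonicity and submodularity, and truncating a nonnegative monotone submodular function at a constant $c$ preserves both. The latter reduces to a short case analysis depending on which of $\phi(A), \phi(B), \phi(A\cap B), \phi(A\cup B)$ exceed $c$: in the subcase where all four are below $c$, the inequality is inherited from $\phi$; in the subcase where all four meet $c$, it is trivial; and in the mixed subcases, monotonicity of the truncated function handles the remaining bookkeeping. Applying these to $\phi$ with threshold $1$ yields that $\calF$ is monotone submodular with $\calF(S)\in[0,1]$.

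Finally, the claim $\calF(S)=1$ for $|S|\ge k/\eps$ is immediate: then $\eps|S|/k\ge 1$ and $\calG(x^S)\ge 0$, so the minimand is at least $1$. I do not expect any step to be a real obstacle; the only item requiring any care is the brief case analysis for truncation at a constant, which I would write out explicitly.
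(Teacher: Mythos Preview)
Your proposal is correct and follows the same outline as the paper's proof, which simply asserts that each $g_R$ is (continuous) monotone submodular, that this passes to $\calG(x^S)$ and then to the discrete function, and that adding the modular term and taking $\min\{\cdot,1\}$ preserve monotonicity and submodularity. Your version is more explicit---computing the marginal of $h_R$ directly and planning the truncation case analysis---but the structure and ideas are identical.
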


For each leaf $u\in U_L$, we define two important subsets of $V$.
The first set, $\calA_u$, is the union of $A$'s along the path from $u$'s parent to the root
  (not including the root):
$$
\calA_u:=A_{u_1}\cup A_{u_1,u_2}\cup \cdots\cup A_{u_1,\ldots,u_{L-1}}.
$$
The other set $W_u$ is the union of $A$'s of nodes that are children of nodes along
  the root-to-$u$ path: 
$$
W_u:=\bigcup_{\ell\in [L]} \bigcup_{j\in [m_\ell]} A_{u_1,\ldots,u_{\ell-1},j}.
$$
The following lemma shows that $\calF(\calA_u\cup A_u)$ is large:

\begin{lemma}
	\label{lem:prop2}
	 For any leaf $u\in U_L$, 
	 we have $|\calA_u\cup A_u| = k$ and $\mathcal{F}(\calA_u\cup A_u) = 1$.
\end{lemma}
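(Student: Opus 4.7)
The plan is to verify both the size claim and the value claim directly from the definitions, leveraging Lemma \ref{usefulproperties}.

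First I would compute $|\calA_u\cup A_u|$. By construction, each $A_v$ has cardinality $w=\eps k$, and the sets $\{A_v\}_{v\in U}$ are pairwise disjoint (they are introduced as fresh elements indexed by the node $v$). The set $\calA_u\cup A_u$ is the union of $A_{u_1},A_{u_1,u_2},\ldots,A_{u_1,\ldots,u_L}$, i.e., the $A$-sets of the $L$ non-root ancestors of $u$ (together with $A_u$ itself). Since $L=1/\eps$ and these $L$ sets are pairwise disjoint, $|\calA_u\cup A_u|=L\cdot\eps k=k$.

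Next I would compute $\calF(\calA_u\cup A_u)$. Let $S=\calA_u\cup A_u$. By the definition of $x^S$, we have $x^S_v=1$ whenever $v$ lies on the root-to-$u$ path (i.e., $v=(u_1,\ldots,u_\ell)$ for some $\ell\in[L]$), and $x^S_v=0$ otherwise. Now the key observation is the second bullet of Lemma \ref{usefulproperties}: every $R$ in the support of $\calD$ contains exactly one node on each root-to-leaf path, and in particular on the root-to-$u$ path. So for every such $R$ there exists some $v\in R$ with $x^S_v=1$, which forces the product $\prod_{v\in R}(1-x^S_v)=0$ and thus $g_R(x^S)=1$. Taking the expectation over $\bR\sim\calD$ in \eqref{eq:obj1} yields $\calG(x^S)=1$.

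Plugging into \eqref{eq:obj} gives $\calF(S)=\min\{1+\eps|S|/k,\,1\}=1$, as claimed. There is no real obstacle here; the only substantive content is invoking the ``every root-to-leaf path is hit exactly once'' property of $\calD$, which is exactly what the $\Sample$ recursion was designed to guarantee (the probabilities $p_\ell$ are chosen so that the walk eventually stops on each root-to-leaf path with probability one, and stops at most once since the recursion terminates on that branch as soon as a node is output). Both conclusions of the lemma then follow in one or two lines each.
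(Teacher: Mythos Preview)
Your proposal is correct and follows essentially the same approach as the paper: compute the size by counting $L=1/\eps$ disjoint sets of size $\eps k$, then invoke the second bullet of Lemma~\ref{usefulproperties} to conclude that every $R$ in the support of $\calD$ contains some node on the root-to-$u$ path, forcing $g_R(x^S)=1$ and hence $\calF(S)=1$. The paper's proof is just a terser version of what you wrote.
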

\begin{proof}
Since these $L=1/\eps$ sets are pairwise disjoint and each has size $\eps k$,
$S$ has size $k$.
The second part follows from the second part of Lemma \ref{usefulproperties}
  and the definition of $g_R$.
\end{proof}

We would like to show in the next lemma that any set $S\subseteq W_u$ 
  that has size at most $k$ and~does not overlap with $\calA_u$ must
  have a small $\calF(S)$.
For this we need to specify the weight sequence $\{w_{\ell}\}$
  that we will use from  \cite{feldman2020one} (recall that $w_0=0$).
 \begin{definition}[Weight sequence]
 	\label{def:weight-sequence}
 	Define $\{\delta_{\ell}\}_{\ell\in [L]}$, $\{a_{\ell}\}_{\ell \in [L]}$ and the weight sequence $\{w_\ell\}_{\ell \in [L]}$ inductively as follow.
 	We set $\delta_{L} = 1$ and for $\ell = L-1, L-2, \ldots,1$, let 
 	\[
 	\delta_{\ell} = 1 + \left(\frac{1+ \sqrt{1 + 4/\delta_{\ell+1}}}{2}\right)\cdot \delta_{\ell+1}
 	\]
 	and
 	\[
 	a_{\ell} = \prod_{i =1}^{\ell-1}\left(\frac{\delta_i - 1}{\delta_{i+1}}\right) = \prod_{i=1}^{\ell-1}\frac{1}{1 - 1/\delta_{i}},
 	\]
 	the weight sequence $\{w_\ell\}_{\ell \in [L]}$ is defined as $w_{\ell} =  {a_{\ell}}/{\sum_{\ell \in [L]}{a_{\ell}}}$.
 \end{definition}

%
The proof of the following lemma is adapted from Lemma 5.3 in \cite{feldman2020one} with some generalizations, and the proof can be found in Appendix~\ref{sec:lower2-app}.
\begin{lemma}
	\label{lem:lower2-2}
	For any leaf $u\in U_L$  and $S\subseteq W_u$ with $|S|\le k$ and 
	$S\cap \calA_{u}=\emptyset$, 
	we have
	\[
	\mathcal{F}(S) \leq 0.5+O\left({\eps}\hspace{0.04cm}{\log^2(1/\eps)}\right).
	\]
\end{lemma}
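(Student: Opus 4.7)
The plan is to show $\calG(x^S) \le 1/2 + O(\eps\log^2(1/\eps))$; combined with the fact that $\eps|S|/k \le \eps$ in definition \eqref{eq:obj} of $\calF$, this yields the desired bound. The hypothesis $S\subseteq W_u$ and $S\cap\calA_u=\emptyset$ means $x^S$ is supported on the set $V^{*}$ consisting of (i) the $m_\ell-1$ nodes $\{(u_1,\ldots,u_{\ell-1},j): j\in[m_\ell],\, j\ne u_\ell\}$ for each $\ell\in[L-1]$, and (ii) the single leaf $u$. Each coordinate $x^S_v\in[0,1]$ and $\sum_{v\in V^{*}} x^S_v = |S|/(\eps k)\le 1/\eps=L$.

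Since $\calG$ is the multilinear extension of the monotone submodular function $S\mapsto \E_{\bR\sim\calD}[\mathbf{1}[\bR\cap S\ne\emptyset]]$ (a nonnegative mixture of coverage functions $f_R$), it is linear in each coordinate $x_v$ separately. A standard vertex argument on the polytope $\{x\in[0,1]^{V^{*}}: \sum x_v\le L\}$ then shows that $\calG(x^S)$ is maximized at an integer point, so it suffices to treat $x^S=\mathbf{1}_T$ for some $T\subseteq V^{*}$ with $|T|\le L$. For such integral $x^S$ one has $\calG(\mathbf{1}_T) = \Pr_{\bR\sim\calD}[\bR\cap T\ne\emptyset]$, so the problem reduces to upper-bounding this probability.

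To bound $\Pr[\bR\cap T\ne\emptyset]$, I would exploit the recursive, independent-subtree structure of $\Sample$. Let $t_\ell=|T\cap\{(u_1,\ldots,u_{\ell-1},j): j\in[m_\ell]\}|$ and let $E_\ell$ be the event that none of the proper ancestors $(u_1),\,(u_1,u_2),\,\ldots,\,(u_1,\ldots,u_{\ell-1})$ of $u$ lies in $\bR$. Conditioned on $E_\ell$, each of the $t_\ell$ siblings of $(u_1,\ldots,u_\ell)$ lying in $T$ is independently added to $\bR$ with probability $p_\ell$, while independently the subtree rooted at $(u_1,\ldots,u_\ell)$ is ``closed'' (i.e., $(u_1,\ldots,u_\ell)\in\bR$, killing all deeper contributions) with probability $p_\ell$, else $E_{\ell+1}$ holds. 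Setting $q_\ell=\Pr\bigl[\bR\cap T\cap\{\text{descendants of }(u_1,\ldots,u_{\ell-1})\}=\emptyset\mid E_\ell\bigr]$, this yields the recursion
\[
q_\ell \;=\; (1-p_\ell)^{t_\ell}\bigl(p_\ell + (1-p_\ell)\, q_{\ell+1}\bigr),\qquad q_{L+1}=1,
\]
and the quantity of interest is $1-q_1$.

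The main obstacle is the optimization step: showing $\max\{1-q_1 : \sum_\ell t_\ell\le L,\ t_\ell\in\mathbb{Z}_{\ge 0}\} \le 1/2+O(\eps\log^2(1/\eps))$. This is where the weight sequence from Definition~\ref{def:weight-sequence} is indispensable: its defining identity $(\delta_\ell-1)^2 = \delta_\ell\delta_{\ell+1}$ is engineered so that, at the uniform allocation $t_\ell=1$ for every $\ell\in[L]$, the recursion telescopes to $1-q_1 = 1/2 + O(\eps\log^2(1/\eps))$, tracking the asymptotic $p_\ell\approx 1/(2(L-\ell+1))$. To establish the optimality of the uniform allocation I would adapt the LP-duality-style argument of Lemma~5.3 in \cite{feldman2020one}: the sequence $w_1,\ldots,w_L$ serves as a dual certificate, and a first-order perturbation analysis of the recursion shows that shifting $t_\ell$-budget between levels cannot improve $1-q_1$ by more than $O(\eps\log^2(1/\eps))$. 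The only genuine extension over the path setting of \cite{feldman2020one} is that $t_\ell$ may now exceed $1$ because our tree has up to $m_\ell-1$ siblings per level; however, the recursion depends on these only through the scalar $t_\ell$, so their argument extends with minor bookkeeping changes.
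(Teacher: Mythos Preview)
Your approach is correct and lands on the same optimization problem as the paper, but you reach it by a different reduction. The paper never rounds: it keeps the fractional vector $x^S$, computes $\calG(x^S)\le\tilde f(S)$ exactly (by partitioning on the level at which the spine node $(u_1,\dots,u_\ell)$ lands in $\bR$), and then passes to a concave upper bound via the elementary inequality $(1-as)\ge(1-a)^s$ for $a,s\in[0,1]$. This yields a concave function
\[
\hat F(z)=w_L+\sum_{\ell=1}^{L-1}w_\ell\Bigl(1-\prod_{i=1}^{\ell}(1-p_i)^{z_i}\Bigr),\qquad p_i=a_i/A_{\ge i},
\]
in the level-mass variables $z_i=\sum_j y^S_{i,j}$, and the bound follows from the first-order concavity inequality $\hat F(z)\le\hat F(\mathbf 1)+\langle\nabla\hat F(\mathbf 1),z-\mathbf 1\rangle$ together with the gradient estimates of \cite{feldman2020one} (their Claim~5.4). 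Your pipage-rounding step is a legitimate alternative to that algebraic inequality: since $\calG$ is the multilinear extension of a monotone submodular coverage function and the constraint set $\{x\in[0,1]^{V^*}:\sum x_v\le L\}$ is a uniform-matroid polytope, the maximum is indeed attained at an integral $\mathbf 1_T$. At such a point your recursion unfolds (condition on where the spine stops) to exactly $1-q_1=\hat F(t_1,\dots,t_{L-1})$ when $t_L=1$, and to something smaller when $t_L=0$; so the two routes meet at the same concave optimization, and the ``first-order perturbation/LP-duality'' step you allude to is precisely the paper's concavity-plus-gradient argument. What your route buys is that you never need the inequality $(1-as)\ge(1-a)^s$; what the paper's route buys is that it handles fractional $x^S$ directly without invoking pipage rounding and without the detour through the recursion. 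One small sharpening: at the uniform allocation the value is $1/2+O(\eps\log(1/\eps))$; the additional $\log(1/\eps)$ factor appears only in the gradient (perturbation) term.
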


As it will become clear soon at the beginning of the next subsection,
  our goal behind the family of hard streams is to have the dynamic algorithm solve repeatedly the question
  of finding an $S\subseteq W_u$ with $|S|\le k$
  and $S\cap \calA_u\ne \emptyset$, for a large number of leafs $u$ of $T$.
These questions are, however, only interesting after we shuffle 
  nodes of $T$ in the fashion to be described next.
  
A \emph{shuffling} $\pi$ of $T$ consists of a bijection $\pi_u:[m_{\ell+1}]\rightarrow
  [m_{\ell+1}]$ for every $u\in U_\ell$, $\ell\in [0:L-1]$.
We use $\pi$ to shuffle each node $u\in U$ to $\pi(u)$ as follows: $\pi(\gamma)=\gamma$; 
  for each $u$ of depth $\ell\in [L]$, set
    $$\pi(u)=\big(u_1,\ldots,u_{\ell-1},\pi_{u_1,\ldots,u_{\ell-1}}(u_\ell)\big).$$
A shuffling $\pi$ induces a bijection $\rho_\pi:V\rightarrow V$:
For each element $a_{u,i}$ for some $u\in U$ and $i\in [w]$, we set $\rho_\pi(a_{u,i})=
  a_{\pi^{-1}(u),i}$.
Finally we define for each shuffling $\pi$ of $T$,
$$
\calF_\pi(S):=\calF\big(\rho_\pi(S)\big). 
$$
It is clear that Lemma \ref{lem:basicbasic} holds for $\calF_\pi$ for any shuffling $\pi$.

Given a leaf $u\in U_L$ of $T$ and a shuffling $\pi$, let 
$$
\calA_u^\pi:= A_{u_1,\ldots,u_{L-2},\pi_{u_1,\ldots,u_{L-2}}(u_{L-1})}
\cup \cdots A_{u_1,\pi_{u_1}(u_2)}\cup A_{\pi_\eps(u_1)}.
$$
We get the following corollary of Lemma \ref{lem:prop2} and Lemma \ref{lem:lower2-2}:

\begin{corollary}\label{simplecoro1}
	For any shuffling $\pi$ of $T$, we have
\begin{itemize}
\item For any leaf $u\in U_L$, we have  $|\calA^\pi_u\cup A_u|=k$ and $\calF_\pi(\calA^\pi_u\cup A_u)=1$.

\item For any leaf $u$ and $S\subseteq W_u$ with $|S|\le k$ and $S\cap \calA_u^\pi=\emptyset$, we have
\begin{equation}\label{heheeqeq}
\calF_\pi(S)\le 0.5+O\left( \eps\hspace{0.04cm} {\log^2(1/\eps)}\right).
\end{equation}
\end{itemize}
\end{corollary}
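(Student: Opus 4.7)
The plan is to reduce both parts to Lemma~\ref{lem:prop2} and Lemma~\ref{lem:lower2-2} via the bijection $\rho_\pi$ on $V$. Since $\calF_\pi(S) = \calF(\rho_\pi(S))$ by definition and $\rho_\pi$ preserves cardinalities, it suffices to understand how $\rho_\pi$ transports the three sets $\calA_u^\pi$, $A_u$, and $W_u$ back to the unshuffled setting where Lemmas~\ref{lem:prop2} and~\ref{lem:lower2-2} directly apply.

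First I would compute $\rho_\pi$ on each individual $A_v$. For $v = (v_1,\ldots,v_\ell)$ we have $\rho_\pi(A_v) = A_{\pi^{-1}(v)}$, and by the definition of $\pi$ only the last coordinate moves: $\pi^{-1}(v_1,\ldots,v_\ell) = (v_1,\ldots,v_{\ell-1}, \pi_{v_1,\ldots,v_{\ell-1}}^{-1}(v_\ell))$. Plugging in $v = (u_1,\ldots,u_{\ell-1}, \pi_{u_1,\ldots,u_{\ell-1}}(u_\ell))$ gives $\pi^{-1}(v) = (u_1,\ldots,u_\ell)$, so $\rho_\pi$ maps the depth-$\ell$ summand of $\calA_u^\pi$ onto $A_{u_1,\ldots,u_\ell}$; taking the union over $\ell \in [L-1]$ yields $\rho_\pi(\calA_u^\pi) = \calA_u$. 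For $A_u$ with $u \in U_L$, the last coordinate lies in $[m_L] = \{1\}$, so $\pi_{u_1,\ldots,u_{L-1}}$ is trivially the identity and $\rho_\pi(A_u) = A_u$. Finally, for each depth $\ell$ and fixed prefix $(u_1,\ldots,u_{\ell-1})$, the union $\bigcup_{j \in [m_\ell]} A_{u_1,\ldots,u_{\ell-1},j}$ is invariant under $\rho_\pi$ because $\pi_{u_1,\ldots,u_{\ell-1}}^{-1}$ merely permutes the index $j \in [m_\ell]$; thus $\rho_\pi(W_u) = W_u$.

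With these identities in hand, part~1 follows immediately: $\rho_\pi(\calA_u^\pi \cup A_u) = \calA_u \cup A_u$, and Lemma~\ref{lem:prop2} gives both $|\calA_u^\pi \cup A_u| = k$ and $\calF_\pi(\calA_u^\pi \cup A_u) = \calF(\calA_u \cup A_u) = 1$. For part~2, given $S \subseteq W_u$ with $|S| \le k$ and $S \cap \calA_u^\pi = \emptyset$, the bijection $\rho_\pi$ preserves each of these three conditions with $\calA_u^\pi$ replaced by $\calA_u$; Lemma~\ref{lem:lower2-2} applied to $\rho_\pi(S)$ then delivers $\calF_\pi(S) = \calF(\rho_\pi(S)) \le 0.5 + O(\eps\log^2(1/\eps))$.

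This proof is essentially bookkeeping, so I do not expect a real obstacle. The one step worth double-checking is the interplay between two conventions: the shuffling $\pi$ acts on a node by permuting only its last coordinate (governed by the bijection indexed by the parent's address), whereas $\calA_u^\pi$ is defined by applying those coordinate-wise permutations to the successive prefixes $(u_1), (u_1,u_2), \ldots, (u_1,\ldots,u_{L-1})$ along the fixed path to $u$. Once one notes that $\pi^{-1}$ is the pointwise inverse of $\pi$ on each of these nodes, the three transport identities above are immediate and the corollary reduces mechanically to the two cited lemmas.
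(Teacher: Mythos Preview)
Your proposal is correct and follows exactly the same approach as the paper: both parts are reduced to Lemmas~\ref{lem:prop2} and~\ref{lem:lower2-2} via the transport identities $\rho_\pi(\calA_u^\pi\cup A_u)=\calA_u\cup A_u$, $\rho_\pi(S)\subseteq W_u$, and $\rho_\pi(S)\cap \calA_u=\emptyset$. You have simply filled in the explicit verification of these identities (in particular the observations that $m_L=1$ forces $\rho_\pi(A_u)=A_u$ and that each level of $W_u$ is permuted within itself), which the paper leaves implicit.
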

\begin{proof}
The first part follows from Lemma \ref{lem:prop2} and the observation that 
  $\rho_\pi(\calA_u^\pi\cup A_u)=\calA_u\cup A_u$.
The second part follows from $\rho_\pi(S)\subseteq W_u$,
  $\rho_\pi(S)\cap \calA_u= \emptyset$  and Lemma \ref{lem:lower2-2}.
\end{proof}

We need a corollary of Lemma \ref{hhhhheeeee1}.
Similar to Lemma \ref{hehehehe3}, it captures
  the minimal information needed about $\pi$ to evaluate $\calF_\pi$ at a given set $S\subseteq V$:
  
\begin{corollary}
Let $S\subseteq V$ and let $\pi,\pi'$ be two shufflings of $T$. 
We have $\mathcal{F}_{ \pi}(S)=\mathcal{F}_{ \pi'}(S)$ when 
  the following condition holds: 
For every two nodes $u,v$ of $T$ such that $S\cap A_u$ and $S\cap A_v$
  are nonempty, the LCA of $\pi^{-1}(u)$ and $\pi^{-1}(v)$
  is at the same depth as the LCA of $\pi'^{-1}(u)$ and $\pi'^{-1}(v)$, both in $T$.
\end{corollary}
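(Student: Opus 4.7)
The plan is to reduce the corollary to Lemma \ref{hhhhheeeee1}. Since $\calF_\pi(S)=\min\{\calG(x^{\rho_\pi(S)})+\eps|S|/k,\hspace{0.03cm}1\}$ and $|S|$ does not depend on the shuffling, it suffices to prove $\calG(x^{\rho_\pi(S)})=\calG(x^{\rho_{\pi'}(S)})$. The only real work is (i) to identify the right bijection $\tau$ to feed into Lemma \ref{hhhhheeeee1} and (ii) to check that the non-zero coordinates of the two vectors $x^{\rho_\pi(S)}$ and $x^{\rho_{\pi'}(S)}$ match under $\tau$, so that the corresponding $g_R$ values agree pointwise.

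First I would locate the supports of the two coordinate vectors. Let $P:=\{u\in U:S\cap A_u\ne\emptyset\}$. From $\rho_\pi(a_{u,i})=a_{\pi^{-1}(u),i}$ one gets $|\rho_\pi(S)\cap A_v|=|S\cap A_{\pi(v)}|$, so
\[
x^{\rho_\pi(S)}_v=\frac{|S\cap A_{\pi(v)}|}{\eps k},
\]
which is nonzero exactly on $\pi^{-1}(P)$; the analogous statement holds for $\pi'$ with support $\pi'^{-1}(P)$. I would then define the bijection $\tau:\pi^{-1}(P)\to\pi'^{-1}(P)$ by $\tau(\pi^{-1}(u)):=\pi'^{-1}(u)$ for every $u\in P$. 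By construction, $\pi'(\tau(v))=\pi(v)$, so the nonzero coordinates of $x^{\rho_\pi(S)}$ at $v$ coincide with those of $x^{\rho_{\pi'}(S)}$ at $\tau(v)$.

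Next I would verify the two hypotheses of Lemma \ref{hhhhheeeee1} with $P_1=\pi^{-1}(P)$, $P_2=\pi'^{-1}(P)$, and this $\tau$. Condition~(1) is automatic because any shuffling preserves depth. Condition~(2) reads: for $v_1,v_2\in \pi^{-1}(P)$, the LCA depth of $v_1,v_2$ equals the LCA depth of $\tau(v_1),\tau(v_2)$. Writing $v_i=\pi^{-1}(u_i)$ with $u_i\in P$ and $\tau(v_i)=\pi'^{-1}(u_i)$, this is precisely the hypothesis of the corollary applied to $u_1,u_2\in P$. Hence Lemma \ref{hhhhheeeee1} gives that the random set $\pi'^{-1}(P)\cap\bR$ has the same distribution as $\tau(\pi^{-1}(P)\cap\bR)$ when $\bR\sim\calD$.

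Finally I would combine these two pieces inside the expectation defining $\calG$. For any deterministic $R\subseteq U$, since coordinates of $x^{\rho_\pi(S)}$ vanish outside $\pi^{-1}(P)$, only factors with $v\in R\cap\pi^{-1}(P)$ contribute to $g_R(x^{\rho_\pi(S)})=1-\prod_{v\in R}(1-x^{\rho_\pi(S)}_v)$; by the coordinate-matching observation above, each such factor agrees with the corresponding factor at $\tau(v)$ in $g_{\tau(R\cap\pi^{-1}(P))}(x^{\rho_{\pi'}(S)})$. Taking expectations over $\bR\sim\calD$ and invoking the distributional identity just established yields $\calG(x^{\rho_\pi(S)})=\calG(x^{\rho_{\pi'}(S)})$, which finishes the argument. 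The only mildly delicate step is checking condition~(2) of Lemma \ref{hhhhheeeee1}, but this is a direct translation of the stated hypothesis; everything else is an unwinding of definitions.
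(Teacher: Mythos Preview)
Your proposal is correct and follows essentially the same route as the paper: the paper defines $P_1=\{\pi^{-1}(u):S\cap A_u\ne\emptyset\}$ and $P_2$ analogously for $\pi'$, invokes Lemma~\ref{hhhhheeeee1} under the natural bijection, and then says the conclusion follows from the definition of $\calF$. Your write-up simply spells out the computations (the identity $x^{\rho_\pi(S)}_v=|S\cap A_{\pi(v)}|/(\eps k)$, the explicit $\tau=\pi'^{-1}\circ\pi$, and the verification that only factors indexed by $\bR\cap\pi^{-1}(P)$ survive in $g_{\bR}$) that the paper leaves implicit.
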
 
\begin{proof}
Let $P_1=\{\pi^{-1}(u):\text{$u\in U$ and $S\cap A_u\ne \emptyset$}\}$ and $P_2$
  be be the set of nodes defined similarly using $\pi'$.
It follows from Lemma \ref{hhhhheeeee1} that $\bR\cap P_1$ is the same
  as $\bR\cap P_2$, $\bR\sim \calD$, under the natural bijection between $P_1$ and $P_2$.
The statement follows from the definition of $\calF$.
\end{proof}

\begin{remark}\label{heheremark}
To evaluate $\calF_\pi(S)$, it suffices to know the LCA of $\pi^{-1}(u)$
  and $\pi^{-1}(v)$ for every $u,v$ with
  $S\cap A_u\ne \emptyset$ and $S\cap A_v\ne \emptyset$. 
The  LCA of $\pi^{-1}(u)$ and $\pi^{-1}(v)$ can be determined as follows.
\begin{flushleft}\begin{enumerate}
\item First consider the case when $u$ is a prefix of $v$ or $v$ is a prefix of $u$.
Let $u=(u_1,\ldots,u_\ell)$ and $v=(u_1,\ldots,u_\ell,v_{\ell+1},\ldots)$. (The case when $v$
  is a prefix is similar.)
Then the depth of LCA of $\pi (u)$ and $\pi^{-1}(v)$ is either $\ell$ if $\pi _{u_1,\ldots,u_{\ell-1}}(u_\ell)=u_\ell$,
  or $\ell-1$ otherwise.
  
\item Assume that $u=(u_1,\ldots,u_\ell,u_{\ell+1},\ldots)$ and $v=(v_1,\ldots,v_\ell,v_{\ell+1},\ldots)$ with $u_1=v_1,\ldots,u_\ell=v_\ell$ but $u_{\ell+1}\ne v_{\ell+1}$.
We have three subcases. If both $u$ and $v$ have length strictly longer than $\ell+1$, then the
  depth of LCA of of $\pi^{-1}(u)$
  and $\pi^{-1}(v)$ is $\ell$.
If both $u$ and $v$ have length $\ell+1$, then the depth of LCA of $\pi^{-1}(u)$
  and $\pi^{-1}(v)$ is also $\ell$.
(So in these two subcases we we do not need to know anything about $\pi$.)
Finally, if $u$ has length $\ell+1$ and $v$ has length longer than $\ell+1$, then
  the depth of LCA of $\pi^{-1}(u)$
  and $\pi^{-1}(v)$ is $\ell+1$ if $\pi_{v_1,\ldots,v_\ell}(v_{\ell+1})=u_{\ell+1}$,
  and is $\ell$ otherwise.
The case when $u$ has length longer than $\ell+1$ and $v$ has length $\ell+1$ is similar.
\end{enumerate}\end{flushleft}
To summarize, to determine the depth we only
  need to know whether a particular entry of $\pi$ is equal to a certain value or not.
Moreover, the entry is either $\pi_{u_1,\ldots,u_{\ell-1}}(u_{\ell})$
  or $\pi_{v_1,\ldots,v_{\ell-1}}(v_\ell)$ for some $\ell$.
\end{remark}

\subsection{Lower bound for dynamic submodular maximization}
\label{sec:lower2-lower}

\noindent\textbf{Choices of parameters.} 
Let $L=1/\eps$ be a positive integer. 
Let $k$ be a positive integer such that $k^2\le n^\eps$ (as otherwise the 
  lower bound we aim for becomes trivial) and $w=\eps k$ is a positive integer. 
Let $n$ be the length of dynamic streams and $d=n^\eps$. 
Set $m_L=1$ and 
  $$m_{\ell} = \frac{ n^{(L-\ell + 1)\eps}}{2k},\quad\text{for each $\ell\in [L-1]$}.$$ 

\noindent\textbf{Hard streams.}
The ground set is $V$, which is known to the algorithm; the monotone submodular function 
  $\calF^*:2^V\rightarrow \mathbb{R}$ is $\calF_\pi$,  where $\pi$ is a shuffling
  of $T$, which is unknown to the algorithm. 
The stream is constructed by running \textsc{Tranverse} in Algorithm \ref{algo:dfs},
  and is independent of $\pi$.
It can be viewed as a DFS over the tree $T$, starting at the root,
  except that every time it reaches a node $v$,
  it inserts all children of $v$ but only explores the first $d$ children,
  and deletes all children of $v$ at the end of the exploration. 

We first bound the total number of operations by $n$. 
For each $\ell\in [0:L-2]$, \textsc{Tranverse} visits $d^\ell$ nodes and
  creates $2m_{\ell+1}\cdot w$ operations for each of them.
\textsc{Tranverse} visits $d^{L-1}$ nodes at depth $L-1$ and creates 
  $2w$ operations for each of them.
Hence,  the total number of operations  is
\[
\sum_{\ell=0}^{L-2} d^\ell\cdot m_{\ell+1}\cdot 2w +d^{L-1}\cdot 2w\le   \sum_{\ell=0}^{L-2}
\eps n+ \eps n\le n.
\]



To gain some intuition behind the stream,
  we note that leafs of $T$ that appear in the stream~are exactly those in $U_L^*:=[d]^{L-1}\times \{1\}$,
  and they appear in the stream under the lexicographical~order (which we will denote by $\prec$).
For each such leaf $u\in U_L^*$, at the time when the set $A_u$ was~inserted, the 
  current set of elements is $W_u$.
Inspired by Corollary \ref{simplecoro1} 
  we will consider the following simplified $d^{L-1}$-stage dynamic problem, 
  where  stages are indexed using leaves in $U^*_L$ under the lexicographical order.
During the $u$-th stage, an algorithm can query $\calF^*$
  about any subset $S\subseteq \cup_{u'\in U_L^*:u'\preceq u} W_{u'}$~and 
  can choose to start the next stage at any time by returning an $S_u\subseteq W_{u }$
  of size at most $k$.~We say an algorithm succeeds if $\smash{\calF^*(S_u)\ge 0.5+\tilde{O}(\eps)}$ as
   on the RHS of (\ref{heheeqeq}) for every stage.
We show below that any randomized algorithm that succeeds with probability at least $2/3$ must have
  total query complexity $\Omega(n^{1+\eps}/k^3)$.
It follows from Corollary \ref{simplecoro1} that any randomized algorithm for dynamic submodular maximization
  with an approximation guarantee of $0.5+\tilde{\Omega}(\eps)$ must have amortized query
    complexity $\Omega(n^\eps/k^3)$.


\begin{algorithm}[!t]
	\caption{\textsc{Tranverse}}
	\label{algo:dfs}
	\begin{algorithmic}[1]
	\State \textbf{Input: }A node $u$ of $T$ at depth $\ell\in [0:L-1]$
 	\If{$\ell=L-1$}
	 	\State Insert all elements of $A_{u_1,\ldots,u_{L-1},1}$ and then delete them
	\Else
		
			\State Insert all elements in $A_{u_1,\cdots, u_{\ell}, i}$ for each $i \in [m_{\ell+1}]$
			\For{$i$ from $1$ to $d$}
			\State  Call \textsc{Tranverse}  on $(u_1,\ldots, u_\ell,i)$
			 \EndFor
			\State Delete all elements in $A_{u_1,\cdots, u_{\ell}, i}$ for each $i \in [m_{\ell+1}]$
		\EndIf
	\end{algorithmic}
\end{algorithm}

\begin{proof}[Proof of Theorem~\ref{thm:lower2}]
Let $\bpi$ be a shuffling of $T$ drawn uniformly at random (i.e., every bijection~in $\bpi$
  is drawn independently and uniformly).
Consider any deterministic algorithm $\ALGG$ that succeeds with probability $2/3$  
  on the $d^{L-1}$-stage dynamic problem described above with $\calF^*=\calF_{\bpi}$.
By Lemma \ref{lem:basicbasic} we assume without loss of generality that 
  $\ALGG$ only queries sets of size at most $k/\eps$.
When $\ALGG$ succeeds on $\calF^*=\calF_\pi$ for some shuffling $\pi$,
  we have from Corollary \ref{simplecoro1} that the $S_u$ it outputs during the $u$-th stage
 must satisfy
$ 
\smash{S_u\cap \calA_{u }^\pi\ne \emptyset}
$, which implies (using the definition of $\calA_u^\pi$) that 
\begin{equation}\label{hehej}
S_u\cap A_{u_1,\ldots,u _\ell ,\pi_{u_1,\ldots,u_\ell}(u_{\ell+1})}\ne \emptyset
\end{equation}
for some $\ell \in [0: L-2]$.
By an averaging argument, we have that there exists an $\ell\in [0:L-2]$ such that
  with probability at least $2/(3L)$ over $\bpi$, at least $(1/L)$-fraction of $S_u$ returned by 
  $\ALGG$ satisfy (\ref{hehej}) for this $\ell$.
Fix such an $\ell$ and this inspires us to introduce the following simple game.

In the game there are $d^\ell$ hidden bijections $\pi_{v_1,\ldots,v_\ell}:[m_{\ell+1}]\rightarrow
  [m_{\ell+1}]$, for each $v_1,\ldots,v_\ell\in [d]$.
The game consists of $d^{\ell+1}$ stages;  each stage is indexed by a $v=(v_1,\ldots,v_{\ell+1})\in 
  [d]^{\ell+1}$ and ordered by the lexicographical order.
During the $v$-th stage of the game, 
  an algorithm can send a number $i\in [m_{\ell+1}]$ to the oracle and the latter
  reveals if $\pi_{v_1,\ldots,v_\ell}(v_{\ell+1}) =i$.
We say the algorithm wins~the $v$-th stage if it queries an $i$ that matches 
  $\pi_{v_1,\ldots,v_\ell}(v_{\ell+1})$ during the $v$-th stage.
At any time it can choose to give up and move
  forward to the next stage, in which case $\pi_{v_1,\ldots,v_\ell}(v_{\ell+1})$ is revealed to the algorithm. 
We say an algorithm succeeds if it wins at least $(1/L)$-fraction of the $d^{\ell+1}$ stages.

We prove the following lower bound for this game in Appendix~\ref{sec:lower2-app}:

\begin{claim}\label{blablaclaim}
When the hidden bijections are drawn independently and uniformly, any
  deterministic algorithm that succeeds with probability at least $2/(3L)$   
  has total query complexity $\Omega(m_{\ell+1}d^{\ell+1})$.
\end{claim}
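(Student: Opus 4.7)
The plan is to prove Claim~\ref{blablaclaim} by a direct probabilistic counting argument. The $N:=d^{\ell+1}$ stages partition into $d^\ell$ ``groups,'' one per prefix $(v_1,\ldots,v_\ell)\in [d]^\ell$; the $d$ sub-stages within a single group all share the hidden bijection $\pi_{v_1,\ldots,v_\ell}$, while bijections across distinct prefixes are mutually independent under the uniform prior. The key observation is that each sub-stage terminates by revealing the true value $\pi_{v_1,\ldots,v_\ell}(v_{\ell+1})$ to the algorithm, either through a ``matched'' response or through the reveal upon give-up. Therefore, at the start of the sub-stage with $v_{\ell+1}=j$, the algorithm's entire transcript pins down exactly the previous $j-1$ values $\pi_{v_1,\ldots,v_\ell}(1),\ldots,\pi_{v_1,\ldots,v_\ell}(j-1)$ and contains no additional information about $\pi_{v_1,\ldots,v_\ell}(j)$. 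Conditional on this transcript, $\pi_{v_1,\ldots,v_\ell}(j)$ is uniformly distributed over the set of $M_v:=m_{\ell+1}-(j-1)$ values not yet assigned. The parameter choice guarantees $d\le m_{\ell+1}/2$ for every $\ell\le L-2$ (this follows from $k^2\le n^\eps$, which gives $m_{\ell+1}/d\ge n^\eps/(2k)\ge n^{\eps/2}/2$ for $n$ large), so $M_v\ge m_{\ell+1}/2$ throughout.

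Let $Q_v$ denote the queries the algorithm makes during stage $v$ and $p_v:=\Pr[\text{win stage }v]$. A standard ``biased coupon-collector''-style calculation within the stage shows that $p_v\le 4\,\E[Q_v\mid H_v]/m_{\ell+1}$, where $H_v$ denotes the transcript before stage $v$: with $Q_v$ distinct queries against a value uniform over at least $m_{\ell+1}/2$ candidates, the win probability is at most $2Q_v/m_{\ell+1}$, and accounting for adaptive early stopping upon winning costs at most a constant factor. Summing over all stages and taking the outer expectation gives
\[
\E[W]\ =\ \sum_{v\in [d]^{\ell+1}} p_v\ \le\ \frac{4\,\E[Q]}{m_{\ell+1}},
\]
where $W$ is the total number of won stages and $Q:=\sum_v Q_v$ is the total query count.

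Finally, the algorithm succeeds iff $W\ge N/L$; since $W\le N$ deterministically, Markov's inequality gives $\Pr[W\ge N/L]\le L\,\E[W]/N$. Combining this with the assumed success probability of $2/(3L)$ yields
\[
\frac{2}{3L}\ \le\ \frac{L\,\E[W]}{N}\ \le\ \frac{4L^2\,\E[Q]}{N\,m_{\ell+1}},
\]
which rearranges to $\E[Q]\ge N\,m_{\ell+1}/(6L^2)=\Omega(m_{\ell+1}d^{\ell+1})$, absorbing the constant $L=1/\eps=O(1)$. Hence some realization of $\bpi$ forces the deterministic algorithm to use at least this many queries, proving the claim. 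The only real subtlety---and the step to be written out most carefully---is the ``conditionally uniform'' statement in the first paragraph: one must verify by induction on stage order that the transcript before stage $v$ encodes exactly the prior sub-stage values within the same group and nothing else about $\pi_{v_1,\ldots,v_\ell}(v_{\ell+1})$, which rests on the independence of bijections across prefixes together with the fact that the only oracle responses inside a sub-stage are ``not matched''s (ruling out queried candidates) and, at termination, the reveal.
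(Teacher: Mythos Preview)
Your argument is correct, and it is more elementary than the paper's. Both proofs rest on the same key observation you isolate at the end: at the start of stage $v=(v_1,\ldots,v_\ell,j)$, the transcript encodes exactly $\pi_{v_1,\ldots,v_\ell}(1),\ldots,\pi_{v_1,\ldots,v_\ell}(j-1)$ and nothing further, so the target is uniform over $M_v\ge m_{\ell+1}/2$ candidates. From there, the paper thresholds on ``cheap'' wins, setting $Y_v=\mathbf{1}\{\text{stage $v$ is won with at most }m_{\ell+1}/(16L)\text{ queries}\}$, bounds $\E[Y_v\mid\text{past}]\le 1/(4L)$, and applies Azuma--Hoeffding to conclude that with high probability fewer than $d^{\ell+1}/(2L)$ stages are won cheaply; combined with the success event this forces $\Omega(d^{\ell+1}/L)$ expensive wins and hence the query bound. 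You instead run a pure first-moment argument: the inequality $\Pr[\text{win stage }v\mid H_v]\le 4\,\E[Q_v\mid H_v]/m_{\ell+1}$ (which you state a bit loosely but which follows from $\E[Q_v\mid H_v]\ge T/2$ and $\Pr[\text{win}\mid H_v]=T/M_v$ when the algorithm makes $T$ distinct queries before giving up), summed and combined with Markov on $W$, already gives $\E[Q]\ge d^{\ell+1}m_{\ell+1}/(6L^2)$ and hence the worst-case bound. Your route avoids concentration entirely and is both shorter and cleaner; the paper's route would yield a stronger ``with high probability'' statement about the query count, but that is not needed for the claim.
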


To finish the proof, we show that $\ALGG$ can be used to play the game as follows:
\begin{flushleft}\begin{enumerate}
\item We start by drawing a random bijection for every node in the tree $T$
  except for those at $(v_1,\ldots,v_\ell)\in [d]^\ell$.
Let $\pi$ be the shuffling when they are combined with hidden 
  bijections $\pi_{v_1,\ldots,v_\ell}$ in the game.
We simulate $\ALGG$ on $\calF_\pi$ over the stream  
  \textsc{Tranverse} and maintain\\ the following invariant.
During the $v$-th stage of the game, with $v=(v_1,\ldots,v_{\ell+1})\in [d]^{\ell+1}$,
  we simulate $\ALGG$ through its stages for leaves $u\in U_L^*$ that 
  have $v$ as a prefix 
  and assume\\ that we already know 
  $\smash{\pi_{w_1 ,\ldots,w_\ell }(w_{\ell+1} )}$   
  for all $\smash{w=(w_1 ,\ldots,w_{\ell+1} )\prec v}$ and $w\in [d]^{\ell+1}$.

\item During the $u$-th stage of the simulation of $\ALGG$ for some leaf $u\in U_L^*$, 
  we are in the $v$-th stage of the game with $v=(u_1,\ldots,u_{\ell+1})\in [d]^{\ell+1}$.
For each query $S\subseteq \cup_{u'\in U_L^*:u'\preceq u}{W_{u'}}$ (of size at most $k/\eps$) made by $\ALGG$,
  it follows from Remark \ref{heheremark} that, to evaluate $S$ at $\calF_\pi$,
  we only need to know the depth of LCA of
  $\pi^{-1}(u'')$ and $\pi^{-1}(v'')$ in $T$ for no more than $(k/\eps)^2$ many pairs of $u'',v''$ 
  with $S\cap A_{u''}\ne \emptyset$
  and $S\cap A_{v''}\ne \emptyset$.
For each such pair, it follows from Remark \ref{heheremark} again that either we know the answer already or we need to compare
  $\smash{\pi_{u_1'',\ldots,u_\ell''}(u_{\ell+1}'')}$ or $\smash{\pi_{v_1'',\ldots,v_\ell''}(v_{\ell+1}'')}$ with a 
  certain value.
Given that $u''\preceq v$ and $v''\preceq v$,
  either the answer is already known or we just need to make a query to the game oracle.
So we only need to make at most $(k/\eps)^2$ queries to continue the simulation of $\ALGG$.
At the end of the $u$-th stage of $\ALGG$, let $S_u$ be the set that $\ALGG$ returns.
For every $i$ such that $S_u\cap A_{u_1,\ldots,u_\ell,i}\ne \emptyset$
   (note that there are at most $k$ such $i$ given that $|S|\le k$),
   we query $i$ on the game oracle.
Note that we would have won the $v$-th stage of the game by now if  (\ref{hehej}) holds  
  for $S_u$.
We then continue to simulate $\ALGG$ on the next stage of the dynamic problem.
If the next stage of $\ALGG$ is about a new leaf $u$ with 
  $v\prec (u_1,\ldots,u_{\ell+1})$, then we also move to the next stage in the game.
\end{enumerate}\end{flushleft}
It is clear from the simulation that for any $\pi$, if $\ALGG$ running on 
  $\calF_\pi$ satisfies (\ref{hehej}) for at least $(1/L)$-fraction of leaves in $U_L^*$,
  then we win the game when $\pi$ is the shuffling we get by combining our own
  random samples at the beginning with hidden bijections in the game.
Using the promise about $\ALGG$ at the beginning, we get a randomized algorithm that succeeds in the game with probability
  at least $2/(3L)$ when the hidden bijections are drawn independently and uniformly at random.
On the other hand, if the query complexity of $\ALGG$ is $q$, then our simulation uses
$$
q\cdot \left(\frac{k}{\eps}\right)^2+d^{L-1}\cdot k
$$
Combining with Claim \ref{blablaclaim} we have $q=\Omega(n^{1+\eps}/k^3)$.
\end{proof}

\section{Insertion-only streams under a cardinality constraint}
\label{sec:insert-cardinality}

We consider insertion-only streams and give a deterministic
  $(1-1/e -\eps)$-approximation algorithm with $O(\log (k/\eps) /\eps^2)$ amortized query complexity. As discussed in Remark \ref{detailremark}, our algorithm does not need to know the 
  ground set $V$ or the number of rounds $n$ at the beginning. 

\thmupperone*


To prove Theorem \ref{thm:insert-cardinality}, we give a deterministic algorithm with the following performance guarantees.
We 
   follow standard arguments to finish the proof of Theorem \ref{thm:insert-cardinality} in Appendix \ref{sec:insert-cardinality-app}.
\begin{lemma}
There is a deterministic algorithm that
  satisfies the following performance guarantees.
Given a positive integer $k$, $\eps>0$ and $\OPT>0$,
  the algorithm runs on an insertion-only stream and outputs
  a set $S_t\subseteq V_t$ of size
  at most $k$ at the end of each round $t$ such that 
  (1) $S_1\subseteq S_2\subseteq \cdots$ and (2) when $\OPT_t\ge \OPT$ for the first time, 
  $S_t$ must satisfy $f(S_t)\ge (1-1/e-\eps)\OPT$. 
The amortized query complexity of the algorithm is $O(1/\eps)$.
\end{lemma}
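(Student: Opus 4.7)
The plan is to implement an online (threshold) variant of the offline greedy and use a bucket data structure so that finding an element whose marginal still exceeds the current threshold costs only $O(1/\eps)$ amortized queries per arrival. Let $\tau(S):=(1-\eps)(\OPT-f(S))/k$ denote the current threshold; the algorithm keeps a set $S$ that is only grown (so that $S_1\subseteq S_2\subseteq \cdots$ automatically holds) and greedily adds any element $e$ for which $f_S(e)\ge \tau(S)$, recomputing $\tau(S)$ after each addition. For the offline approximation analysis we use the standard telescoping: if $|S|$ eventually reaches $k$, then $\OPT-f(S)$ shrinks by at least a factor of $1-(1-\eps)/k$ per addition, giving $f(S)\ge (1-e^{-(1-\eps)})\OPT\ge (1-1/e-\eps)\OPT$; and if at the moment $\OPT_t\ge \OPT$ we have $|S|<k$ with no element of $V_t\setminus S$ above threshold, then a submodular exchange argument against the optimum $O\subseteq V_t$ gives $\OPT\le f(S)+\sum_{e\in O\setminus S}f_S(e)<f(S)+(1-\eps)(\OPT-f(S))$, forcing $f(S)\ge \OPT$.

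To avoid rescanning $V_t$, partition the range $(0,\OPT/k]$ into $L:=\lceil 1/\eps\rceil$ additive buckets $B_0,\ldots,B_{L-1}$, where $B_i$ represents the interval $[i\cdot \eps\OPT/k,(i+1)\cdot \eps\OPT/k)$, together with a discard bin for marginals below $\eps\OPT/k$. The placement invariant is that once a query showed $f_S(e)\in B_i$, monotonicity of the marginals in $S$ (submodularity) guarantees $f_S(e)<(i+1)\eps\OPT/k$ forever after. On arrival of $e_t$ we spend one query on $f_S(e_t)$ and either insert it into $S$ (if $f_S(e_t)\ge \tau(S)$) or drop it into the appropriate bucket. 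Every addition to $S$ strictly decreases $\tau(S)$; we then run a cascade that, while some bucket $B_i$ with $(i+1)\eps\OPT/k>\tau(S)$ is non-empty, pops an arbitrary $e'\in B_i$, re-queries $f_S(e')$, and either adds it to $S$ or demotes it to the bucket matching the fresh marginal value. Because a demotion only happens when the re-queried marginal is strictly less than the upper endpoint of the current bucket, each demotion strictly lowers the element's bucket index.

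Correctness then reduces to verifying the invariant that, at the end of each round's processing, every non-discarded $e\in V_t\setminus S$ lies in a bucket $B_i$ with $(i+1)\eps\OPT/k\le \tau(S)$; this is preserved because any bucket violating the invariant would have been emptied by the cascade. Hence every element with $f_S(e)\ge \tau(S)$ is either in $S$ or has already been re-queried and added, so the offline threshold-greedy halting condition is enforced at every round, and combined with the analysis above we obtain $f(S_t)\ge (1-1/e-O(\eps))\OPT$ whenever $\OPT_t\ge \OPT$. The $O(\eps)$ slack (coming from the $\eps\OPT/k$ bucket width and the $(1-\eps)$ factor in the threshold) is absorbed into the final $\eps$ by rescaling the input parameter by a constant.

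For the amortized bound, the accounting is clean: each element is queried once upon insertion, and each subsequent re-query moves it to a strictly smaller bucket index, so it can be re-queried at most $L=O(1/\eps)$ times before landing in the discard bin. Summed over $n$ insertions, the total query count is $O(n/\eps)$, yielding the claimed $O(1/\eps)$ amortized complexity. The main technical obstacle I anticipate is \emph{(i)} arguing that the cascade always terminates and restores the invariant even though successive additions to $S$ can repeatedly lower $\tau(S)$ across multiple bucket boundaries within a single round, and \emph{(ii)} calibrating the bucket width, the $(1-\eps)$ factor in the threshold, and the rescaling of $\eps$ so that the combined slack in the final approximation guarantee is bounded by $\eps$ rather than a larger constant multiple of $\eps$.
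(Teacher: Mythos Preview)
Your proposal is correct and essentially matches the paper's proof: the paper uses the same additive bucket structure of width $\eps\OPT/k$, the same insert-then-cascade (\textsc{Revoke}) procedure that either adds an element to $S$ or demotes it to a strictly lower bucket, the same two-case approximation analysis ($|S|=k$ via telescoping, $|S|<k$ via the exchange bound against the optimum), and the same amortized charging argument. The only cosmetic difference is that the paper's threshold carries an additive slack $(\OPT-f(S))/k-\Delta$ with $\Delta=\eps\OPT/k$, whereas you use the multiplicative form $(1-\eps)(\OPT-f(S))/k$; both are equivalent up to rescaling $\eps$, and the two obstacles you flag (cascade termination and constant calibration) are indeed routine bookkeeping handled exactly as you outline.
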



\begin{algorithm}[!h]
	\caption{Dynamic submodular maximization with a cardinality constraint.}
	\label{algo:insert-cardinality}
	\begin{algorithmic}[1]
	  \Procedure{\textsc{Initialize}}{$k,\eps,\OPT$}
	  \State Initialize $S = \emptyset$, $\Delta = \eps\OPT/k$ and $B_{\ell} = \emptyset$ for each $\ell =0,1,\ldots, \lfloor 1/\eps \rfloor$	  \EndProcedure\\
	  
	  \Procedure{\textsc{Insert}}{$e$}

			\If{$f_{S}(e) \geq ( \OPT - f(S))/k - \Delta$ \textbf{and} $|S| < k$} \label{line:eva1}
				\State Update $S \leftarrow S\cup \{e\}$ and call \textsc{Revoke}   \label{line:update1}
			\Else
				\State Update $B_{\ell} \leftarrow B_{\ell} \cup \{e\}$ with $\ell = \lfloor  f_{S}(e)/\Delta \rfloor$\hfill \Comment{$\ell\le \lfloor 1/\eps\rfloor$ given $f_{S}(e)<\OPT/k$}
			\EndIf
	\EndProcedure\\
	
	\Procedure{\textsc{Revoke}}{}
	\State Let $r = \lfloor  (\OPT - f(S))/(k\Delta) \rfloor$. \label{line:revoke}
	\If{$|S|<k$ and there exists an index $\ell \geq r$ with $B_{\ell }\neq \emptyset$} 
	\State Let $\ell $ be any such index and let $e'$ be any element in $B_{\ell }$.
			\If{$f_{S}(e') \geq  (\OPT-f(S))/k-\Delta $}  \label{line:eva2}
			\State Update $S \leftarrow S\cup \{e'\}$ and $B_{\ell } \leftarrow B_{\ell } \backslash \{e'\}$\label{line:update2}
			\Else
			\State Update $B_{\ell } \leftarrow B_{\ell } \backslash \{e'\}$, $B_{\ell' } \leftarrow B_{\ell' }\cup \{e'\}$ with $\ell'  = \lfloor f_{S}(e')/\Delta \rfloor $ 
			\Comment{$0\le \ell'<r\le \ell$}
			\EndIf
			\State Go to Line~\ref{line:revoke}.
	\EndIf
	\EndProcedure
	\end{algorithmic}
\end{algorithm}

\begin{proof}
The algorithm is described in Algorithm~\ref{algo:insert-cardinality} with $\Delta:=\eps\OPT/(k)$.
We first run \textsc{Initialize}  and then 
  run \textsc{Insert}$(e_t)$ when the $t$-th element $e_t$ is 
  inserted in the stream.
It outputs $S$ at the end of \textsc{Insert}$(e_t)$ as $S_t$.
It is clear that $S_1\subseteq S_2\subseteq \cdots$ given that 
  the algorithm only grows $S$ by adding elements to it.
 
 We prove the approximation guarantee of the algorithm. 
 Let $t$ be the first time $\OPT_t\ge$ $ \OPT$.
We first consider the case when $|S_t| = k$. 
Let $S_t=\{s_1,\ldots,s_k\}$ where $s_i$ is the $i$-th element inserted to $S$.
Let $T_i=\{s_1,\ldots,s_i\}$ for each $i\in [0:k]$ (so $T_k=S_t$ and $T_0=\emptyset$
  with $f(T_0)=0$).
Given that 
$$
f_{S}(e)\ge \frac{\OPT-f(S)}{k}-\Delta,
$$
whenever the algorithm adds an element $e$ to $S$ (line~\ref{line:update1} and line~\ref{line:update2}),
we have
	\[
	f_{T_i}(s_{i+1}) \geq \frac{\OPT - f(T_i)}{k} - \Delta
	\]
for each $i\in [k]$. By standard calculation, one has
	\begin{align*}
	\OPT - f(S_t) = \OPT-f(T_k) &=  \OPT - f(T_{k-1}) - f_{T_{k-1}}(s_k ) \\[0.8ex]
	&\leq    \OPT - f(T_{k-1}) - \frac{\OPT - f(T_{k-1})}{k} + \Delta\\
	&=   \left(1 - \frac{1}{k}\right)\big(\OPT - f(T_{k-1})\big) + \Delta\\
	&\hspace{1cm}\vdots   \\
	&\leq   \left(1 - \frac{1}{k}\right)^k \big(\OPT-f(T_0)\big) + k\Delta \\[0.5ex]
	&\le   \frac{1}{e} \cdot \OPT+\eps\cdot \OPT.
	\end{align*}
which implies that $f(S_t)\ge (1-1/e-\eps)\OPT$. 
	
	Next  consider the case when $|S_t| < k$. In this case we have that 
	  every element inserted is either in $S_t$ or in one of the buckets $B_\ell$.
	Let the optimal solution be $O_t = \{o_1, \ldots, o_k\}$ and let
$$
r=\left\lfloor \frac{\OPT-f(S_t)}{k\Delta}\right\rfloor.
$$ Then at the end of round $t$ (after \textsc{Insert}),
  we must have that every $o_i$ with $o_i\notin S_t$ lies in a bucket $B_\ell$
  with $\ell<r$ and thus, 
	\begin{align}
	\label{eq:cardinality}
	f_{S_t}(o_i)<(\ell+1)\Delta\le r\Delta \le
	 \frac{\OPT-f(S_t)}{k} .
	\end{align}
This is because the algorithm makes sure that at any time, any element $e'\in B_\ell$
  satisfies $f_{S}(e')< (\ell+1)\Delta$.
To see this is the case, we note that the inequality holds when $e'$ is added to $B_\ell$
  and that $S$ only grows over time.
	We conclude that $f(S_t) \geq (1 - \eps)\OPT$; otherwise, one has
	\[
	\OPT \leq f(O_t) \leq f(S_t) + \sum_{i \in[k]} f_{S_t}(o_i ) < f(S_t) + \sum_{i=1}^{k}\frac{\OPT - f(S_t)}{k}  = \OPT,
	\]
	a contradiction.
	The second step above follows from the submodularity and monotonicity, the third step follows from Eq.~\eqref{eq:cardinality} and $f(S_t)< (1 - \eps)\OPT$.
	
Finally we bound the amortized query complexity of the algorithm.
We charge the two queries made in the evaluation of $f_{S}(e)$ (line \ref{line:eva1} or \ref{line:eva2}) to $e$ and 
  show that the number of queries charged to $e$ is at most $O(1/\eps)$.
To see this, we note that $e$ is charged twice when it is just inserted.
Every time $e$ is charged during \textsc{Revoke}, either it is 
  added to $S$ so that it is never charged again, or it gets moved to a bucket
  with a strictly smaller $\ell$, which can only happen $O(1/\eps)$ many times.
\end{proof}


\section{Insertion-only stream: efficient algorithm for matroid constaints}
\label{sec:insert-matroid}

We present an efficient $(1 - 1/e -\eps)$-approximation algorithm under the general matroid constraint. 
We first give a $(1/2 -\eps)$-approximate  deterministic combinatorial algorithm (Section~\ref{sec:com}), we then embed it into the accelerated continuous greedy framework of \cite{badanidiyuru2014fast} to achieve  $(1 - 1/e - \eps)$ approximation (Section~\ref{sec:accelerate}).

\subsection{The combinatorial algorithm}
\label{sec:com}

Given $k$ as the rank of the matroid $\calM$, we define
  $L,R$ and $\calA$ as
\[
L = \left\lceil \frac{\log (k/\eps)}{\eps} \right\rceil, \quad
R=\left\lceil \frac{2\log (k/\eps)}{\eps^2}\right\rceil \quad\text{and}\quad
\mathcal{A} = \left\{(a_1, \cdots, a_{L})\in \mathbb{Z}_{\ge 0}^{L}: \sum_{\ell\in [L]}a_{\ell} \leq  R    \right\}.
\]
We note the size of $\mathcal{A}$ can be upper bounded by
\begin{align}
\label{eq:branch_size}
|\mathcal{A}|= \sum_{d=0}^{R} \binom{d+L-1}{L-1} \leq (R+1)\cdot \binom{R  + L-1}{L-1} 
\le (R+1)\left(\frac{2eR}{L}\right)^L
=k^{\tilde{O}(1/\eps)}.
\end{align}

\def\ALG2{\textsc{Prune-greedy}}
\def\tildef{\tilde{f}}

Both of our deterministic algorithm and randomized algorithm use a deterministic
  subroutine called \ALG2 described in Algorithm \ref{alg:alg2}.
The inputs of \ALG2 include $k$ as the rank of the underlying matroid $\calM$,
  a positive number $\OPT$, a tuple $a\in \calA$, as well as query access to both $f$ and $\calM$.
When running on an insertion-only stream, \ALG2 may decide to terminate
  at the end of a round and output a set $S$.
A complication due to the application of this subroutine in the randomized algorithm
  is that we will give it query access to a perturbed version of $f$:
We say $h:V\rightarrow \mathbb{R}$ is a \emph{$\kappa$-close} of $f$ if $h(S)=f(S)\pm \kappa$
  for every $S\subseteq V$.

We state its performance guarantees in the following lemma:

\begin{algorithm}[h]
	\caption{\textsc{Prune-greedy}}\label{alg:alg2}
	\label{algo:greedy}
	\begin{algorithmic}[1]
		 \Procedure{\textsc{Initialize}}{$k,\OPT,a,h,\calM$} \Comment{$a \in \mathcal{A}$}
		\State Initialize $S \leftarrow \emptyset$ and $c_{\ell} \leftarrow a_{\ell}\Delta $ for each $ \ell \in [L]$, where $$\Delta:=\frac{\eps^2\OPT}{\log (k/\eps)}.$$  
		\State Let $\ell^*$ be the smallest $\ell\in [L]$ with $c_{\ell}>0$
		 \EndProcedure\\
			
			  \Procedure{\textsc{Insert}}{$e$}
			  \If{$S \cup e \in \mathcal{M}$ and $h_{S }(e)  \geq (1+\eps)^{-\ell^*}\OPT$}\label{line:eva3}
			  \State Update $c_{\ell^*} \leftarrow c_{\ell^*} - h_{S }(e)$ and $S \leftarrow S \cup \{e\}$
			  \If{$c_{\ell^*}\le 0$}
			  	\State Call \textsc{Revoke}
			\EndIf
			  \EndIf
			  \EndProcedure\\
			  
			  	\Procedure{\textsc{Revoke}}{} 
			  \State Update $\ell^*$ to be the smallest $\ell$ with $c_{\ell} > 0$;
			  \textbf{Terminate} and return $S$ \textbf{if} no such $\ell$ exists
			  \For{each element $e_i$ inserted so far (in the order of insertions)}
			  \If{$S \cup e_i \in \mathcal{M}$ and $h_{S }(e_i) \geq (1+\eps)^{-\ell^*}\OPT$} \label{line:eva4}
			  \State Update $c_{\ell^*} \leftarrow c_{\ell^*} - h_{S }(e_i)$ and $S  \leftarrow S \cup \{e_i\}$, 
			  \EndIf
			  \EndFor
			  \State Go to line 16 \textbf{if} $c_\ell^*\le 0$ 
			  	\EndProcedure
	\end{algorithmic}
\end{algorithm}

\begin{lemma}
	\label{lem:matroid-approx}
There is a deterministic algorithm that, given a positive integer $k$, $\OPT>0$,~$a\in \calA$ and 
  query access to a matroid $\calM$ over $V$ of rank $k$ and a 
  function $h:V\rightarrow \mathbb{R}$ that is $\kappa$-close to 
  a nonnegative and monotone submodular function $f:V\rightarrow \mathbb{R}$, where $\kappa=\eps^3\OPT/k$. 
The~algorithm~runs on insertion-only streams
  with amortized query complexity $O(L)$ and has the following performance guarantee.
Given any insertion-only stream $e_1,\ldots,e_t$ such that 
  $\OPT\leq \OPT_{t} \leq$ $ (1+\eps)\OPT$,
  there exists an $a^*\in \calA$ such that 
  when given $\OPT$ and $a^*$ as input, the algorithm terminates before the end of round $t$ and outputs a feasible set 
  $S\subseteq V_t$ that satisfies
	\[
	f(S )\geq (1-O(\eps))f_{S }(O),\quad\text{for any set $O$ such that $O\in \calM$ and 
	$O\subseteq V_t$.}
	\]
\end{lemma}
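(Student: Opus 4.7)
The plan is to exhibit an $a^*\in\calA$ for which Prune-greedy with input $a^*$ simulates an idealized single-pass thresholded greedy on the stream, and then to invoke a variant of the matroid-exchange argument to bound the approximation. I fix $\Delta := \OPT/(R-L)$, so that $(R-L)\Delta=\OPT$ matches the maximum cumulative $h$-marginal a single greedy pass can accrue while still allowing the ceiling-rounding below to keep $\sum_\ell a^*_\ell\le R$. The $\kappa$-closeness of $h$ to $f$ is benign: since $k\kappa=\eps^3\OPT$, any cumulative deviation of $h$-based sums from $f$-based sums over $\le k$ elements is $O(\eps^3)\OPT$, to be absorbed into the $O(\eps)$ slack.

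To build $a^*$, consider the reference greedy $G$ that starts with $\ell^{\dagger}=1$ and $S^{\dagger}=\emptyset$ and mimics Prune-greedy with unbounded budget: on each inserted element (and on a rescan after each level increment), it adds $e_i$ whenever $S^{\dagger}\cup\{e_i\}\in\calM$ and $h_{S^{\dagger}}(e_i)\ge(1+\eps)^{-\ell^{\dagger}}\OPT$, incrementing $\ell^{\dagger}$ when no such element remains. Let $t^*_\ell$ be the cumulative $h$-marginal $G$ collects while at level $\ell$, and set $a^*_\ell:=\lceil t^*_\ell/\Delta\rceil$. Since $\sum_\ell t^*_\ell=h(S^{\dagger})\le(1+\eps^3)\OPT$ and only $L$ levels are nonempty, $\sum_\ell a^*_\ell\le(1+\eps^3)(R-L)+L\le R$, so $a^*\in\calA$. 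By induction on the addition count, Prune-greedy with input $a^*$ makes exactly the same additions in the same order as $G$ at each level, except that within each level it may terminate one element early when its budget $a^*_\ell\Delta\in[t^*_\ell,t^*_\ell+\Delta)$ is spent; summed over $L$ levels this pruning costs at most $L\Delta\ll\eps\OPT$ in $h$-value, and Prune-greedy terminates by round $t$.

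For the approximation, let $O\in\calM$ with $O\subseteq V_t$. By the classical matroid-greedy exchange lemma there is a bijection $\sigma:O\to S$ (padding $O$ with dummy elements of zero marginal if needed) such that $S_{\mathrm{prev}(s)}\cup\{\sigma^{-1}(s)\}\in\calM$ at the moment each $s$ is added. The level-$\ell$ policy then forces $h_{S_{\mathrm{prev}(s)}}(\sigma^{-1}(s))\le(1+\eps)\cdot h_{S_{\mathrm{prev}(s)}}(s)$, unless $\sigma^{-1}(s)$ never clears the final threshold, in which case its marginal is at most $(1+\eps)^{-L}\OPT\le\eps\OPT/k$. Summing over $s\in S$ and invoking submodularity together with the $\kappa$-bound,
\[
f_S(O)\le\sum_{o\in O}f_S(o)\le(1+\eps)\sum_{s\in S}f_{S_{\mathrm{prev}(s)}}(s)+k\cdot\tfrac{\eps\OPT}{k}+O(k\kappa)=(1+O(\eps))f(S)+O(\eps)\OPT,
\]
which rearranges to $f(S)\ge(1-O(\eps))f_S(O)$ using the trivial bound $f_S(O)\le\OPT$. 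The amortized query bound of $O(L)$ is immediate: each element is examined once in \textsc{Insert} and at most once per level increment in \textsc{Revoke}, giving $O(L)$ queries to $h$ and $\calM$ per element.

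The main obstacle will be making the simulation-matching between Prune-greedy and $G$ fully rigorous: Prune-greedy's budget at a level can exhaust strictly before $G$ would stop at the same level, causing the two processes to drift, and I must show the drift is bounded by a single element per level and that both processes realign after each level increment. A secondary subtlety is that the textbook proof of the matroid-exchange lemma sorts by value, whereas here the greedy order follows the stream; I will need the streaming-compatible variant that produces a bijection respecting the addition order, combined with the $(1+\eps)$-slack from the level discretization.
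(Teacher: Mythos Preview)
Your proposal has the right shape but contains two genuine gaps, each of which the paper addresses with a mechanism you are missing.

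\textbf{The simulation argument.} With $a^*_\ell=\lceil t^*_\ell/\Delta\rceil$ the budget at level $\ell$ satisfies $a^*_\ell\Delta\ge t^*_\ell$, so after Prune-greedy adds exactly the elements $G$ adds at level $\ell$ its remaining budget $c_{\ell^*}$ is still nonnegative. But \textsc{Revoke} (and hence a level change, and hence termination) is only triggered when $c_{\ell^*}\le 0$; Prune-greedy therefore stalls at level $\ell$ and never terminates, contradicting the conclusion. If you switch to floors, the opposite happens: Prune-greedy ends level $\ell$ with a strict prefix of $G$'s level-$\ell$ set, and at level $\ell+1$ the two processes start from different sets, so they add different elements and the drift compounds rather than ``realigning.'' The paper sidesteps this entirely by making the reference process itself prune: its $L$-pass greedy builds $S_\ell$ in pass $\ell$, then discards all but a prefix $T_\ell$ with $h_{T_1\cup\cdots\cup T_{\ell-1}}(T_\ell)\ge a^*_\ell\Delta$, and defines $a^*_\ell=\lfloor h_{T_1\cup\cdots\cup T_{\ell-1}}(S_\ell)/\Delta\rfloor$. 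By construction Prune-greedy with input $a^*$ outputs exactly $T=T_1\cup\cdots\cup T_L$, and the two are in lockstep level by level.

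\textbf{The approximation argument.} Even granting a reference process that matches Prune-greedy, your bijection step asserts that for $s$ added at level $\ell$ and $o=\sigma^{-1}(s)$ one has $h_{S_{\mathrm{prev}(s)}}(o)\le(1+\eps)h_{S_{\mathrm{prev}(s)}}(s)$. The only information available about $o$ at the end of level $\ell-1$ is: either $h_{T_1\cup\cdots\cup T_{\ell-2}\cup S_{\ell-1}}(o)<(1+\eps)^{-(\ell-1)}\OPT$, \emph{or} $T_1\cup\cdots\cup T_{\ell-2}\cup S_{\ell-1}\cup\{o\}\notin\calM$. The matroid-exchange bijection guarantees $S_{\mathrm{prev}(s)}\cup\{o\}\in\calM$, but $S_{\ell-1}\setminus T_{\ell-1}\not\subseteq S_{\mathrm{prev}(s)}$, so this does not rule out the second alternative. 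In that case no marginal bound is available and your displayed inequality fails. The paper handles precisely these elements by partitioning $O$ into sets $P_\ell$ (those still feasible with the un-pruned $S_{\ell-1}$) and sets $E_r$ (those that first become infeasible at level $r+1$), proves $|E_r|\le|S_{r+1}\setminus T_{r+1}|$, and charges their contribution to the pruned mass $f_{T_1\cup\cdots\cup T_{r+1}}(S_{r+1}\setminus T_{r+1})\le\Delta+O(\kappa)$ per level.

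A smaller issue: the final rearrangement does not follow from $f_S(O)\le\OPT$. When $f_S(O)\ll\OPT$ the term $O(\eps)\OPT$ swamps the left side; you need to first apply your inequality with $O=O_t$ and use $\OPT\le f(S)+f_S(O_t)$, as the paper does.
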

\begin{proof}
The algorithm is described in Algorithm \ref{alg:alg2}, with $$\Delta:=\frac{\eps^2\OPT}{\log (k/\eps)}.$$


	Let $e_1, \ldots, e_t$ be the stream with $\OPT\le \OPT_t\le (1+\eps)\OPT$. 
	To specify the $a^*\in \calA$ in the statement of the lemma, we consider 
	 the following $L$-pass greedy algorithm. 
The algorithm maintains a set $T\in \calM$. 
It starts with $T=\emptyset$ and updates $T\rightarrow T\cup T_\ell$
  at the end of the $\ell$-th pass (so we have $T=T_1\cup \cdots \cup T_\ell$ at the end of 
    the $\ell$-th pass).
During the $\ell$-th pass, we set $S_\ell=\emptyset$ and 
  go through $e_1,\ldots,e_t$.
For each $e_i$, the algorithm checks if $T\cup S_\ell\cup \{e_i\}\in \mathcal{M}$ and 
  $h_{T\cup S_\ell}(e_i)\geq (1+\eps)^{-(\ell-1)}\OPT$.
If so, $e_i$ is added to $S_\ell$.
At the end of the $\ell$-th pass, we do \emph{not} add all elements $S_\ell$ to $T$.  
Instead, we further prune $S_\ell$ to get $T_{\ell}$:
  Let $e_{i_1},e_{i_2},\ldots$ be elements added to $S_\ell$ during the $\ell$-th pass.
  $T_\ell$ is set to be $\{e_{i_1},\ldots,e_{i_j}\}$ such that $j$ is the smallest integer
  such that 
	\[
	h_{T_{1}\cup \cdots \cup T_{\ell-1}}(T_{\ell}) \geq \Delta \left\lfloor \frac{h_{T_{1}\cup \cdots \cup T_{\ell-1}}(S_\ell)}{\Delta} \right\rfloor,
	\]
where $T_\ell=\emptyset$ when the RHS above is $0$.
This finishes the $\ell$-th pass and the algorithm updates $T$ with $T\cup T_\ell$.
Let $(S_1,\ldots,S_L)$ and $(T_1,\ldots,T_L)$ be the two sequences of sets 
  obtained from this $L$-pass algorithm.	
	Let $a^*\in \mathbb{Z}_{\ge 0}^L$ be defined as 
	$a^{*}_{\ell} = \lfloor h_{T_{1}\cup \cdots \cup T_{\ell-1}}(S_\ell)/\Delta\rfloor$
	for each $\ell\in [L]$. 
	It is easy to see that
	\begin{align*}
	\Delta\sum_{\ell\in [L]} a^{*}_{\ell} &= \Delta\sum_{\ell\in [L]}  \left\lfloor \frac{h_{T_{1}\cup \cdots \cup T_{\ell-1}}(S_\ell)}{\Delta}\right\rfloor\\
	\leq&~ \sum_{\ell=1}^{L}h_{T_{1}\cup \cdots \cup T_{\ell-1}}(T_\ell)\le f(T)+2\kappa
	<2\OPT.
	\end{align*}
	Hence $\sum_{\ell\in [L]} a_{\ell}^*\le R$ and thus, $a^{*}\in \mathcal{A}$. 
The following lemma connects \textsc{Prune-Greedy} with this $L$-pass greedy algorithm: 

\begin{lemma}\label{hehe2}
Suppose that \textsc{Prune-Greedy} is given $a^*$ at the beginning,
  then it terminates before the end of the $t$-th round and outputs 
  exactly $T=T_1\cup\cdots \cup T_L$. 
\end{lemma}
	Given Lemma \ref{hehe2}, it suffices to prove that
	\[
	f(T) \geq (1-O(\eps))f_{T}(O),\quad\text{for every feasible set $O \subseteq V_t$.}
	\]
Fix an $O\in\calM$ and $O\subseteq V_t$.
The following Lemma is a folklore.
	\begin{lemma}
		\label{lem:folklore}
		Let $\mathcal{M}$ be a matroid and $T\in \calM$ with $T=T_1\cup \cdots \cup T_L$ such that 
		$T_1,\ldots,T_L$ are pairwise disjoint.
		Then any $O\in \calM$ can be partitioned into pairwise disjoint $O_1,\ldots,O_L$
		such that 
\begin{flushleft}\begin{enumerate}
\item If $|O|\ge |T|$ then $|O_i|=|T_i|$ for all $i<L$; 
If $|O|<|T|$, letting $\ell$ be the smallest integer such that $|O|\le \sum_{i\le \ell} |T_\ell|$,
  then $|O_i|=|T_i|$ for all $i<\ell$ and $|O_\ell|=|O|-\sum_{i<\ell} |O_i|$. (Note that we always
  have $|O_i|\le |T_i|$ except for $i=L$.)
\item For all $i<j$, $T_i\cap O_j=\emptyset$ and for every $i<L$,
  $$
  T_1\cup\cdots \cup T_i\cup O_{i+1}\cup\cdots \cup O_L\in \calM.
  $$
\end{enumerate}\end{flushleft}
		
\end{lemma}

Recall $S_1,\ldots,S_L$ from the $L$-pass greedy algorithm. We have for every $\ell\in [L]$,
$$
T_1\cup \cdots \cup T_{\ell-1}\cup S_\ell\in \calM.
$$
For the analysis we partition $O$ into pairwise disjoint
  sets $E_0,E_1\ldots,E_{L-2},P_1,\ldots,P_L$ as follows.
\begin{flushleft}\begin{enumerate}
\item $P_1=O_1$;
\item For each $\ell\ge 2$ and each $o\in O_\ell$, we consider two cases.
If $T_1\cup \cdots \cup T_{\ell-2}\cup S_{\ell-1}\cup \{o\}\in \calM$
  then we have $o\in P_\ell$;
otherwise, we have $o\in E_r$ where $r\ge 0$ is the largest integer such that 
  $T_1\cup\cdots \cup T_{r-1}\cup S_r\cup \{o\}\in \calM$ and 
  $T_1\cup\cdots \cup T_r\cup S_{r+1}\cup \{o\}\notin \calM$.
Note that $r\le \ell-2$ and such an $r\ge 0$ always exists given that 
  the condition when $r=0$ is just that $\{o\}\in \calM$.
\end{enumerate}\end{flushleft}
%
The following claim about the size of $E_\ell$ follows from the definition, the detailed proof can be found at Appendix~\ref{sec:insert-matroid-app}.
	
\begin{claim}	\label{claim111}
	$|E_{\ell}| \leq |S_{\ell+1} \backslash T_{\ell+1}|$.
\end{claim}

	Now we have
	\begin{align}
	f_T(O) 
	& \leq \sum_{\ell\in [L]}f_{T_{1}\cup \cdots \cup T_{\ell-1} }(O_{\ell})\notag\\
	&\leq f(O_1)+\sum_{\ell\in [2:L]}\sum_{r\in [0:\ell-2]}f_{T_{1}\cup \cdots \cup T_{\ell-1}}(O_{\ell} \cap E_{r}) + \sum_{\ell\in [2:L]}f_{T_{1}\cup \cdots \cup T_{\ell-1}}(P_{\ell}),\label{eq:term12}
	\end{align}
where both steps follow from the submodularity of $f$.

	
We first bound the first and last term.
\begin{lemma}
	\label{lem:upper2-first}
	We have
	\[
	f(O_1)+\sum_{\ell\in [2:L]} f_{T_1\cup\cdots\cup T_{\ell-1}}(P_\ell)
	\le (1+\eps)f(T)+6\eps \OPT.
	\]
\end{lemma} 

We will need the following fact, the proof can be found at Appendix~\ref{sec:insert-matroid-app}: 
\begin{fact}\label{fact111}
	Suppose $f$ is a monotone submodular funciton. For any three sets $A, B$ and $C$ we have 
	\[
	f_{A}(C) \leq f_{A\cup B}(C) + f_{A}(B).
	\]
\end{fact}	
	
%
\begin{proof}[Proof of Lemma~\ref{lem:upper2-first}]
Recall that for every $o\in P_\ell$ with $\ell\ge 2$, we have  
	$T_1 \cup \cdots \cup T_{\ell-2}\cup S_{\ell-1} \cup \{o\} \in \mathcal{M}.
	$
We have
	\begin{align*}
 \sum_{\ell\in [2:L]}f_{T_{1}\cup \cdots \cup T_{\ell-1}}(P_\ell) 
	&\leq \sum_{\ell\in [2:L]}f_{T_{1}\cup \cdots \cup T_{\ell-2}\cup S_{\ell-1}}(P_\ell) + f_{T_1\cup \cdots \cup T_{\ell-1}}(S_{\ell-1}\backslash T_{\ell-1})\notag\\
	&\leq \sum_{\ell\in [2:L]}f_{T_{1}\cup \cdots \cup T_{\ell-2}\cup S_{\ell-1}}(P_\ell) + 2\eps\OPT\notag\\
	&\leq \sum_{\ell\in [2:L]}\sum_{o\in P_\ell}f_{T_{1}\cup \cdots \cup T_{\ell-2}\cup S_{\ell-1}}(o)  + 2\eps\OPT\notag\\
	&\leq \sum_{\ell\in [2:L]}|O_\ell|(1+\eps)^{-\ell+2}\OPT  + 3\eps\OPT
	\end{align*}
	The first step comes from Fact~\ref{fact111}.
	The second step follows from 
	\begin{align}
	f_{T_1\cup \cdots \cup T_{\ell-1}}(S_{\ell-1}\backslash T_{\ell-1}) &= f_{T_1\cup \cdots \cup T_{\ell-2}}(S_{\ell -1}) -  f_{T_1\cup \cdots \cup T_{\ell-2}}( T_{\ell -1}) \notag\\[0.6ex]
	&\le h_{T_1\cup \cdots \cup T_{\ell-2}}(S_{\ell -1}) -  h_{T_1\cup \cdots \cup T_{\ell-2}}( T_{\ell -1})+4\kappa \notag\\[0.6ex]
	&\leq h_{T_1\cup \cdots \cup T_{\ell-2}}(S_{\ell -1})- \Delta\left\lfloor \frac{h_{T_{1}\cup \cdots \cup T_{\ell-2}}(S_{\ell-1})}{\Delta}\right\rfloor +4\kappa\notag\\
	&\leq \Delta+4\kappa \label{eq:matroid1}
	\end{align}
	and that $(\Delta+4\kappa)L\le 2\eps \OPT$.
	The third step comes from the submodularity. The fourth step holds since $T_1\cup\cdots \cup T_{\ell-2}\cup S_{\ell-1} \cup \{o\}$ is feasible for each $o\in P_\ell$, by the 
	$L$-pass greedy algorithm,
	\[
	f_{T_{1}\cup \cdots \cup T_{\ell-2}\cup S_{\ell-1}}(o) 
	\le h_{T_{1}\cup \cdots \cup T_{\ell-2}\cup S_{\ell-1}}(o)+\kappa
	\leq (1+\eps)^{-\ell+2}\OPT +\kappa 
	\]
and that $k\kappa\le \eps \OPT$.
On the other hand, we have $f(O_1)$ is $0$ if $|O_1|=0$ or when $|O_1|>0$ (and thus, $|T_1|>0$),
$$
f(O_1)\le \OPT_t\le (1+\eps)\OPT\le |O_1|(1+\eps)\OPT.
$$
In summary, we have
\begin{align*}
f(O_1)+\sum_{\ell\in [2:L]} f_{T_1\cup\cdots\cup T_{\ell-1}}(P_\ell)
&\le \sum_{\ell\in [L]} |O_\ell|(1+\eps)^{-\ell+2} \OPT+3\eps \OPT \\
&\le \sum_{\ell\in [L-1]} |T_\ell|(1+\eps)^{-\ell+2} \OPT +4\eps \OPT
\end{align*}
where we used $k(1+\eps)^{-L+2}\OPT\le \eps \OPT$.
To connect with $f(T)$, we have from the greedy algorithm that
$$
f(T)=\sum_{\ell\in [L]} f_{T_1\cup\cdots \cup T_{\ell-1}}(T_\ell)
\ge \sum_{\ell\in [L]} h_{T_1\cup\cdots \cup T_{\ell-1}}(T_\ell) -2L\kappa
\ge 
\sum_{\ell\in [L]}|T_\ell| (1+\eps)^{-\ell+1} \OPT -\eps\OPT.
$$
As a result, we have 
\begin{equation}\label{hehe3}
f(O_1)+\sum_{\ell\in [2:L]} f_{T_1\cup\cdots\cup T_{\ell-1}}(P_\ell)
\le (1+\eps)(f(T)+\eps \OPT)+4\eps \OPT
\le (1+\eps)f(T)+6\eps \OPT.
\end{equation}
	
\end{proof}
	We bound the second term of Eq. (\ref{eq:term12}) as follow.
\begin{lemma}\label{lem:upper2-second}
	We have
	\[
	\sum_{\ell\in [2:L]}\sum_{r\in [0:\ell-2]}f_{T_{1}\cup \cdots \cup T_{\ell-1} }(O_{\ell} \cap E_{r})  \leq 8\eps\OPT.
	\]
\end{lemma}

\begin{proof}
	The proof can be derived via the following chain of inequalities.
	\begin{align}
	\sum_{\ell\in [2:L]}\sum_{r\in [0:\ell-2]}f_{T_{1}\cup \cdots \cup T_{\ell-1} }(O_{\ell} \cap E_{r})  &=  \sum_{r\in [0:L-2]}\sum_{\ell\in [r+2:L]}f_{T_{1}\cup \cdots \cup T_{\ell-1} }(O_{\ell} \cap E_{r})\notag \\
	&\leq   \sum_{r\in [0:L-2]}f_{T_{1}\cup \cdots \cup T_{r}}(E_{r}) \notag\\
	&\leq   \sum_{r\in [0:L-2]}f_{T_{1}\cup \cdots \cup T_{r-1}\cup S_{r}}(E_{r}) + \sum_{r\in [0:L-2]} f_{T_1\cup \cdots \cup T_{r}}(S_{r}\backslash T_{r})\notag\\
	&\leq   \sum_{r\in [0:L-2]}f_{T_{1}\cup \cdots \cup T_{r-1}\cup S_{r}}(E_{r})  + 2\eps\OPT\notag\\
	&\leq   \sum_{r\in [0:L-2]}\sum_{o \in E_r} f_{T_{1}\cup \cdots \cup T_{r-1}\cup S_{r}}(o) + 2\eps\OPT\notag\\
	&\leq   \sum_{r\in [0:L-2]}\sum_{o \in E_r} h_{T_{1}\cup \cdots \cup T_{r-1}\cup S_{r}}(o) + 3\eps\OPT\notag\\
	&\leq   \sum_{r\in [0:L-2]}|E_r|(1+\eps)^{-r+1}\OPT + 3\eps\OPT\notag\\
	&\leq   \sum_{r\in [0:L-2]}|S_{r+1} \backslash T_{r+1}|(1+\eps)^{-r+1}\OPT + 3\eps\OPT\notag\\
	&\leq    (1+\eps)\sum_{r=0}^{L-2}f_{T_1\cup \cdots \cup T_{r+1}}(S_{r+1}\backslash T_{r+1}) + 4\eps\OPT\notag\\
	&\leq   8\eps\OPT \label{eq:term1}.
	\end{align}
	The second step holds due to the submodularity and monotonicity. The third step holds due to Fact~\ref{fact111}, and the fourth step holds due to Eq.~\eqref{eq:matroid1}. The fifth step follows from submodularity. The sixth step holds since $h$ is $\kappa$-close to $f$ and 
	using $2k\kappa\le \eps\OPT$.
	The seventh step holds since every $o \in E_r$ has $T_1\cup \cdots\cup T_{r-1}\cup S_r \cup \{o\} \in \mathcal{M}$. Hence, it follows from the $L$-pass greedy algorithm that
	\[
	h_{T_{1}\cup \cdots \cup T_{r-1}\cup S_{r}}(o) \leq (1+\eps)^{-r+1}\OPT.
	\]
	The seventh step holds due to Claim \ref{claim111} that $|E_r| \leq |S_{r+1} \backslash T_{r+1}|$. The eighth step holds since 
	\[
	f_{T_1\cup \cdots \cup T_{r+1}}(S_{r+1}\backslash T_{r+1})
	\ge h_{T_1\cup \cdots \cup T_{r+1}}(S_{r+1}\backslash T_{r+1})-\kappa
	\geq |S_{r+1}\backslash T_{r+1}|(1+\eps)^{-r }\OPT-\kappa,
	\]
	and the last step holds due to \eqref{eq:matroid1}.
\end{proof}

	Combining Lemma~\ref{lem:upper2-first}, Lemma~\ref{lem:upper2-second}  and Eq.~\eqref{eq:term12}, we have
	$f_{T}(O) 
	\leq (1+\eps)f(T) + 11\eps\OPT .$
	Since this holds for all feasible sets, and in particular, it holds for the set $O_t$. Taking a linear combination, we have
	\begin{align*}
	f(T)  \geq \frac{1}{1+\eps} \big( (1 -11\eps)f_{T}(O) + 11\eps  f_{T}(O_t) - 11\eps\OPT\big) 
	\geq  \frac{1}{1+\eps} \big((1 -11\eps)f_{T}(O) - 11\eps f(T) \big).
	\end{align*}
	Rearranging the term, we get the desired.

	Finally, we bound the amortized query complexity of the algorithm.
	We charge the two queries made in the evaluation of $h_{S}(e)$ (line \ref{line:eva3} or \ref{line:eva4}) to $e$ and 
	show that the number of queries charged to $e$ is at most $O(L)$.
	To see this, we note that $e$ is charged twice when it is just inserted.
	Every time $e$ is charged during \textsc{Revoke}, either it is 
	added to $S$ so that it is never charged again, or its marginal contribution is small and won't be queried in the $\ell$-th level later.
	Hence, $e$ has been queried for at most $O(L)$ times.
	We conclude the proof here.
\end{proof}

By standard argument, we conclude

\begin{lemma}
	\label{lem:matroid}
	Given a matroid $\mathcal{M}$, let $t>0$,  $\OPT\leq \OPT_{t} \leq (1+\eps)\OPT$. After inserting the $t$-th element, our algorithm outputs a feasible set $S$ that satisfies 
	$f(S)\geq (1/2-O(\eps))\OPT.$
	The expected number of oracles per operation is $k^{\tilde{O}(1/\eps)}$.
\end{lemma}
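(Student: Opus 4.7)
The plan is to wrap Lemma~\ref{lem:matroid-approx} in the standard ``try every possibility'' procedure. The algorithm maintains $|\mathcal{A}| = k^{\tilde{O}(1/\eps)}$ parallel instances of \textsc{Prune-greedy}, one for each $a \in \mathcal{A}$, all sharing the same input $\OPT$ and fed the exact oracle $h = f$ (so $\kappa = 0$). At the end of each round, each instance sees the new element via \textsc{Insert} and the overall algorithm outputs the feasible set of maximum $f$-value among all instances that have terminated so far.

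For correctness, fix a round $t$ with $\OPT \leq \OPT_t \leq (1+\eps)\OPT$. By Lemma~\ref{lem:matroid-approx}, there exists $a^* \in \mathcal{A}$ such that the $(a^*)$-instance terminates at or before the end of round $t$ and returns a feasible $S \in \calM$ with
\[
f(S) \geq (1 - O(\eps))\, f_{S}(O) \quad \text{for every feasible } O \subseteq V_t.
\]
Applying this with $O = O_t$ and using monotonicity $f(S \cup O_t) \geq f(O_t) = \OPT_t$ yields
\[
f(S) \geq (1 - O(\eps))\bigl(\OPT_t - f(S)\bigr),
\]
which rearranges to $f(S) \geq \frac{1 - O(\eps)}{2 - O(\eps)}\, \OPT_t \geq (1/2 - O(\eps))\OPT$. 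Since the algorithm outputs the best set over all instances, the claimed approximation bound holds.

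For the amortized complexity, Lemma~\ref{lem:matroid-approx} guarantees that each individual instance of \textsc{Prune-greedy} incurs $O(L) = O(\eps^{-1}\log(k/\eps))$ amortized queries per insertion. Multiplying by $|\mathcal{A}| = k^{\tilde{O}(1/\eps)}$ (from Eq.~\eqref{eq:branch_size}) yields a total of $k^{\tilde{O}(1/\eps)}$ queries per operation, as claimed.

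The only non-routine point to verify is that the ``correct'' instance has actually terminated by the end of round $t$ rather than still processing elements. This is handed to us directly by Lemma~\ref{lem:matroid-approx}, whose conclusion is phrased in terms of termination before the end of round $t$ whenever the hypothesis $\OPT \leq \OPT_t \leq (1+\eps)\OPT$ holds. The further task of removing the dependence on a known $\OPT$ (via an additional layer of $O(\eps^{-1}\log n)$ geometric guesses over $\OPT \in \{(1+\eps)^i\}$ bracketed by the largest singleton value $M_t$ and $k M_t$) is deferred to the theorem-level argument in Section~\ref{sec:accelerate}, which is consistent with the extra $\log n$ factor that appears in Theorem~\ref{thm:insert-matroid} but not in this lemma.
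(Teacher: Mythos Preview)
Your proof is correct and follows essentially the same approach as the paper: run one \textsc{Prune-greedy} instance per $a\in\mathcal{A}$, invoke Lemma~\ref{lem:matroid-approx} with $O=O_t$ to get $f(S)\ge(1-O(\eps))(f(O_t)-f(S))$, and rearrange; the complexity bound is likewise obtained by multiplying the $O(L)$ per-branch amortized cost by $|\mathcal{A}|=k^{\tilde O(1/\eps)}$. Your added remarks on termination and on deferring the guess of $\OPT$ are accurate but not needed for the lemma itself.
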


The proof can be found at Appendix~\ref{sec:insert-matroid-app}.
Combine with Lemma~\ref{lem:opt}, we have
\begin{theorem}
	\label{thm:insert-matroid-com}
	 Given any matroid $\mathcal{M}$, for any $\eps > 0$, there is a combinatorial algorithm that maintains a feasible set $S$ with $(1/2 -\eps)$-approximation at each iteration. Moreover, the amortized number of queries per update is $k^{\tilde{O}(1/\eps)}$.
\end{theorem}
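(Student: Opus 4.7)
The plan is to reduce the theorem to Lemma~\ref{lem:matroid-approx} via two standard layers of parallelization: one over guesses of $\OPT$, and one over the discrete pruning profile $a \in \mathcal{A}$. Concretely, I would maintain, at every round $t$, a collection of independent running copies of \textsc{Prune-Greedy}, one for each pair $(\OPT, a)$ where $\OPT$ ranges over a geometric grid and $a$ ranges over $\mathcal{A}$, and then output the feasible set of maximum $f$-value among all copies that have terminated so far.

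First I would handle the unknown value of $\OPT_t$. Using Lemma~\ref{lem:opt} (the standard $\OPT$-estimation subroutine referenced immediately before the theorem), at every round $t$ one maintains a set $G_t$ of guesses of the form $\{(1+\eps)^i \cdot f_{\max}^{(t)}\}_i$ where $f_{\max}^{(t)}=\max_{e\in V_t} f(\{e\})$, and the range of $i$ is limited to the window where $(1+\eps)^i \in [f_{\max}^{(t)}, \, k \cdot f_{\max}^{(t)}]$. Since for any $t$ one has $f_{\max}^{(t)}\le \OPT_t\le k f_{\max}^{(t)}$, the window contains $O(\log(k)/\eps)$ values, and by monotonicity of $f_{\max}^{(t)}$ one can use the standard trick of only spawning new \textsc{Prune-Greedy} instances when $f_{\max}^{(t)}$ grows and discarding obsolete ones, so that at any time $|G_t|=O(\log(k)/\eps)$.

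Next, for each guess $\OPT \in G_t$ I would run $|\mathcal{A}|$ independent copies of \textsc{Prune-Greedy}, one for each profile $a \in \mathcal{A}$, all fed the same insertion stream. Each such copy has amortized query complexity $O(L)=O(\log(k/\eps)/\eps)$ by Lemma~\ref{lem:matroid-approx}, so the total amortized query complexity per operation is at most
\[
|G_t| \cdot |\mathcal{A}| \cdot O(L) \;=\; O\!\left(\frac{\log k}{\eps}\right) \cdot k^{\tilde{O}(1/\eps)} \cdot O\!\left(\frac{\log(k/\eps)}{\eps}\right) \;=\; k^{\tilde{O}(1/\eps)},
\]
using the bound $|\mathcal{A}| = k^{\tilde{O}(1/\eps)}$ from \eqref{eq:branch_size}. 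For the approximation guarantee, fix the round $t$ of interest and choose $\OPT^\star \in G_t$ to be the largest guess with $\OPT^\star \le \OPT_t$, so that $\OPT^\star \le \OPT_t \le (1+\eps)\OPT^\star$. Lemma~\ref{lem:matroid-approx} (applied with $h=f$, so $\kappa=0$) guarantees that there exists some $a^\star \in \mathcal{A}$ such that the $(\OPT^\star,a^\star)$-copy terminates by round $t$ and returns a feasible set $S^\star$ satisfying $f(S^\star)\ge (1-O(\eps))\, f_{S^\star}(O_t)$. Standard submodular manipulation then gives $f(S^\star)+f(S^\star)\ge (1-O(\eps))\, (f(S^\star)+f_{S^\star}(O_t)) \ge (1-O(\eps))\, f(O_t)$, i.e.\ $f(S^\star)\ge (1/2-O(\eps))\OPT_t$; since the algorithm outputs the best terminated set, this bound is inherited by its output. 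Rescaling $\eps$ by a constant factor yields the claimed $(1/2-\eps)$-approximation.

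The only real subtlety is the bookkeeping for \textsc{Prune-Greedy} instances spawned mid-stream: a copy started at round $t_0$ has only seen elements $e_{t_0},\dots,e_t$ and thus must treat its ``stream'' as starting at $t_0$. The $\OPT$-estimation framework of Lemma~\ref{lem:opt} is designed precisely to handle this, by ensuring that whenever $\OPT_t$ first enters the window $[\OPT,(1+\eps)\OPT]$ for some grid value $\OPT$, all $|\mathcal{A}|$ associated copies have been running since an earlier round at which the optimum was still $\le \OPT$; in particular, they have seen all elements relevant to an optimal solution that certifies $\OPT_t\ge \OPT$. This is the main (but routine) point that needs to be verified in order to invoke Lemma~\ref{lem:matroid-approx} on the pair $(\OPT^\star,a^\star)$ at round $t$; everything else is a direct combination of the two lemmas. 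I would relegate the full proof of this bookkeeping step to the appendix, since it is identical to the argument already used for the cardinality-constraint case in Section~\ref{sec:insert-cardinality}.
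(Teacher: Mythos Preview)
Your proposal is correct and follows essentially the same approach as the paper: the paper proves Theorem~\ref{thm:insert-matroid-com} by combining Lemma~\ref{lem:matroid} (which branches over all $a\in\mathcal{A}$ and derives the $(1/2-O(\eps))$ bound from Lemma~\ref{lem:matroid-approx} exactly as you do) with the generic $\OPT$-guessing reduction Lemma~\ref{lem:opt}. Your write-up merely merges the intermediate Lemma~\ref{lem:matroid} into the final argument, and your handling of $h=f$ (so $\kappa=0$) and of the mid-stream spawning subtlety matches the paper's reasoning.
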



\subsection{Amplification via accelerated continuous greedy}
\label{sec:accelerate}

We amplify the approximation ratio of the combinatorial algorithm via the accelerated continuous greedy framework~\cite{badanidiyuru2014fast}. 
Let $m = O(1/\eps)$ and
\[
 D = \{\OPT, (1+\eps)^{-1}\OPT, \cdots, (1+\eps)^{-\lceil 4\log(1/\eps)/\eps \rceil}\OPT\}\cup \{0\}.
\] 
We run a separate branch for each $\bd \in D^{m} $ and $\ba \in \mathcal{A}^{m}$. Intuitively, $d_\tau \in D$ should be seen as an estimate on the progress of optimal solution in the $\tau$-th iteration, and $a_\tau\in \mathcal{A}$ is a guess on the greedy sequence.
Our algorithm segragates answer from all branches and outputs the one with the maximum value. 
In order to return an integral solution, the algorithm rounds the fractional solution via the swap rounding approach~\cite{chekuri2010dependent}.
The algorithm description is presented in Algorithm~\ref{algo:insert-matroid}.

\begin{algorithm}[!h]
	\caption{Amplification via accelerated continuous greedy}
	\label{algo:insert-matroid}
	\begin{algorithmic}[1]
		\State \textbf{Input: }$\bd \in D^{m}, \ba\in \mathcal{A}^{m}$
		\State Initialize $\bx \leftarrow \mathbf{0}$
		\For{$\tau =1,2,\cdots, m$}
		\State Define $g(S) = F(\bx+\frac{1}{m} S)-F(\bx)$ for all $S\subseteq V$
		\State Invoke \textsc{Prune-greedy}($d_\tau, a_\tau$, g), and wait until it returns a solution $S_{\tau}$
		\State  $\bx \leftarrow \bx + \frac{1}{m}S_\tau$\label{line:update}
		\EndFor
	\end{algorithmic}
\end{algorithm}

We need the following lemma in our analysis. The proof idea follows from~\cite{badanidiyuru2014fast} and it appears in \cite{balkanski2019optimal}. It has some minor difference with previous work and we provide a proof for completeness.

\begin{lemma}
	\label{lem:accelerated}
	Let $\OPT\leq \OPT_t \leq (1+\eps)\OPT$.
	Suppose in each iteration of Algorithm~\ref{algo:insert-matroid}, \textsc{Prune-greedy} returns a set $S$ that satisfies
	\begin{align}
	g(S) \geq (1-O(\eps)) \sum_{i\in [L]}g_{S\backslash (O_i \cup \cdots \cup O_L)}(O_i) -\eps^2\OPT,\label{eq:acc1}
	\end{align}
	for some partition $O = O_1 \cup \cdots \cup O_L$ of $O$
	and some partition $S = S_1 \cup \cdots \cup S_L$ 
	of $S$ such that $\forall i \in [L]$,  $S_1 \cup \cdots S_i \cup O_{i+1} \cup \cdots \cup O_{L} \in \mathcal{M}$ and $S_i \cap (O_{i+1}\cup\cdots \cup O_L) = \emptyset$.
	Then the final solution $\bx$ satisfies
	\[
	F(\bx)\geq (1-1/e - O(\eps))\OPT.
	\]
\end{lemma}

\begin{proof}
	For any $\tau \in [m]$, let $\bx_{\tau}$ denote the fractional solution at the end of $\tau$-th iteration, i.e. after the update at Line~\ref{line:update} of Algorithm~\ref{algo:insert-matroid}. The we claim
	\begin{align}
	F(\bx_\tau) - F(\bx_{\tau-1}) \geq \frac{1}{m}(1 - O(\eps))(F(O) - F(\bx_{\tau-1})) - \eps^2 \OPT.\label{eq:acc2}
	\end{align}
	This is true since whenever $g(S) = F(\bx_\tau) - F(\bx_{\tau-1})  \leq  \frac{1}{m}(F(O) - F(\bx_{\tau-1}))$, one has 
	\begin{align*}
	F(\bx_\tau) - F(\bx_{\tau-1}) = &~ g(S) \geq (1-O(\eps)) \sum_{i\in [L]}g_{S\backslash (O_i \cup \cdots \cup O_L)}(O_i) -\eps^2\OPT \\
	\geq &~ \frac{1}{m}\left(1 - \frac{1}{m}\right)(1-O(\eps)) (F(O) - F(\bx_{\tau-1})) -\eps^2 \OPT \\
	=&~ \frac{1}{m}(1 - O(\eps))(F(O) - F(\bx_{\tau-1})) - \eps^2 \OPT
	\end{align*}
	where the second step follows from the assumption (see Eq.~\eqref{eq:acc1}), the third step follows from Claim~\ref{lem:aux} below and the last step follows from the choice of parameter $m = O(1/\eps)$.
	
	Recurring the Equation \eqref{eq:acc2} for all $\tau \in [m]$, via standard calculation, one can show that
	\[
	F(T) \geq (1-1/e -O(\eps))F(O) = (1-1/e -O(\eps))  \OPT. \qedhere
	\]
\end{proof}

 \begin{claim}[Adapted from Lemma 7 in \cite{balkanski2019optimal}]
 	\label{lem:aux}
 	For any $\tau\in [m]$, suppose that $$g(S) \leq \frac{1}{m}\left(F(O)- F(\bx_{\tau-1})\right),$$ then we have $$\sum_{i\in [L]}g_{S\backslash (O_i \cup \cdots \cup O_L)}(O_i)  \geq \frac{1}{m}\left(1-\frac{1}{m}\right) (F(O) - F(\bx_{\tau-1})).$$
 \end{claim}
 \begin{proof}
We have
\begin{align*}
\sum_{i\in [L]}g_{S\backslash (O_i \cup \cdots \cup O_L)}(O_i)  = &~ \sum_{i\in [L]} F_{\bx_{\tau-1} + \frac{1}{m}  S\backslash (O_i \cup \cdots \cup O_L)}\left(\frac{1}{m}O_i\right) \\ \geq&~ \sum_{i\in [L]} F_{\bx_{\tau-1} + (O_{1} \cup \cdots \cup O_{i-1})+\frac{1}{m} S}\left(\frac{1}{m}O_i\right)  \\
\geq &~ \frac{1}{m}  \sum_{i\in [L]} F_{\bx_{\tau-1} + O_{1: i-1} +\frac{1}{m} S}(O_i)  = \frac{1}{m}F_{\bx_{\tau-1} + \frac{1}{m}S}(O),
\end{align*}
where the first step follows from $S\backslash (O_i \cup \cdots \cup O_L) \cap O_i = \emptyset$, the second step follows from submodularity, the third step follows from multi-linearity. 
Moreover, we  have
\begin{align*}
F_{\bx_{\tau-1} + \frac{1}{m}S}(O) =&~ F_{\bx_{\tau-1}}\left(O +\frac{1}{m}S\right) - F_{\bx_{\tau-1}}\left(\frac{1}{m}S\right) \geq F_{\bx_{\tau-1}}(O) - \frac{1}{m} (F(O) - F(\bx_{\tau-1})) \\
\geq &~ \left(1- \frac{1}{m}\right) (F(O) - F(\bx_{\tau-1})) ,
\end{align*}
where the second step comes from the condition that $g(S) \leq \frac{1}{m}(F(O)- F(\bx_{\tau-1}))$. Combining the above two inequalies, we conclude the proof.
 \end{proof}

\thmuppertwo*


\begin{proof}
	With Lemma~\ref{lem:opt}, it suffices to prove for any $t$, when $\OPT\leq \OPT_t \leq (1+\eps)\OPT$, our algorithm returns a solution set $S$ with $f(S)\geq (1-1/e-O(\eps))\OPT_t$. 
	Let $O$ be the optimal solution.
	For each $\tau = 1,\ldots, m$ (with $\bx_0=0$), we define
	  $g^{(\tau)}$, $O^{(\tau)}$, $d_\tau$, $a_\tau$ and $S_{\tau}$ round by round as follows:
\begin{enumerate}
\item
First given $\bx_{\tau-1}$ we define $g^{(\tau)}$ and $O^{(\tau)}$ as
$$g^{(\tau)}(S) := F\left(\bx_{\tau -1} + \frac{1}{m} S\right)\quad\text{and}\quad
O^{(\tau)} = \arg\max_{S \in \mathcal{M}}g^{(\tau)}(S).$$ 
\item Let $\ell$ be the integer that satisfies 
 $(1+\eps)^{-\ell}\OPT \leq g^{(\tau)}(O^{(\tau)}) \leq (1+\eps)^{-\ell+1}\OPT$. When $\ell \leq \lceil 4 \log(1/\eps)/\eps\rceil$, we take $d_{\tau} = (1+\eps)^{-\ell}\OPT$ and we know $d_{\tau} \leq g^{(\tau)}(O^{(\tau)}) \leq (1+\eps)d_{\tau}$; otherwise, we set $d_{\tau} = 0$.
When $d_\tau=0$, we set $S_{\tau}=\emptyset$ and set $a_\tau$ to be an arbitrary
  element in $\mathcal{A}$.
When $d_\tau>0$, we set $a_\tau\in \calA$ in the same way as in the proof 
  regarding $g^{(\tau)}$ and $d_\tau$.
We run   Prune-Greedy on $g^{(\tau)}$ using $d_\tau$ and 
  $a_\tau$ and let $S_\tau$ be the set it returns.
\end{enumerate}
It is clear that the branch of the algorithm that runs with $d$ and $a$
  defined above would return $\textbf{x}=\frac{1}{m}\sum_{\tau\in [m]}S_\tau$.
We prove below that each set $S_\tau$ satisfies
	\begin{align}
	g^{(\tau)}(S_{\tau}) \geq (1 - O(\eps)) \sum_{i\in [L]}g^{(\tau)}_{S_{\tau} \backslash (O_{i} \cup \cdots O_L) }(O_{i}) -\eps^2 \OPT \label{eq:condition1}
	\end{align}
	for some partition $O = O_1 \cup \cdots \cup O_{L}$ of $O$
	and some partition $S_\tau=T_1\cup\cdots \cup T_L$ that satisfy for every
  $i \in [L]$,  $T_1 \cup \cdots T_i \cup O_{i+1} \cup \cdots \cup O_{L} \in \mathcal{M}$ and $T_i \cap (O_{i+1}\cup\cdots \cup O_L) = \emptyset$;
	note that both partitions depend on $\tau$. 
It follows from Lemma~\ref{lem:accelerated}
  that $F(\bx) \geq (1-1/e-O(\eps))\OPT$.
Given $\bx$, we round the fractional solution to an integral one. The solution is the convex combination of at most $O(1/\eps)$ independent sets, it takes only $O(k^2/\eps)$ query to perform the swap rounding~\cite{chekuri2010dependent}.	
	
To prove Eq.~\eqref{eq:condition1}, note that the 
	right hand side of the first line of \eqref{eq:term12} is 
$$
\sum_{i\in [L]}g^{(\tau)}_{T_1 \cup \cdots \cup T_{i-1}}(O_{i}).
$$where $T_1, \ldots, T_{L}$ is a partition of $S_{\tau}$ and $O_{1}, \ldots, O_{L}$ is a partition of $O$ that satisfy
for every
  $i \in [L]$,  $T_1 \cup \cdots T_i \cup O_{i+1} \cup \cdots \cup O_{L} \in \mathcal{M}$ and $T_i \cap (O_{i+1}\cup\cdots \cup O_k) = \emptyset$.
On the other hand, the second line of \eqref{eq:term12} 
  can be bounded from above using
  Lemma \ref{lem:upper2-first} and Lemma \ref{lem:upper2-second} by
	\begin{align*}
	(1+\eps) g^{(\tau)}(S_{\tau}) 
	+O(\eps)g^{(\tau)}(O^{(\tau)}).
	\end{align*}
It follows from \eqref{eq:term12} that 
\begin{align}\label{eq:condition1-2}
 g^{(\tau)}(S_{\tau}) \geq (1 - O(\eps))\sum_{i\in [L]}g^{(\tau)}_{T_1 \cup \cdots \cup T_{i-1}}(O_{i})  -
	 O(\eps)g^{(\tau)}(O^{(\tau)})
	\end{align}
	In addition we have
	\begin{align}
	\sum_{i\in [L]}g^{(\tau)}_{T_1 \cup \cdots \cup T_{i-1}}(O_{i}) \geq \sum_{\ell \in [L]}g^{(\tau)}_{S_{ \tau}\backslash (O_i \cup \cdots O_{L})}(O_{i}) 	\label{eq:condition1-3}
	\end{align}
	due to submodularity and the fact that $T_1 \cup \cdots \cup T_{i-1} \in S_{\tau}$ and $T_1 \cup \cdots \cup T_{i-1} \cap (O_{i} \cup \cdots \cup O_{L})= \emptyset$. Combining Eq.~\eqref{eq:condition1-2} and Eq.~\eqref{eq:condition1-3}, we have
	\begin{align}
	g^{(\tau)}(S_{\tau}) \geq (1 - O(\eps)) \sum_{i \in [L]}g^{(\tau)}_{S_{\tau}\backslash (O_i \cup \cdots O_{L})}(O_{i})   - O(\eps)g^{(\tau)}(O^{(\tau)}).	\label{eq:condition1-4}
	\end{align}
	To get rid of low order term, we note by Lemma~\ref{lem:matroid} that
	\begin{align}
	\label{eq:condition1-5}
	g^{(\tau)}(S_{\tau}) \geq \left(\frac{1}{2} - O(\eps)\right)g^{(\tau)}(O^{(\tau)})- \eps^{2}\OPT,
	\end{align}
	the additive term comes when $d_{\tau} = 0$ and make RHS no more than $0$ in that case. Adding Eq.~\eqref{eq:condition1-4} and Eq.~\eqref{eq:condition1-5}, we proved Eq.~\eqref{eq:condition1}.
	
	
	For the amortized query complexity, the total number of branch equals 
	\[
	|\mathcal{A}^{m}|\cdot |D^{m}| = k^{\tilde{O}(1/\eps^2)} \cdot (1/\eps)^{\tilde{O}(1/\eps)} = k^{\tilde{O}(1/\eps^{2})}.
	\]
	For each branch, there are $O(1/\eps)$ iterations, and in each iteration, we run our combinatorial algorithm with a scheduel sequence $a_{\tau}\in \mathcal{A}$. By Lemma~\ref{lem:matroid-approx}, the amortized number of query (to function $g$) equals $O(L) = O(\log(k/\eps)/\eps)$. We note that in the accelerated continuous greedy algorithm, a query to function $g$ is equivalent to two queries of the multilinear extension.
	To satisfy the requirement of Lemma~\ref{lem:matroid-approx}, one evaluates the value of multilinear extension with error $\pm\eps^{3}/k$ and confidence $\delta = \frac{\eps}{\poly(n)|\mathcal{A}^m||\mathcal{D}^m|L}$  using $\tilde{O}(\eps^{-8}k^2\log n)$ queries to the original function $f$. Thus in total, the amortized number of query is at most
	\[
	k^{\tilde{O}(1/\eps^2)} \cdot \eps^{-1} \cdot \eps^{-1}\log(k/\eps)\cdot \tilde{O}(\eps^{-8}k^2\log n)= k^{\tilde{O}(1/\eps^2)} \log n.
	\]
	 Hence, the overall amortized query complexity is $k^{\tilde{O}(1/\eps^{2})}\log n$.
\end{proof}


\section{Conclusions}
\label{sec:discussion}

We study the power and limitations of dynamic algorithms 
for submodular maximization. 
On~the lower bound side, we prove a polynomial lower bound on the amortized query complexity for achieving a $(1/2+\eps)$-approximation, together with a linear lower bound for $0.584$-approximation, under fully dynamic streams with insertions and deletions.
On the algorithmic side, we develop efficient $(1-1/e)$-approximation algorithms for  insertion-only streams under both cardinality and~matroid constraints.
There are
many interesting directions for further investigations:
\begin{flushleft}\begin{itemize}
		\item Many submodular functions important in practice can be accessed in white box models instead of query models, e.g., the MAX-$k$ coverage problem, influence maxmization (see \cite{peng2021dynamic} for an example).  Can  ideas in this paper be extended to obtain upper/lower bounds on amortized time complexity for these problems? 
		\item Can we extend results (algorithm or hardness) to non-monotone submodular maximization? As far as we know, there is no known constant-factor approximation algorithm with $\poly(k)$ amortized query complexity for the non-monotone setting under fully dynamic streams. How does the dynamic model compare to the streaming model \cite{Naor20}
		under this setting?
		\item For matroid constraints, can one improve the query complexity to $O(\sqrt{k})$ over insertion-only streams? Also, for fully dynamic streams, there is no known constant-factor approximation algorithm with $\poly(k)$ amortized queries for matroid constraints.
	\end{itemize}\end{flushleft}

\section*{Acknowledgement} Research of X.C. and B.P. were supported in part 
by NSF  grants CCF-1703925, IIS-1838154, CCF-2106429, CCF-2107187, CCF-1763970, CCF-1910700 and DMS-2134059. We would like to thank 
Paul Liu for pointing out a mistake in an earlier version of the paper.

\bibliographystyle{alpha}
\bibliography{ref}
\newpage
\appendix

\section{Missing proof from Section~\ref{sec:lower1}}
\label{sec:lower1-app}

\subsection{Missing proof from Section~\ref{sec:base}}
\label{sec:base-app}

Let $x\in [0, 1]^{w}$ and $\bar{x} = \frac{1}{w}\sum_{i=1}^{w}x_{i}$. Consider the function
\begin{align*}
f(x) = 1 - \prod_{i=1}^{w}(1 - x_i)
\end{align*}
and its symmetric version
\begin{align*}
g(x) = 1 - (1 - \bar{x})^w.
\end{align*}

The follow Lemma comes from Lemma 2.4 of \cite{mirrokni2008tight}, we make some minor modifications and provide the proof here for completeness.
\begin{theorem}[Restatement of Theorem~\ref{thm:base}]
	\label{thm:base-app}
	For any $w \in \mathbb{Z}$, $\eps > 0$ let $\gamma = w^{-1}\exp(-4w^6/\eps)$. There is a function $\hat{f}: [0,1]^{w}\rightarrow [0,1]$ satisfies
	\begin{itemize}
		\item When $\max_{i, j}|x_i - x_j| \leq \gamma$, $\hat{f}(x) = g(x)$.
		\item For any $x \in [0,1]^{w}$, $f(x) \geq \hat{f}(x) \geq f(x) - \eps$.
	\end{itemize}
\end{theorem}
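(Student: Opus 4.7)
The plan is to construct $\hat f$ as a smooth interpolation between the symmetric function $g$ on the near-diagonal region and the product function $f$ elsewhere, then verify monotonicity, submodularity, and the $\eps$-closeness to $f$. Concretely, I would set $\alpha(x):=\max_{i,j}|x_i-x_j|$ and define
\[
\hat f(x) := \big(1-\phi(\alpha(x))\big)\, g(x) + \phi(\alpha(x))\, f(x),
\]
where $\phi:[0,\infty)\to[0,1]$ is a smooth non-decreasing cutoff with $\phi\equiv 0$ on $[0,\gamma]$ and $\phi\equiv 1$ on $[\gamma^\ast,\infty)$ for a suitable threshold $\gamma^\ast > \gamma$. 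The first property $\hat f = g$ on the near-diagonal region $\{\alpha(x)\le\gamma\}$ is then built into the definition.

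For property (2), it suffices to show $0 \le f(x)-g(x) \le \eps$ on $\{\alpha(x)\le \gamma^\ast\}$, since outside this set $\hat f \equiv f$. The lower bound $f \ge g$ follows from AM--GM applied to $y_i:=1-x_i$, which gives $\prod_i y_i \le \bar y^{\,w}$. For the upper bound, I would write $y_i = \bar y + d_i$ with $\sum_i d_i = 0$ and expand $\prod_i(\bar y + d_i)$: the first-order terms cancel, and the remaining contributions are bounded by $\mathrm{poly}(w)\cdot\max_i|d_i|^2$. A $\gamma^\ast$ that is polynomially small in $\eps$ and $1/w$ therefore suffices to force $f-g \le \eps$ throughout the transition region.

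The main obstacle is preserving submodularity (monotonicity is then routine, using $g \le f$, the monotonicity of $f$ and $g$, and non-negativity of $\phi'$). A convex combination of two submodular functions with an $x$-dependent weight need not be submodular: the cross partials of $\phi(\alpha(x))$ may be positive and can, in principle, overwhelm the strictly negative cross partials of $f$ and $g$. To deal with this, I would replace the non-smooth $\alpha$ by a smoothed soft-max (such as a log-sum-exp) and make the transition window $[\gamma,\gamma^\ast]$ so wide that every derivative of $\phi(\alpha(x))$ is negligible compared to $|\partial_i\partial_j f|$ and $|\partial_i\partial_j g|$. The latter quantities are strictly negative but shrink very rapidly with $w$, which forces $\gamma$ to be doubly-exponentially small in $w$ and $1/\eps$ -- matching the peculiar choice $\gamma = w^{-1}\exp(-4w^6/\eps)$ in the statement. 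Once the transition width is set appropriately, a direct computation of $\partial_i\partial_j \hat f$ (splitting into the contributions from $g$, from $f-g$, from $\phi'(\alpha)\cdot\partial\alpha$, and from $\phi''(\alpha)$) shows that the submodular cone is preserved, completing the proof.
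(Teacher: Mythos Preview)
Your high-level plan --- smoothly interpolate between $g$ near the diagonal and $f$ far from it --- is correct, but the choice of interpolation parameter is where the argument breaks. With $\hat f=(1-\phi(\alpha))g+\phi(\alpha)f$ and $\alpha(x)$ a (smoothed) spread, monotonicity is \emph{not} routine: writing $\partial_j\hat f=(1-\phi)\partial_j g+\phi\,\partial_j f+\phi'(\alpha)\,\partial_j\alpha\cdot h$, the last term can be negative since $\partial_j\alpha<0$ whenever $x_j$ is the minimum coordinate, and in the transition region its magnitude is of order $(1/\Gamma)\cdot h\lesssim w(1-\bar x)^{w-1}$, already $w$ times the positive contribution $\partial_j g=(1-\bar x)^{w-1}$. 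The submodularity fix has the same flaw: the transition window cannot be made ``wide'', because property~(2) forces $\gamma^\ast\lesssim\eps/w$, so $|\phi'|\gtrsim w/\eps$ and $|\phi''|\gtrsim w^2/\eps^2$; together with $|\partial_i\partial_j\alpha_\tau|\sim 1/\tau$ (and $\tau\ll\gamma$ for the soft-max to track $\alpha$), the cross terms from differentiating $\phi(\alpha)\,h$ end up a factor $\sim w$ or more larger than $|\partial_i\partial_j g|=\tfrac{w-1}{w}(1-\bar x)^{w-2}$, regardless of how small $\gamma$ is taken. Your rationale for the exponential $\gamma$ --- that $\partial_i\partial_j g$ shrinks rapidly in $w$ --- is also not the right mechanism (at $\bar x\approx 0$ it is of order $1$).

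The paper (following Mirrokni--Vondr\'ak) sidesteps all of this by interpolating through $h=f-g$ itself rather than through the spread: set $\tilde f:=f-\phi(h)$, so $\partial_j\tilde f=(1-\phi'(h))\partial_j f+\phi'(h)\partial_j g$ is automatically a convex combination of two nonnegative quantities (no $\partial_j\alpha$ term appears), and the only dangerous second-order piece is $-\phi''(h)\,\partial_i h\,\partial_j h$. The structural estimate $|\partial_j h|\le w^3(1-\bar x)^{w/2-1}\sqrt{h}$ lets one choose $\phi''(t)=-\alpha/t$ with $\alpha=\tfrac{\eps}{2w^6}$ so that this piece is at most $\tfrac{\eps}{2}(1-\bar x)^{w-2}$; adding a small multiple of $g$ then restores strict submodularity. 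The exponential in $\gamma$ arises here, not from any smallness of Hessians, but from integrating $\phi''=-\alpha/t$: to drive $\phi'$ from $1$ down to $0$ one needs $\alpha\log(\eps_2/\eps_1)=1$, i.e.\ $\eps_2/\eps_1=\exp(2w^6/\eps)$.
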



Let $h(x):[0,1]^{w}\rightarrow [0,1]$ be the difference of $f(x)$ and $g(x)$, i.e., $h(x) = f(x) - g(x)$. Here are some basic properties of $h(x)$.

\begin{lemma}[Claim of \cite{mirrokni2008tight}]
	\label{lem:base-prop}
	Let $h(x)= f(x) - g(x)$, and $\max_{i, j \in [w]}|x_i - x_j| = \delta$, then one has
	\begin{itemize}
		\item $h(x) \leq w\delta(1 - \bar{x})^{w-1}$.
		\item $h(x) \geq w^{-4}\delta^2(1 - \bar{x})^{w-2}$
		\item $|\frac{\partial h}{\partial x_j}| \leq w\delta (1 - \bar{x})^{w-2}$, i.e. $|\frac{\partial h}{\partial x_j}| \leq w^3 (1 - \bar{x})^{w/2-1}\sqrt{h(x)}$.
	\end{itemize}
\end{lemma}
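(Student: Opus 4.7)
The plan is to prove all three bounds by expanding $h$ around the symmetric point $\bar x\mathbf{1}$ using elementary symmetric polynomials or, equivalently, a second-order Taylor expansion. Setting $y_i = x_i - \bar x$, we have $\sum_i y_i = 0$ and $|y_i| \le \delta$, and
\[
\prod_{i=1}^{w}(1 - x_i) \;=\; \prod_{i=1}^{w}\bigl((1 - \bar x) - y_i\bigr) \;=\; \sum_{k=0}^{w}(-1)^{k}(1 - \bar x)^{w - k}\, e_k(y),
\]
where $e_k$ is the $k$-th elementary symmetric polynomial. Since $e_1(y) = 0$ and Newton's identity gives $e_2(y) = -\tfrac12 \sum_i y_i^2$, we obtain
\[
h(x) \;=\; \tfrac12 (1 - \bar x)^{w-2} \sum_{i=1}^w y_i^2 \;+\; \sum_{k=3}^{w}(-1)^{k+1}(1 - \bar x)^{w - k}\, e_k(y),
\]
which immediately confirms $h \ge 0$ when the $k \ge 3$ tail is negligible.

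For the \emph{upper bound}, the cleanest route is the Taylor-remainder form. Writing $p(t) = \prod_i\bigl((1 - \bar x) - t y_i\bigr)$, one has $h(x) = p(0) - p(1) = -\tfrac12 p''(\xi)$ for some $\xi \in [0, 1]$ (since $p'(0) = -(1 - \bar x)^{w-1} \sum_i y_i = 0$), and $|p''(\xi)| \le w^2 \delta^2 \bigl((1 - \bar x) + \delta\bigr)^{w-2}$ by bounding each cross term. In the relevant parameter regime, where $\delta$ is at most a polynomial factor smaller than $1 - \bar x$, this quadratic estimate is already sharper than $w \delta (1 - \bar x)^{w-1}$, so the stated inequality follows after absorbing constants.

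For the \emph{lower bound} the leading $k = 2$ term dominates. Since $\delta = \max_{i, j} |x_i - x_j|$, there exists some $i^\star$ with $|y_{i^\star}| \ge \delta / 2$, hence $\sum_i y_i^2 \ge \delta^2 / 4$ and the $k = 2$ contribution is at least $\tfrac18 (1 - \bar x)^{w-2} \delta^2$. The main obstacle is bounding the $k \ge 3$ tail so that it does not cancel this leading term; using $|e_k(y)| \le \binom{w}{k} \delta^k$ gives
\[
\sum_{k \ge 3} (1 - \bar x)^{w - k} |e_k(y)| \;\le\; \sum_{k \ge 3} \binom{w}{k} \delta^k (1 - \bar x)^{w - k} \;\le\; \tfrac12 (1 - \bar x)^{w - 2} \delta^2 \cdot O\bigl(w \delta / (1 - \bar x)\bigr),
\]
which is negligible compared to the leading term provided $\delta$ is at most $(1 - \bar x)$ divided by a small polynomial in $w$. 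Combining the two estimates yields $h(x) \ge w^{-4} \delta^2 (1 - \bar x)^{w-2}$ with a comfortable margin.

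For the \emph{gradient bound}, I would differentiate directly: using $\partial \bar x / \partial x_j = 1/w$,
\[
\frac{\partial h}{\partial x_j} \;=\; \prod_{i \ne j}(1 - x_i) \;-\; (1 - \bar x)^{w-1}.
\]
Expanding $\prod_{i \ne j}\bigl((1 - \bar x) - y_i\bigr)$ in the same fashion and noting that now $e_1(y^{(-j)}) = \sum_{i \ne j} y_i = -y_j$ is nonzero, we get $\partial h / \partial x_j = y_j (1 - \bar x)^{w-2} + \text{higher-order}$, leading to $|\partial h / \partial x_j| \le w \delta (1 - \bar x)^{w-2}$ after bounding the tail exactly as before. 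The second form of the gradient bound follows automatically from the already-proved lower bound: $\sqrt{h(x)} \ge w^{-2} \delta (1 - \bar x)^{(w-2)/2}$ implies $w^3 (1 - \bar x)^{w/2 - 1} \sqrt{h(x)} \ge w \delta (1 - \bar x)^{w-2}$, matching the first form mechanically. The most delicate step throughout is controlling the higher-order tail with a small enough constant to realize the stated $w^{-4}$ factor in the lower bound.
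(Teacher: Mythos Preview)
The paper does not give its own proof of this lemma; it is quoted verbatim as a claim from \cite{mirrokni2008tight} and used as a black box in the proof of Theorem~\ref{thm:base}. So there is no in-paper argument to compare your sketch against.

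On the merits of your sketch: the expansion around $\bar x\mathbf{1}$ via elementary symmetric polynomials is exactly the right picture, and the leading term $\tfrac12(1-\bar x)^{w-2}\sum_i y_i^2$ together with $\sum_i y_i^2\ge \delta^2/2$ is the heart of the lower bound. The genuine gap is that you repeatedly invoke a smallness hypothesis on $\delta$ (``in the relevant parameter regime, where $\delta$ is at most a polynomial factor smaller than $1-\bar x$'' for the upper bound; ``provided $\delta$ is at most $(1-\bar x)$ divided by a small polynomial in $w$'' for the lower bound), whereas the lemma as stated carries no such restriction. For the paper's application this is harmless, since there $\delta\le\gamma=w^{-1}\exp(-4w^6/\eps)$ is minuscule; but as a proof of the lemma \emph{as stated} your tail estimates do not close.

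For the first inequality in particular you are working too hard. Since $|x_i-\bar x|\le\delta$ for every $i$, one has $\prod_i(1-x_i)\ge\max\{0,(1-\bar x-\delta)\}^w$, and hence
\[
h(x)\;\le\;(1-\bar x)^w-(1-\bar x-\delta)_+^{\,w}\;=\;\int_{(1-\bar x-\delta)_+}^{\,1-\bar x} w\,t^{w-1}\,dt\;\le\;w\delta(1-\bar x)^{w-1},
\]
with no regime restriction at all. Your second-order Taylor route yields a bound of order $w^2\delta^2$, which is sharper when $\delta$ is tiny but does not directly deliver the stated linear-in-$\delta$ inequality without the extra hypothesis.
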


Define $\tilde{f}(x) = f(x) - \phi(h(x))$ where $\phi:[0,1]\rightarrow [0,1]$ controls the interpolation between $f(x)$ and $g(x)$. 
\begin{lemma}[\cite{mirrokni2008tight}]
	\label{lem:phi}
	Let $\eps_1 = w\gamma = \exp(-4w^6/\eps)$, $\eps_2 = \exp(-2w^6/\eps)$, $\alpha = \frac{1}{2}w^{-6}\eps$, There is a function $\phi:[0,1] \rightarrow [0,1]$ that satisfies
	\begin{itemize}
		\item For $t \in [0, \eps_1]$, $\phi(t) = t$.
		\item For $t \in [\eps_1, \eps_2]$, the second derivative is $\phi''(t) = -\alpha / t$ and the first derivative of $\phi$ is continuous at $t=\eps_1$, hence,
		\[
		\phi'(t) = 1- \int_{\eps_1}^{t} \frac{\alpha}{\tau} d\tau = 1 - \alpha \log \frac{t}{\eps_1}.
		\]
		\item For $t > \eps_2$, we set $\phi(t) = \phi(\eps_2) < 1$.
	\end{itemize}
\end{lemma}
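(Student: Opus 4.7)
The plan is to simply carry out the piecewise construction dictated by the three bullets and then verify that it is self-consistent and lands in $[0,1]$. On $[0,\eps_1]$ set $\phi(t)=t$, so $\phi(\eps_1)=\eps_1$ and $\phi'(\eps_1^-)=1$. On $[\eps_1,\eps_2]$ integrate the differential equation $\phi''(t)=-\alpha/t$, imposing $\phi'(\eps_1)=1$ for continuity of the first derivative, to obtain $\phi'(t)=1-\alpha\log(t/\eps_1)$; then integrating again with the initial value $\phi(\eps_1)=\eps_1$ gives
\[
\phi(t)\;=\;\eps_1+(t-\eps_1)-\alpha\bigl(t\log(t/\eps_1)-t+\eps_1\bigr)
\]
for $t\in[\eps_1,\eps_2]$, using the antiderivative $\int\log(s/\eps_1)\,ds=s\log(s/\eps_1)-s$. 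On $[\eps_2,1]$ set $\phi(t)=\phi(\eps_2)$; it only remains to check that the three pieces are compatible and that the image stays in $[0,1]$.

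The key numerical identity is $\alpha\log(\eps_2/\eps_1)=1$, since
\[
\alpha\log(\eps_2/\eps_1)\;=\;\tfrac{1}{2}w^{-6}\eps\cdot\bigl(-2w^6/\eps-(-4w^6/\eps)\bigr)\;=\;\tfrac{1}{2}w^{-6}\eps\cdot\tfrac{2w^6}{\eps}\;=\;1.
\]
Plugging this into the formula for $\phi'$ yields $\phi'(\eps_2)=1-\alpha\log(\eps_2/\eps_1)=0$, so the extension as a constant on $[\eps_2,1]$ is $C^1$-compatible at $\eps_2$; combined with the $\eps_1$-matching built in above, $\phi$ and $\phi'$ are continuous on the whole interval $[0,1]$. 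Substituting the same identity into the explicit formula for $\phi(\eps_2)$ makes the $\eps_2$ terms cancel and gives
\[
\phi(\eps_2)\;=\;\alpha(\eps_2-\eps_1)\;=\;\tfrac{1}{2}w^{-6}\eps\,(\eps_2-\eps_1),
\]
which is strictly less than $1$ since $\alpha<1$ and $\eps_2-\eps_1<1$; this is the single non-trivial quantitative claim in the lemma.

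It is then immediate that $\phi$ takes values in $[0,1]$. Since $\phi''(t)=-\alpha/t<0$ on $(\eps_1,\eps_2)$, $\phi'$ decreases from $1$ at $\eps_1$ to $0$ at $\eps_2$, so $\phi'\ge 0$ throughout the middle interval; combined with $\phi'=1$ on $[0,\eps_1]$ and $\phi'=0$ on $[\eps_2,1]$, the function $\phi$ is nondecreasing on all of $[0,1]$, so $0=\phi(0)\le\phi(t)\le\phi(\eps_2)=\alpha(\eps_2-\eps_1)<1$. There is no real obstacle: the construction is essentially forced, and the only delicate step is the cancellation $\alpha\log(\eps_2/\eps_1)=1$, which is precisely what motivates the choice of the constants $\eps_1,\eps_2,\alpha$ in the statement.
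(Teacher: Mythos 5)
The paper does not prove this lemma; it is cited from \cite{mirrokni2008tight} and stated without proof in Appendix~\ref{sec:base-app}. Your reconstruction is correct and is essentially the forced argument: piecewise construction, $C^1$-matching at $\eps_1$ by fiat, and then the single computation $\alpha\log(\eps_2/\eps_1)=\tfrac{1}{2}w^{-6}\eps\cdot 2w^6/\eps=1$ does all the work. It gives $\phi'(\eps_2)=0$ (so $\phi'\ge 0$ throughout the middle interval, hence $\phi$ is nondecreasing with $\phi(0)=0$), and after the cancellation in the closed-form primitive it gives $\phi(\eps_2)=\alpha(\eps_2-\eps_1)<1$. Both the antiderivative $\int\log(s/\eps_1)\,ds=s\log(s/\eps_1)-s$ and the resulting formula $\phi(t)=t-\alpha\bigl(t\log(t/\eps_1)-t+\eps_1\bigr)$ check out, and the endpoint values $\phi(\eps_1)=\eps_1$, $\phi(\eps_2)=\alpha(\eps_2-\eps_1)$ are verified correctly, so $\phi$ maps $[0,1]$ into $[0,\alpha(\eps_2-\eps_1)]\subset[0,1]$. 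No gaps.
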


Next, we show that $\tilde{f}$ is close to a monotone submodular function, by bounding its first derivative and second derivative.
\begin{lemma}
	\label{lem:derivative}
	For any $i, j \in [w]$, one has
	\[
	\frac{\partial \tilde{f}}{\partial x_j} \geq 0 \quad \text{and}\quad	\frac{\partial^2 \tilde{f}}{\partial x_i \partial x_j}  \leq\frac{1}{2}\eps(1 - \bar{x})^{w-2}.
	\]
\end{lemma}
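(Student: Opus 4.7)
My plan is to leverage the identity $\tilde{f}=f-\phi(f-g)$. By the chain rule applied to $h=f-g$,
\[
\frac{\partial\tilde{f}}{\partial x_j}=(1-\phi'(h))\frac{\partial f}{\partial x_j}+\phi'(h)\frac{\partial g}{\partial x_j}.
\]
Lemma~\ref{lem:phi} gives $\phi'(t)\in[0,1]$ on each of the three regimes: $\phi'\equiv 1$ on $[0,\eps_1]$; on $[\eps_1,\eps_2]$ the derivative $\phi'(t)=1-\alpha\log(t/\eps_1)$ decreases from $1$ at $\eps_1$ to $1-\alpha\log(\eps_2/\eps_1)=1-\tfrac{\eps}{2w^6}\cdot\tfrac{2w^6}{\eps}=0$ at $\eps_2$; and $\phi'\equiv 0$ beyond $\eps_2$. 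Since $\partial f/\partial x_j=\prod_{k\ne j}(1-x_k)\ge 0$ and $\partial g/\partial x_j=(1-\bar{x})^{w-1}\ge 0$, the displayed convex combination is nonnegative, proving monotonicity.

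For the second derivative, I would differentiate the displayed expression once more in $x_i$ to obtain
\[
\frac{\partial^2\tilde{f}}{\partial x_i\partial x_j}=(1-\phi'(h))\frac{\partial^2 f}{\partial x_i\partial x_j}+\phi'(h)\frac{\partial^2 g}{\partial x_i\partial x_j}-\phi''(h)\frac{\partial h}{\partial x_i}\frac{\partial h}{\partial x_j}.
\]
The first two summands are nonpositive: one checks $\partial^2 f/\partial x_i\partial x_j=-\prod_{k\ne i,j}(1-x_k)\le 0$ for $i\ne j$ and $=0$ for $i=j$, while $\partial^2 g/\partial x_i\partial x_j=-\tfrac{w-1}{w}(1-\bar{x})^{w-2}\le 0$; and the coefficients $1-\phi'(h),\phi'(h)$ both lie in $[0,1]$. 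So the entire burden falls on the last term.

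The key step is controlling the $\phi''$ contribution. Outside $[\eps_1,\eps_2]$ we have $\phi''\equiv 0$, so assume $h\in[\eps_1,\eps_2]$, in which case $\phi''(h)=-\alpha/h$. Applying the third bullet of Lemma~\ref{lem:base-prop} to both factors,
\[
-\phi''(h)\frac{\partial h}{\partial x_i}\frac{\partial h}{\partial x_j}\le\frac{\alpha}{h}\left|\frac{\partial h}{\partial x_i}\right|\left|\frac{\partial h}{\partial x_j}\right|\le\frac{\alpha}{h}\cdot w^6(1-\bar{x})^{w-2}\,h=\alpha w^6(1-\bar{x})^{w-2}.
\]
With $\alpha=\eps/(2w^6)$ this is exactly $\tfrac{\eps}{2}(1-\bar{x})^{w-2}$, matching the claimed bound after adding the two nonpositive terms from the previous paragraph. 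The only real subtlety is that the bound $|\partial h/\partial x_j|^2\le w^6(1-\bar{x})^{w-2}h$ has precisely the right single power of $h$ needed to cancel the $1/h$ in $\phi''(h)$; this cancellation is what forces the definition of $\alpha$ (and hence of the thresholds $\eps_1,\eps_2$) in Lemma~\ref{lem:phi}, and verifying it is the main thing I would double-check in a full write-up.
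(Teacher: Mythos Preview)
Your proof is correct and follows essentially the same approach as the paper: the same chain-rule decomposition for both derivatives, the same observation that the first two Hessian terms are nonpositive convex combinations, and the same cancellation of the $1/h$ in $\phi''$ against the $h$ coming from squaring the bound $|\partial h/\partial x_j|\le w^3(1-\bar x)^{w/2-1}\sqrt{h}$. If anything, your write-up is slightly cleaner in explicitly verifying $\phi'\in[0,1]$ across all three regimes and in stating that the first two second-order terms are nonpositive (the paper's text contains a sign typo there).
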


\begin{proof}
	For the first derivative, we have
	\[
	\frac{\partial \tilde{f}}{\partial x_j} = \frac{\partial f}{\partial x_j} - \phi'(h) \frac{\partial h}{\partial x_j} = (1 - \phi'(h))\frac{\partial f}{\partial x_j}  + \phi'(h)\frac{\partial g}{\partial x_j}. 
	\]
	Since $0 \leq \phi'(h) \leq 1$,  and  $\frac{\partial g}{\partial x_j}, \frac{\partial f}{\partial x_j}\geq 0$, the first partial derivative is always non-negative.
\end{proof}


For the second partial derivatives, we have
\begin{align*}
\frac{\partial^2 \tilde{f}}{\partial x_i \partial x_j} =&~ \frac{\partial^2 f}{\partial x_i \partial x_j} - \phi'(h)\frac{\partial^2 h}{\partial x_i \partial x_j} - \phi^{''}(h)\frac{\partial h}{\partial x_i} \frac{\partial h}{\partial x_j} \\
= &~(1 - \phi'(h)) \frac{\partial^2 f }{\partial x_i \partial x_j} + \phi'(h)\frac{\partial^2 g}{\partial x_i \partial x_j} - \phi''(h)\frac{\partial h}{\partial x_i} \frac{\partial h}{\partial x_j} 
\end{align*} 

The first two terms are non-negative. To control the third term, we note that $\phi''(h) \leq \alpha/h$ (Lemma~\ref{lem:phi}) and by Lemma~\ref{lem:base-prop}
\[
\left|\frac{\partial h}{\partial x_j}\right| \leq w^3 (1 - \bar{x})^{w/2-1}\sqrt{h(x)}
\]

Hence, we conclude that
\[
\frac{\partial^2 \tilde{f}}{\partial x_i \partial x_j} \leq \left|\phi''(h)\frac{\partial h}{\partial x_i} \frac{\partial h}{\partial x_j} \right| \leq \alpha w^6(1 - \bar{x})^{w-2} = \frac{1}{2}\eps (1 - \bar{x})^{w-2}
\]

\begin{proof}[Proof of Theorem~\ref{thm:base-app}]
	In order to make $\tilde{f}$ a submodular function, we need to make the second partial derivatives non-positive. Note that
	$\frac{\partial^2 g}{\partial x_i \partial x_j} = -\frac{w-1}{w}(1- \bar{x})^{w-2}$. Therefore, it suffices to take 
	\[
	\hat{f}(x) = \frac{1}{1+\eps}\left(\tilde{f}(x) + \eps g(x)\right).
	\]
	
	It is clear that $\hat{f}$ is nonnegative and $f(x) \leq \frac{1}{1+\eps}(\tilde{f}(x) + \eps g(x)) \leq \frac{1}{1+\eps}(1 + \eps ) = 1$. The function $\hat{f}$ is monotone since both $\tilde{f}$ and $g$ are monotone. For second partial derivatives, by Lemma~\ref{lem:derivative} one has
	\[
	\frac{\partial^2 f}{\partial x_i \partial x_j} \leq \frac{2}{2+\eps}\left(\frac{1}{2}\eps (1 - \bar{x})^{w-2} - \eps\frac{w-1}{w}(1- \bar{x})^{w-2} \right) \leq 0.
	\] 
	
	Hence, we proved $\hat{f}$ is sbumodular. When $\max_{i, j}|x_i - x_j| \leq \gamma$, one has $h(x) \leq w\gamma(1-\bar{x})^{w-1} \leq w\gamma = \eps_1$. Therefore, $\phi(h(x)) = h(x) =f(x) - g(x)$ and
	\begin{align*}
	\hat{f}(x) =&~ \frac{1}{1+\eps}\left(\tilde{f}(x) + \eps g(x)\right) =  \frac{1}{1+\eps}\left(f(x) - \phi(h(x)) + \eps g(x)\right) \\
	=&~  \frac{1}{1+\eps}\left(g(x) +\eps g(x)\right) = g(x).
	\end{align*}
	While for any $x \in [0,1]$, one has
	\[
	\hat{f}(x) = \frac{1}{1+\eps}\left(\tilde{f}(x) + \eps g(x)\right) \leq \frac{1}{1+\eps}\left(f(x) + \eps f(x)\right) = f(x)
	\]
	and 
	\[
	\hat{f}(x)  \geq \frac{1}{1+\eps}\tilde{f}(x) \geq \frac{1}{1+\eps}(f(x) - \eps_2) \geq f(x) \geq f(x) - \eps.
	\]

	
\end{proof}

We next provide the proof for Lemma~\ref{lem:base2}, which provides basic properties of $\hat{F}$.

\begin{proof}[Proof of Lemma~\ref{lem:base2}]
	Since $\hat{f}(x_i) \in [0,1]$, we have $\hat{F}(x) \in [0,1]$. $\hat{F}$ is monotone, because for any $i \in [m], j \in [w]$,
	\[
	\frac{\partial \hat{F}}{\partial x_{i, j}} = \frac{\partial \hat{f}}{\partial x_{i, j}}\prod_{\ell \neq i}\big(1 - \hat{f}(x_{\ell})\big) \geq 0.
	\]
	Next we show that $\hat{F}$ is submodular. For any $i, i' \in [m], j, j' \in [w]$, when $i \neq i'$, 
	one has
	\[
	\frac{\partial^2 \hat{F}}{\partial x_{i, j}\partial x_{i', j'}}  = - \frac{\partial \hat{f}}{\partial x_{i, j}}\frac{\partial \hat{f}}{\partial x_{i', j'}} \prod_{\ell \neq i, i'}\big(1 - \hat{f}(x_{\ell})\big) \leq 0 
	\]
	given that $\hat{f}$ is monotone.
	When $i = i'$, since $\hat{f}$ is submodular, 
	one has
	\[
	\frac{\partial^2 \hat{F}}{\partial x_{i, j}\partial x_{i', j'}}  = \frac{\partial^2 \hat{f}}{\partial x_{i, j}\partial x_{i, j'}}\prod_{\ell \neq i}\big(1 - \hat{f}(x_{\ell})\big) \leq 0.
	\]
	Hence, we conclude $\hat{F}$ is submodular.
	
	The second property directly follows from Theorem~\ref{thm:base}.
	For the last property, suppose $x_{i, j} = 1$. Then we have $\smash{\hat{f}(x_i) \geq f(x_i) - \eps = 1 - \eps}$. Hence, 
	\[
	\hat{F}(x) = 1 - \prod_{i \in [s]}(1 - \hat{f}(x_i)) \geq 1 - \eps.
	\]
	This finishes the proof of the lemma.
\end{proof}

\subsection{Missing proof from Section~\ref{sec:lower1-contruction}}

We provide the proof for Lemma~\ref{hehehehe3}.

\begin{proof}[Proof of Lemma~\ref{hehehehe3}]
	Let $S_B$ be the index defined as
	\[
	S_B:=\{i: S\cap B_i \neq \emptyset, i\in [m]\}.
	\]
	We also define index set $S_{A, \pi}$ for the bijection $\pi, \pi'$
	\[
	S_{A, \pi}:= \{i: S\cap A_{\pi(i)} \neq \emptyset, i\in [m]\} \quad \text{and} \quad S_{A, \pi'}:= \{i: S\cap A_{\pi'(i)} \neq \emptyset, i\in [m]\}.
	\]
	Suppose $\pi, \pi'$ satisfies the condition of the Lemma, we conclude that
	\begin{align*}
	S_B \cap S_{A, \pi} = 	S_B \cap S_{A, \pi'} =  \{i:    S\cap B_i \neq \emptyset \text{ and } S\cap (A_{\pi(i)} \cup A_{\pi'(i)})  \neq \emptyset ,  i \in [m]  \}.
	\end{align*}
	
	Consider the following mapping $\sigma: [m]\rightarrow [m]$, $\sigma$ is identity on $[m]\backslash (S_{A, \pi} \cup S_{A, \pi'}\backslash S_B)$, and it forms an one-to-one mapping on $S_{A, \pi} \cup S_{A, \pi'}\backslash S_B$, such that for any $i \in S_{A, \pi}$, $\pi(i) = \pi'(\sigma(i))$. One can prove such mapping always exists when the condition of the Lemma holds. and it is an one-to-one mapping on $[m]$.
	
	We can extend $\sigma$ and define
	$\sigma(I):=\{\sigma(i): i \in I\}$ for any $I \subseteq [m]$. Since $\sigma$ is an one-to-one mapping on $[m]$, we have that $\sigma$ forms a bijection on $2^{[m]}$ and $|\sigma(I)| = |I|$.
	In order to show $\calG_{\pi}(S) = \calG_{\pi'}(S)$, it suffices to prove 
	\begin{align}
	\sum_{I\subseteq [m]} \beta^{|I|}(1- \beta)^{m - |I|}G(x^{S, I, \pi}) = \sum_{I \subseteq [m]}\beta^{|I|}(1-\beta)^{m - |I|}G(x^{S, I, \pi'}). \label{eq:lower1-ids1}
	\end{align}
	
	Let $\bar{x}^{S, I, \pi}_i := \frac{1}{w}\sum_{j\in [w]}\bar{z}^{S}_{i, j} = \frac{1}{w}\sum_{j\in [w]}\frac{|S\cap B_{i, j}|}{(1 - \alpha)k} =\frac{1}{w}\frac{|S\cap B_{i}|}{(1 - \alpha)k} $ for $i \notin I$ and $\bar{x}^{S, I, \pi}_i := \frac{1}{w}\sum_{j\in [w]}\bar{y}^{S}_{\pi(i), j} = \frac{1}{w}\sum_{j\in [w]}\frac{|S\cap A_{\pi(i), j}|}{\alpha k} = \frac{1}{w}\frac{|S\cap A_{\pi(i)}|}{\alpha k}$. We define $\bar{x}^{S, I, \pi'}_i$ accordingly. Then we have
	
	\begin{align}
	G(x^{S, I, \pi}) =&~ 1 - \prod_{i \in [m]}(1 - g(x^{S, I, \pi}_i))  = 1 - \prod_{i \in [m]}(1 - \bar{x}^{S, I, \pi}_i)^w\notag\\
	=&~ 1- \prod_{i \in [m]\backslash I} \left(1 - \frac{1}{w}\frac{|S\cap B_{i}|}{(1 - \alpha)k}\right)^w\prod_{i\in I} \left(1 - \frac{1}{w}\frac{|S\cap A_{\pi(i)}|}{\alpha k}\right)^w\notag\\
	=&~ 1- \prod_{i \in [m]\backslash I} \left(1 - \frac{1}{w}\frac{|S\cap B_{i}|}{(1 - \alpha)k}\right)^w\prod_{i\in I} \left(1 - \frac{1}{w}\frac{|S\cap A_{\pi'(\sigma(i))}|}{\alpha k}\right)^w\notag\\
	= &~ 1- \prod_{i \in [m]\backslash \sigma(I)} \left(1 - \frac{1}{w}\frac{|S\cap B_{i}|}{(1 - \alpha)k}\right)^w\prod_{i\in \sigma(I)} \left(1 - \frac{1}{w}\frac{|S\cap A_{\pi'(i)}|}{\alpha k}\right)^w\notag\\
	= &~ 1 - \prod_{i \in [m]}(1 - \bar{x}^{S, \sigma(I), \pi'}_i)^w = 1 - \prod_{i \in [m]}(1 - g(x^{S, \sigma(I), \pi'}_i))G(x^{S, \sigma(I), \pi}).\label{eq:lower1-ids2}
	\end{align}
	The first three steps holds due to the definition $G$.
	The fourth step follows from for any $i \in S_{A, \pi}$, $\pi(i) = \pi'(\sigma(i))$, and for $i \notin S_{A, \pi}$, $|S\cap A_{\pi(i)}| = 0 = |S\cap A_{\pi'(\sigma(i))}|$, and therefore $|S\cap A_{\pi(i)}| = |S\cap A_{\pi'(\sigma(i))}|$ holds for all $i \in [m]$.
	The fifth step follows from $\sigma$ is identity on $S_{B}$, and hence,
	\[
	\prod_{i \in [m]\backslash I} \left(1 - \frac{1}{w}\frac{|S\cap B_{i}|}{\alpha k}\right)^w = 	\prod_{i \in [m]\backslash \sigma(I)} \left(1 - \frac{1}{w}\frac{|S\cap B_{i}|}{(1 - \alpha)k}\right)^w.
	\]
	The last three steps follow from the definition of $G$.
	
	Note that Eq.~\eqref{eq:lower1-ids2} implies Eq.~\eqref{eq:lower1-ids1} as $\sigma$ creates an one-to-on mappting from $2^{[m]}$ to $2^{[m]}$ and $|I| = |\sigma(I)|$ for any $I \subseteq [m]$.
\end{proof}

Next, we prove Lemma \ref{hardlemma}.

\begin{proof}[Proof of Lemma~\ref{hardlemma}]
Let $S$ be a set that is balanced, and we assume it consists of $\lambda k$ element from $B_{i^{*}}$ and $(1 - \lambda k)$ from the $A_1, \cdots, A_{m}$. 

For any $I \subseteq [m]$, suppose $i^{*} \in I$. For any $i \notin I$, we know that $\bar{x}_{i}^{S, I,\pi} := \frac{1}{w}\sum_{j \in[w]}\frac{|B_{i, j}\cap S|}{(1 -\alpha)k} = 0$ and for any $i \in I$, $\bar{x}_{i}^{S, I, \pi} = \bar{y}^{S}_{\pi(i)} := \sum_{j\in [w]}\frac{|S \cap A_{\pi(i), j}|}{\alpha k}$, hence, we have 
\begin{align}
\hat{F}(x^{S, I, \pi}) = G(x^{S, I, \pi}) = &~ 1 - \prod_{i \in I}\left(1 - \frac{\bar{x}_{i}^{S, I, \pi}}{\alpha w k} \right)^{w} \notag\\
\leq &~ 1 - \prod_{i \in I}\exp\left(-w(1+\eps)\frac{\bar{x}_{i}^{S, I, \pi}}{\alpha w k} \right)\notag \\
= &~ 1 - \exp\left(-(1+\eps)\frac{\sum_{i \in I}\bar{x}_{i}^{S, I, \pi}}{\alpha k} \right)\notag\\
\leq &~ 1 - \exp\left(-\frac{\sum_{i \in I}\bar{x}_{i}^{S, I, \pi}}{\alpha k} \right)+ \eps.\label{eq:lower1-1}
\end{align}
The first step holds since $S$ is balanced, the second step follows from the definition. The third step follows from $1 - x \geq e^{-(1+\eps)x}$ when $x \leq \eps/10$ and $\frac{x_{i}^{S, I, \pi}}{\alpha w k} \leq \frac{(1 -\lambda)k}{\alpha w k} \leq \frac{1}{\alpha w} \leq \eps/10$. The last step follows from $\exp(-c(1 + \eps)) \geq \exp(-c) - \eps$ for any $\eps > 0$ and $c>0$.

For any $i \in [m]$, let $X^{S, I, \pi}_{i}$ be a random variable such that $X^{S, I, \pi}_{i}= \bar{x}^{S, I, \pi}_{i}$ with probability $\beta$ and $X^{S, I, \pi}_{i}= 0$ with probability $1 - \beta$. Let $K(x): \R^{m}\rightarrow \R$, and for any $x \in \R^{m}$, $K(x) = 1 - \exp(\sum_{i=1}^{m}x_{i} / \alpha k)$. It is easy to verify that $K$ is concave and one has
\begin{align}
\sum_{I \subseteq [m], i^{*} \in I}\beta^{|I|}(1 - \beta )^{m - |I|} \hat{F}(x^{S, I, \pi}) \leq &~ \sum_{I \subseteq [m], i^{*}\in I}\beta^{|I|}(1 - \beta )^{m - |I|}  \left(1 - \exp\left(-\frac{\sum_{i \in I}\bar{x}_{i}^{S, I, \pi}}{\alpha k} \right)+ \eps\right)\notag\\
= &~ \sum_{I \subseteq [m], i^{*}\in I}\beta^{|I|}(1 - \beta )^{m - |I|} \left(1 - \exp\left(-\frac{\sum_{i \in I}\bar{x}_{i}^{S, I, \pi}}{\alpha k} \right)\right)+\beta\eps\notag\\
= &~ \beta \E_{X^{S, I, \pi}_{1}, \cdots, X^{S, I, \pi}_{m}} \left[K(X^{S, I, \pi}_{i}, \cdots, X^{S, I, \pi}_{i})\right] + \beta\eps\notag\\
\leq &~ \beta K(\bar{x}^{S, I, \pi}_{1}, \cdots, \bar{x}^{S, I, \pi}_{m})+ \beta\eps\notag\\
= &~ \beta \left( 1 - \exp\left(\frac{-(1- \lambda)\beta}{\alpha} \right)  \right) + \beta\eps. \label{eq:lower1-2}
\end{align}
The first step follows from Eq.~\eqref{eq:lower1-1}, the third step follows from $\pi(i^{*}) \notin S$, and therefore, $\bar{x}_{i^{*}}^{S, I, \pi} = 0$ when $i^{*} \in I$. The fourth step follows the Jenson's inequality and the concavity of $K$. The last step follows from $\sum_{i \in[m]}\bar{x}^{S, I, \pi}_{i}= (1 - \lambda)k$.

Similarly, when $i^{*} \notin I$, we have $\bar{x}_{i}^{S, I, \pi} = \bar{y}^{S}_{\pi(i)} := \sum_{j\in [w]}\frac{|S \cap A_{\pi(i), j}|}{\alpha k}$ for $i \in I$, $\bar{x}_{i^{*}}^{S, I, \pi}= \bar{z}_{i^{*}}^{S} = \frac{\lambda k }{(1 -\alpha)kw} = \frac{\lambda}{(1 -\alpha)w}$ and $\bar{x}_{i}^{S, I, \pi}  = \frac{1}{w}\sum_{j \in w}\frac{|B_{i, j}\cap S|}{(1 -\alpha)k} = 0$ for $i \in [m] \backslash (I \cup \{i^{*}\})$. Therefore,
\begin{align}
\hat{F}(x^{S, I, \pi}) = G(x^{S, I, \pi})  = &~ 1 - \prod_{i \in I}\left(1 - \frac{\bar{x}^{S, I, \pi}_{i}}{\alpha w k} \right)^{w}\left(1 -\bar{x}^{S, I, \pi}_{i^{*}}\right)^{w} \notag\\
=&~ 1 - \prod_{i \in I}\left(1 - \frac{\bar{x}^{S, I, \pi}_{i}}{\alpha w k} \right)^{w}\left(1 - \frac{\lambda}{(1- \alpha)w}\right)^{w} \notag\\
\leq &~ 1 - \prod_{i \in I}\exp\left(-w(1+\eps)\frac{\bar{x}^{S, I, \pi}_{i}}{\alpha w k} \right)\exp\left(-w(1+\eps)\frac{\lambda}{(1 - \alpha)w} \right)\notag\\
= &~ 1 - \exp\left(-(1+\eps)\frac{\sum_{i \in I}\bar{x}^{S, I, \pi}_{i}}{\alpha  k} \right)\exp\left(-(1+\eps)\frac{\lambda}{1 - \alpha} \right)\notag\\
\leq  &~ 1 - \exp\left(-\frac{\sum_{i \in I}\bar{x}^{S, I, \pi}_{i}}{\alpha  k} \right)\exp\left(-\frac{\lambda}{1 - \alpha} \right) + 2\eps. \label{eq:lower1-3}
\end{align}
The fourth step follows from $1 - x \geq e^{-(1+\eps)x}$ when $x \leq \eps/10$, $\frac{\bar{x}_{S, A, i}}{\alpha w k} \leq \frac{(1 -\lambda)k}{\alpha w k} \leq \frac{1}{\alpha w} \leq \eps/10$ and $\frac{\lambda}{(1-\alpha)w} \leq \frac{1}{(1-\alpha)w} \leq O(\eps)$. The sixth step follows from $\exp(-c(1 + \eps)) \geq \exp(-c) - \eps$ for any $\eps > 0$ and $c>0$.

Hence, we have
\begin{align}
&~\sum_{I \subseteq [m], i_t \notin I}\beta^{|I|}(1 - \beta )^{m - |I|} \hat{F}(x^{S, I, \pi}) \notag\\
\leq&~ \sum_{I \subseteq [m], i\notin I}\beta^{|I|}(1 - \beta )^{m - |I|}\left(1 - \exp\left(-\frac{\sum_{i \in I}x^{S, I, \pi}_i}{\alpha  k} \right)\exp\left(-\frac{\lambda}{1 - \alpha} \right) + 2\eps\right)\notag\\
=&~ \sum_{I \subseteq [m], i\notin I}\beta^{|I|}(1 - \beta )^{m - |I|}\left(1 - \exp\left(-\frac{\sum_{i \in I}x^{S, I, \pi}_i}{\alpha  k} \right)\exp\left(-\frac{\lambda}{1 - \alpha} \right) \right) + 2(1 - \beta)\eps\notag\\
\leq &~ (1 - \beta) \left(1 - \exp\left(\frac{(1- \lambda)\beta}{\alpha}\right)\exp\left( - \frac{\lambda}{1-\alpha}\right)\right) + 2(1 - \beta)\eps.\label{eq:lower1-4}
\end{align}
The first step holds due to Eq.~\eqref{eq:lower1-3}, the third step holds due to concavity and Jensen's inequality.

Summing these up, we have
\begin{align*}
\mathcal{F}_{c, \pi}(S) =&~\min\left\{ \sum_{I \subseteq [m]}\beta^{|I|}(1 - \beta )^{m - |I|} \hat{F}(x^{S, I, \pi}) + \frac{\eps}{k}|S|, 1 \right\}\\
= &~\sum_{I \subseteq [m]}\beta^{|I|}(1 - \beta )^{m - |I|} \hat{F}(x^{S, I, \pi}) + \eps\\
= &~ \sum_{I \subseteq [m], i^{*} \in I}\beta^{|I|}(1 - \beta )^{m - |I|} F(x^{S, I, \pi}) + \sum_{I \subseteq [m], i^{*} \notin I}\beta^{|I|}(1 - \beta )^{m - |I|} \hat{F}(x^{S, I, \pi})\\
\leq &~ \beta \left( 1 - \exp\left(\frac{-(1- \lambda)\beta}{\alpha} \right)  \right) + \beta\eps+ (1 - \beta) \left(1 - \exp\left(\frac{(1- \lambda)\beta}{\alpha}\right)\exp\left( - \frac{\lambda}{1-\alpha}\right)\right) + 2(1 - \beta)\eps\\
\leq &~ \beta \left( 1 - \exp\left(\frac{-(1- \lambda)\beta}{\alpha} \right)  \right) + (1 - \beta) \left(1 - \exp\left(\frac{(1- \lambda)\beta}{\alpha}\right)\exp\left( - \frac{\lambda}{1-\alpha}\right)\right) + 3\eps\\
:= &~ F(\alpha, \beta,\lambda) + 3\eps.
\end{align*}
The fourth step holds due to Eq.~\eqref{eq:lower1-2}~\eqref{eq:lower1-4}.

It remains to solve
\[
\min_{\alpha, \beta \in (0,1)}\max_{\lambda \in (0,1)} F(\alpha, \beta, \lambda).
\]
Recalls that $\alpha$ and $\beta$ is up to our choice, where $\lambda$ is chosen by the algorithm to maximize its approximation ratio, by taking $\alpha = 0.56$, $\beta =0.42$, one can verify that 
\[
Q(\alpha, \beta) := \max_{\lambda \in (0,1)} F(\alpha, \lambda, \beta) < 0.5839.
\]

\end{proof}

\section{Missing proof from Section~\ref{sec:lower2}}
\label{sec:lower2-app}

\subsection{Missing proof from Section~\ref{sec:lower2-construction}}

We first provide the proof of Lemma~\ref{usefulproperties}, which provides basic properties of $\bR$.
\begin{proof}[Proof of Lemma~\ref{usefulproperties}]
	For the first property, for any node $u = (u_1, \ldots, u_{\ell})$ at depth $\ell$, $u \in \bR$ iff (1) none of the nodes on the leaf-to-root path get sampled, i.e.,  $u_1 \notin \bR, \ldots, (u_1, \ldots, u_{\ell-1})\notin \bR$, (2) $u$ is sampled. The probability can be caluclated as 
	\[
	\prod_{i =1}^{\ell-1}(1 - p_i) p_{\ell} = \prod_{i =1}^{\ell-1}\frac{1 - \sum_{j =1}^{i}w_j}{1 - \sum_{j =1}^{i-1}w_j} \cdot \frac{w_\ell}{1 - \sum_{i=0}^{\ell-1}w_{i}} = w_\ell .
	\] 
	The second property follows directly from the $\Sample$ procedure and the fact that $p_{L} =1$.
\end{proof}

We first provide the proof of Lemma~\ref{lem:basicbasic}, which provides basic properties of $\calF$.
\begin{proof}[Proof of Lemma~\ref{lem:basicbasic}]
	For the first claim, for any $S\subseteq V$ and any $R$ in the support of $\mathcal{D}$, it is clear that $g_{R}$ is (continuous) monotone and submodular. By the definition of $x_{u}^{S}$, this implies monotonicity and submodularity on the discrete function. As addition and min operation keep monotonicity and submodularity, we have that $\mathcal{F}$ is monotone and submodular.
	For the second claim, when $|S| \geq k/\eps$, one has $\mathcal{F}(S) \geq\min\{\eps|S|/k, 1\} = 1$. 
\end{proof}


Recall we define the objective function $\mathcal{F}: 2^{V}\rightarrow \R^{+}$ as 
\begin{align*}
\mathcal{F}(S) =\min\left\{  \calG(x^{S})+ \frac{\eps}{k}|S|, \hspace{0.04cm}1\right\} =  \min\left\{  \E_{\bR\sim \calD} \big[g_{\bR}(x^{S} )\big]+ \frac{\eps}{k}|S|, \hspace{0.04cm}1\right\},
\end{align*}
and
\begin{align*}
g_{R}(x) :=1 - \prod_{u\in R}\left(1 - x_u\right)\quad\text{and}
\end{align*}

The following Lemma can be seen an extension of Lemma 5.3 in \cite{feldman2020one}.
\begin{lemma}[Restatement of Lemma~\ref{lem:lower2-2}]
	\label{lem:lower2-app}
	For any leaf $u\in U_L$  and $S\subseteq W_u$ with $|S|\le k$ and 
	$S\cap \cal
	=\emptyset$, 
	we have
	\[
	\mathcal{F}(S) \leq 0.5+O\left(\eps\log^2(1/\eps)\right).
	\]
\end{lemma}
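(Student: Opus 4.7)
The plan is to compute $\calG(x^S)$ in closed form by conditioning on the depth at which the random sample $\bR\sim\calD$ cuts the root-to-$u$ path, and then invoke the one-dimensional optimization that the weight sequence of Definition~\ref{def:weight-sequence} was designed to resolve. Write $P_\ell := (u_1,\ldots,u_\ell)$ for the depth-$\ell$ ancestor of $u$, with $P_L = u$. By Lemma~\ref{usefulproperties}, every $R$ in the support of $\calD$ meets $\{P_1,\ldots,P_L\}$ at exactly one node; let $\tau$ denote its depth, so that $\Pr[\tau = \ell] = w_\ell$. Introduce the off-path buckets
\[
T_\ell := \bigl\{(u_1,\ldots,u_{\ell-1},j) : j\in [m_\ell],\ j\ne u_\ell\bigr\}\ \text{for } \ell\in [L-1],\qquad T_L := \{u\},
\]
and set $T := T_1\cup\cdots\cup T_L$. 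Because $S\subseteq W_u$ and $S\cap\calA_u=\emptyset$, we have $x^S_v = 0$ for every $v\notin T$.

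Next I would trace the $\Sample$ recursion to show that, conditional on $\tau=\ell$, the events $\{v\in\bR\}$ for $v\in T_{\ell'}$ with $\ell'\le\ell$ are mutually independent, each of probability $p_{\ell'}$: along the path the decisions at $P_0,\ldots,P_{\ell-1}$ are forced to ``recurse'' and the one at $P_\ell$ to ``keep,'' so for every sibling $v$ of a path node the call $\Sample(v)$ is triggered and resolves with an independent $p_{\ell'}$-coin, whereas for $\ell'>\ell$ no descendant of $P_\ell$ is visited, hence $\bR\cap T_{\ell'}=\emptyset$ (note that $p_L = 1$ since $\sum_i w_i = 1$, consistent with $u = P_L\in\bR$ whenever $\tau = L$). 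Multiplying factor by factor yields the exact identity
\[
\calG(x^S) \;=\; 1 \;-\; \sum_{\ell=1}^L w_\ell\prod_{\ell'\le\ell}\prod_{v\in T_{\ell'}}\bigl(1 - p_{\ell'}\,x^S_v\bigr),
\]
and applying $1-y\le e^{-y}$ factorwise together with the abbreviation $X_{\ell'} := \sum_{v\in T_{\ell'}} x^S_v$ gives
\[
\calG(x^S) \;\le\; 1 \;-\; \sum_{\ell=1}^{L} w_\ell\exp\!\left(-\sum_{\ell'\le\ell}p_{\ell'}X_{\ell'}\right).
\]

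The main obstacle is to bound the right-hand side above by $1/2 + O(\eps\log^2(1/\eps))$ over all $X_\ell\ge 0$ satisfying the single budget $\sum_\ell X_\ell = |S|/(\eps k)\le 1/\eps$. Setting $Y_\ell := \sum_{\ell'\le\ell} p_{\ell'} X_{\ell'}$ transforms this into minimizing $\sum_\ell w_\ell e^{-Y_\ell}$ over non-decreasing sequences $0 = Y_0\le Y_1\le\cdots\le Y_L$ with $\sum_\ell (Y_\ell - Y_{\ell-1})/p_\ell\le 1/\eps$, which is precisely the optimization solved in Lemma~5.3 of~\cite{feldman2020one}: the recursion defining $\delta_\ell$ is exactly the first-order Lagrangian condition that balances consecutive terms and collapses the minimum to $1/2$ to leading order, with $O(\eps\log^2(1/\eps))$ accumulated slack coming from the exponential relaxation and from the length $L = 1/\eps$ of the sequence. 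The only generalization I need beyond~\cite{feldman2020one} is to observe that $X_{\ell'}$ aggregates the mass of $S$ across the $m_{\ell'}-1$ off-path buckets at level $\ell'$, which does not affect the analysis since only the aggregate enters after the $1-y\le e^{-y}$ step. Combining with $\calF(S)\le\calG(x^S) + \eps|S|/k\le\calG(x^S) + \eps$ then yields the claimed bound $\calF(S)\le 1/2 + O(\eps\log^2(1/\eps))$.
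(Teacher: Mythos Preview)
Your exact identity for $\calG(x^S)$ is correct and coincides with the paper's computation. The fatal gap is the next step: the inequality $1-y\le e^{-y}$ points the \emph{wrong way} for what you need. To pass from
\[
\calG(x^S)=1-\sum_{\ell=1}^L w_\ell\prod_{\ell'\le\ell}\prod_{v\in T_{\ell'}}(1-p_{\ell'}x^S_v)
\]
to an \emph{upper} bound on $\calG(x^S)$, you must \emph{lower} bound each product $\prod_v(1-p_{\ell'}x^S_v)$. But $1-y\le e^{-y}$ only yields $\prod_v(1-p_{\ell'}x^S_v)\le e^{-p_{\ell'}X_{\ell'}}$, so what you actually derive is the useless reverse inequality $\calG(x^S)\ge 1-\sum_\ell w_\ell\exp(-\sum_{\ell'\le\ell}p_{\ell'}X_{\ell'})$. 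Note also that $p_{\ell'}x^S_v$ is not small in general (e.g.\ $p_L=1$), so no ``$1-y\ge e^{-(1+\eps)y}$ for small $y$'' patch is available.

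The paper instead uses the pointwise bound $(1-p\,x)\ge(1-p)^x$ for $p,x\in[0,1]$ (Claim~\ref{clm:tech1}), which gives $\prod_v(1-p_{\ell'}x^S_v)\ge(1-p_{\ell'})^{X_{\ell'}}$ and hence the concave upper bound $\hat F(z)$ in the aggregated variables $z_\ell=X_\ell$. The maximization of $\hat F$ over $\sum_\ell z_\ell\le L$ is then handled by a first-order expansion at $z=\mathbf 1$ together with the gradient estimates of Lemma~\ref{lem:tech2-app}; it is not the Lagrangian/exponential optimization you describe, and Lemma~5.3 of~\cite{feldman2020one} does not directly cover your formulation. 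So even after replacing $e^{-y}$ by $(1-p_{\ell'})^{x}$, the final paragraph of your proposal would still need to be rewritten along the paper's concavity argument rather than a bare citation.
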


For any $\ell\in [L]$, define $A_{\geq i} = \sum_{\ell=i}^{L}a_i$, where $a_i$ is defined in Definition~\ref{def:weight-sequence}. We know that $a_1 = 1$. Here are some simple facts of $\{a_i\}_{i \in [L]}$ and $\{A_{\geq i}\}_{i \in [L]}$. 

\begin{lemma}[Lemma 5.2 in \cite{feldman2020one}]
	\label{lem:tech1-app}
	The weights $w_1, \ldots, w_{L} > 0$, and for every $j = 1,\cdots, L$,
	\begin{align}
	a_j \prod_{i=1}^{j-1}\left(1 - \frac{a_i}{A_{\geq i}}\right) =&~ 1, \label{eq:lower2-2-app}\\
	\sum_{i=j}^{L}(2 - \frac{a_i}{A_{\geq i}}) = &~ \frac{A_{\geq j}}{a_j},\label{eq:lower2-3-app} \\
	2j - H_j \leq \frac{A_{\geq L -j + 1}}{a_{L-j + 1}} \leq &~ 2j -1.\label{eq:lower2-4-app}
	\end{align} 
	Here, we use $H_j$ to denote the harmonic number of $\sum_{i=1}^{j}\frac{1}{i}$.
\end{lemma}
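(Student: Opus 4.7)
The key observation is that the recursive definition of $\delta_\ell$ can be rewritten as a clean quadratic identity. Setting $r = \delta_{\ell-1}-1$ and squaring $2r/\delta_\ell - 1 = \sqrt{1+4/\delta_\ell}$ gives $(\delta_{\ell-1}-1)^2 = \delta_{\ell-1}\delta_\ell$, which rearranges to the two equivalent forms
\begin{equation*}
\frac{\delta_{\ell-1}-1}{\delta_\ell} = \frac{1}{1-1/\delta_{\ell-1}} = \frac{a_\ell}{a_{\ell-1}} \qquad\text{and}\qquad \delta_{\ell-1}-\delta_\ell \;=\; 2-\frac{1}{\delta_{\ell-1}}.
\end{equation*}
The first form confirms consistency of the two expressions given for $a_\ell$ in Definition~\ref{def:weight-sequence} and implies $w_\ell>0$. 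The second form is the engine driving Eq.~\eqref{eq:lower2-3-app} and Eq.~\eqref{eq:lower2-4-app}.

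Next, I will establish by backward induction on $j$ the auxiliary identity $A_{\geq j} = \delta_j\, a_j$. The base case $j=L$ reads $a_L = \delta_L a_L$, which holds since $\delta_L=1$. For the inductive step, assume $A_{\geq j+1} = \delta_{j+1} a_{j+1}$; then using $a_{j+1}/a_j = (\delta_j-1)/\delta_{j+1}$ from the identity above,
\begin{equation*}
A_{\geq j} \;=\; a_j + A_{\geq j+1} \;=\; a_j + \delta_{j+1}\cdot\frac{\delta_j-1}{\delta_{j+1}}\,a_j \;=\; \delta_j\, a_j.
\end{equation*}
In particular $a_i/A_{\geq i} = 1/\delta_i$, so $1 - a_i/A_{\geq i} = A_{\geq i+1}/A_{\geq i}$, making the product in Eq.~\eqref{eq:lower2-2-app} telescope to $A_{\geq j}/A_{\geq 1}$. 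Thus Eq.~\eqref{eq:lower2-2-app} is equivalent to $a_j A_{\geq j} = A_{\geq 1}$, i.e.\ $a_j^2\delta_j$ constant in $j$; squaring the recursion $a_{j+1}/a_j = (\delta_j-1)/\delta_{j+1}$ and invoking $(\delta_j-1)^2 = \delta_j\delta_{j+1}$ yields $(a_{j+1}/a_j)^2 = \delta_j/\delta_{j+1}$, so $a_j^2\delta_j$ is constant as needed.

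Equation \eqref{eq:lower2-3-app} now reads $\sum_{i=j}^L (2-1/\delta_i) = \delta_j$, which follows by telescoping the identity $\delta_i-\delta_{i+1} = 2-1/\delta_i$ from $i=j$ to $L-1$ and adding $2-1/\delta_L = 1 = \delta_L$. For Eq.~\eqref{eq:lower2-4-app}, the same telescoping gives
\begin{equation*}
\delta_{L-j+1} \;=\; 2j - \sum_{i=L-j+1}^{L}\frac{1}{\delta_i}.
\end{equation*}
Since each term $1/\delta_i > 0$, the upper bound $\delta_{L-j+1}\le 2j-1$ is immediate (the $i=L$ summand alone contributes $1$). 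For the lower bound, I first show $\delta_i \ge L-i+1$ by backward induction: $\delta_L=1$, and $\delta_{i-1} = \delta_i + 2-1/\delta_{i-1} \ge \delta_i + 1$ whenever $\delta_{i-1}\ge 1$, which holds throughout. Plugging $1/\delta_i \le 1/(L-i+1)$ into the display gives $\sum_{i=L-j+1}^L 1/\delta_i \le \sum_{k=1}^j 1/k = H_j$, yielding $\delta_{L-j+1}\ge 2j-H_j$. The only nontrivial step in the whole argument is spotting the quadratic identity $(\delta_{\ell-1}-1)^2=\delta_{\ell-1}\delta_\ell$; everything else is telescoping or a one-line induction, so I expect no serious obstacle.
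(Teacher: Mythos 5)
Your proof is correct, and it fills a real gap: the paper cites this as Lemma 5.2 of [feldman2020one] and provides no proof of its own, so there is nothing to compare against in the present manuscript.

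A few remarks on the details. The opening manipulation correctly turns the recursion into $(\delta_{\ell-1}-1)^2 = \delta_{\ell-1}\delta_\ell$ (squaring $2(\delta_{\ell-1}-1)/\delta_\ell - 1 = \sqrt{1+4/\delta_\ell}$ is valid since the left side is positive), and the two corollaries $\tfrac{\delta_{\ell-1}-1}{\delta_\ell}=\tfrac{1}{1-1/\delta_{\ell-1}}$ and $\delta_{\ell-1}-\delta_\ell=2-1/\delta_{\ell-1}$ follow immediately. The backward induction $A_{\geq j}=\delta_j a_j$ is clean and makes $a_i/A_{\geq i}=1/\delta_i$ and $1-a_i/A_{\geq i}=A_{\geq i+1}/A_{\geq i}$, from which Eq.~\eqref{eq:lower2-2-app} telescopes to $a_j A_{\geq j}=A_{\geq 1}$, equivalent to $a_j^2\delta_j$ being constant; your one-line verification via squaring $a_{j+1}/a_j$ closes this. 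Eq.~\eqref{eq:lower2-3-app} then becomes $\sum_{i=j}^L(2-1/\delta_i)=\delta_j$, obtained by telescoping together with the boundary term $2-1/\delta_L=1=\delta_L$. For Eq.~\eqref{eq:lower2-4-app}, the upper bound $\delta_{L-j+1}\le 2j-1$ uses only the $i=L$ term, and the lower bound uses $\delta_i\ge L-i+1$ via $\delta_{i-1}\ge\delta_i+1$. The one place you gloss over is justifying ``$\delta_{i-1}\ge 1$, which holds throughout'': this follows directly from the recursion, since $\delta_L=1$ and each $\delta_{\ell-1}=1+(\cdot)\delta_\ell$ with a positive coefficient, so all $\delta_\ell\ge 1$. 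You should state this explicitly rather than assert it, but it is not a gap in the argument. Positivity of the $w_\ell$'s follows since $\delta_i\ge 2$ for $i\le L-1$ (from $\delta_i\ge L-i+1$), so each factor $\delta_i-1$ in $a_\ell$ is strictly positive.
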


\begin{proof}[Proof of Lemma~\ref{lem:lower2-app}]
	 For any $\ell \in \{1, \ldots, L-1\}$, and $T_1 \subseteq [m_1]\backslash \{u_1\}, \ldots, T_{\ell} \in [m_{\ell}] \backslash \{u_{\ell}\}$, the $\Sample$ procedure guarantees
	 \begin{align}
	 \label{eq:pr1}
	 \Pr_{\bR\sim \calD}\left[ \bR\cap (W_{u}\backslash \calA_u) = \bigcup_{i \in [\ell]}\bigcup_{j \in T_i} A_{u_1, \ldots, u_{\ell-1}, j}  \right] = w_{\ell} \prod_{i =1}^{\ell}\left (\frac{a_i}{A_{\geq i}} \right)^{|T_i|} \left(1 - \frac{a_i}{A_{\geq i}}\right)^{m_i - |T_i|}.
	 \end{align}
	 Let $R$ be in the support of $\calD$ and $R\cap (W_{u}\backslash \calA_u) = \bigcup_{i \in [\ell]}\bigcup_{j \in T_i} A_{u_1, \ldots, u_{\ell-1}, j}$, then for any fixed subset $S\subseteq W_{u}\backslash \calA_{u}$, the function value is determined only be the restriction $S\cap \bigcup_{i \in [\ell]}\bigcup_{j \in T_i} A_{u_1, \ldots, u_{\ell-1}, j}$, as
	 \begin{align}
	 \label{eq:pr2}
	 g_R(x^{S})= 1 - \prod_{i=1}^{\ell}\prod_{j \in T_i}\left(1 - \frac{|A_{u_1, \cdots, u_{\ell-1}, j}\cap S|}{\eps k}\right) .
	 \end{align}
	 
	 Therefore, for any $S\subseteq W_u$, we have that
	\begin{align*}
	&~\calG(x^{S})\\ 
	 =&~ \E_{\bR\sim \calD} \big[g_{\bR}(x^{S} )\big]\\
	= &~ \sum_{\ell =1}^{L-1} \sum_{T_1\subseteq [m_1]\backslash \{u_1\}} \cdots\sum_{T_\ell \subseteq [m_\ell] \backslash \{u_{\ell} \}}  \Pr\left[ R\cap (W_{u}\backslash \calA_u) = \bigcup_{i \in [\ell]}\bigcup_{j \in T_i} A_{u_1, \ldots, u_{\ell-1}, j}  \right]\cdot g_R(x^{S})\\
	 &~ +  \sum_{T_1\subseteq [m_1]\backslash \{u_1\}} \cdots\sum_{T_{L-1} \subseteq [m_{L-1}] \backslash \{u_{L-1} \}}    \Pr\left[ R\cap (W_{u}\backslash \calA_u) = \bigcup_{i \in [\ell]}\bigcup_{j \in T_i} A_{u_1, \ldots, u_{\ell-1}, j} \bigcup A_{u_1, \ldots, u_{L}} \right]\cdot g_R(x^{S})\\
	 \leq &~ \sum_{\ell =1}^{L-1} \sum_{T_1\subseteq [m_1]\backslash \{u_1\}} \cdots\sum_{T_\ell \subseteq [m_\ell] \backslash \{u_{\ell} \}}   \Pr\left[ R\cap (W_{u}\backslash \calA_u) = \bigcup_{i \in [\ell]}\bigcup_{j \in T_i} A_{u_1, \ldots, u_{\ell-1}, j}  \right]\cdot g_R(x^{S}) + w_{L}\\
	= &~ \sum_{\ell =1}^{L-1} \sum_{T_1\subseteq [m_1]\backslash \{u_1\}} \cdots\sum_{T_\ell \subseteq [m_\ell] \backslash \{u_{\ell} \}}   w_{\ell} \prod_{i=1}^{\ell}\left (\frac{a_i}{A_{\geq i}} \right)^{|T_i|} \left(1 - \frac{a_i}{A_{\geq i}}\right)^{m_i - |T_i|}  \left(1 - \prod_{i=1}^{\ell}\prod_{j \in T_i}\left(1 - \frac{|A_{u_1, \cdots, u_{\ell-1}, j}\cap S|}{\eps k}\right)\right)\\
	&~ + w_{L}  \\
	 = &~ \sum_{\ell=1}^{L-1}w_\ell\left(1 - \prod_{i= 1}^{\ell}\prod_{j \in [m_i]} \left(1 - \frac{a_i}{A_{\geq i}}\frac{|A_{u_1, \cdots, u_{\ell-1},j} \cap S|}{\eps k}  \right)\right) + w_{L} := \tilde{f}(S).
	\end{align*}
	The third step holds since the second term is bounded by $w_{L}$. The fourth step follows from Eq.~\eqref{eq:pr1}~\eqref{eq:pr2} and the fifth step follows from calculations.

	For any $S\subseteq W_{u}\backslash \calA_u$, $\ell \in \{1, \ldots, L-1\}$, $j \in [m_\ell]$, let $y_{\ell, j}^{S} = \frac{|A_{u_1, \cdots, u_{\ell-1},j} \cap S|}{\eps k} $. Define the feasible region $\tilde{D}\subseteq \R^{m_1 + \cdots m_{L-1}}$ as
	\[
	\tilde{D} = \left\{  (y_{1, 1}, \ldots, y_{1, m_1}, \ldots, y_{L-1, 1}, \ldots, y_{L-1,m_{L}})\in \R^{m_1+\cdots m_{L-1}}_{+}: \sum_{\ell\in L}\sum_{j \in [m_\ell]}y_{\ell, j} \leq L, y_{\ell,j} \leq1       \right\}.
	\] 
	
	Different from Lemma 5.4 of \cite{feldman2020one}, a direct continuous relaxation of $\tilde{f}$ is not concave. Instead, we consider a concave upper bound $\hat{f}$ on it
	\begin{align}
	\tilde{f}(S) = &~ w_L + \sum_{\ell=1}^{L-1}w_{\ell} \left(1 - \prod_{i = 1}^{\ell}\prod_{j \in [m_i]} \left(1 - \frac{a_i}{A_{\geq i}}y_{i, j}^{S} \right)\right) \notag\\
	\leq &~ w_{L} + \sum_{\ell=1}^{L-1}w_\ell \left(1 - \prod_{i = 1}^{\ell}\prod_{j \in [m_i]} \left(1 - \frac{a_i}{A_{\geq i}}\right)^{y_{i, j}^{S}} \right) \notag\\
	= &~w_{L} + \sum_{\ell=1}^{L-1}w_\ell \left(1 - \prod_{i = 1}^{j} \left(1 - \frac{a_i}{A_{\geq i}}\right)^{\sum_{j\in [m_i]}y_{i, j}^{S}} \right) := \hat{f}(S)\label{eq:lower2-1-app}
	\end{align}
	The second step holds since $(1 - as) \geq (1 - a)^{s}$ holds for any $a, s\in [0,1]$ (see Claim~\ref{clm:tech1} for verification).
	
	One can extend $\hat{f}$ to a continuous concave function $\hat{F}: \R^{L-1}_{+} \rightarrow \R_{+}$. Concretely, let the new feasible region be $\hat{D} \subseteq \R^{L-1}$, $\hat{D} = \{(z_1, \ldots, z_{L-1}) \in \R_{+}^{L-1}, \sum_{\ell =1}^{L-1}z_{\ell} \leq L\}$. For intuition, one should think $z_\ell = \sum_{j \in [m_\ell]}y_{\ell, j}$. For any $z = (z_1, \ldots, z_{L-1})$, we define
	\[
	\hat{F}(z) = w_{L} + \sum_{\ell=1}^{L-1}w_\ell \left(1 - \prod_{i = 1}^{j} \left(1 - \frac{a_i}{A_{\geq i}}\right)^{z_{\ell}} \right) 
	\]
	One can verify that $\hat{F}$ is monotone and concave, since each term in the summation is concave and monotone. 
	
	For any $S\subseteq W_{u}\backslash \calA_u$, $\ell \in \{1, \ldots, L-1\}$, let $z^{S}_{\ell} = \sum_{j\in [m_l]}y_{\ell, j}^{S} = \sum_{j\in [m_l]}\frac{|A_{u_1, \cdots, u_{\ell-1},j} \cap S|}{\eps k} $.
	We know that $z^{S} \in \hat{D}$ for any $S\in W_u\backslash \calA_u$.
	Take an arbitrary tuple $(u_1', \ldots, u_{L-1}') \in [m_1]\times \cdots [m_{L-1}]$ such that $u_{\ell} \neq u_{\ell}'$ holds for all $\ell \in \{1, \ldots, L-1\}$, and let $S = A_{u_1'}\cup \cdots \cup A_{u_1, \ldots u_{L-2},u_{L-1}'}\cup A_{u_1, \ldots, u_{L}}$. It is easy to verify that $S\subseteq W_u \backslash \calA_u$, $|S| = k$. Furthermore, we have
	\begin{align}
	\max_{z \in \hat{D}} \hat{F}(z) \leq &~ \hat{F}(\mathbf{1}) + \max_{z\in \hat{D}}\langle \nabla \hat{F}(\mathbf{1}), z- \mathbf{1} \rangle = \tilde{f}(S) + \max_{z\in \hat{D}}\langle \nabla \hat{F}(\mathbf{1}), z - \mathbf{1} \rangle \notag\\
	= &~ w_{L} + \sum_{\ell=1}^{L-1}w_\ell \left(1 - \prod_{i = 1}^{j} \left(1 - \frac{a_i}{A_{\geq i}}\right)\right) +  \max_{z\in \hat{D}}\langle \nabla \hat{F}(\mathbf{1}), z - \mathbf{1} \rangle\notag\\
	\leq &~ \frac{L}{2L- H_L} + \max_{z\in \hat{D}}\langle \nabla \hat{F}(\mathbf{1}), z - \mathbf{1} \rangle\notag\\
	= &~ \frac{1}{2} + O(\eps\log(1/\eps)) + + \max_{z\in \hat{D}}\langle \nabla \hat{F}(\mathbf{1}), z - \mathbf{1} \rangle\label{eq:lower2-5-app}
	\end{align}
	The first step follows from the concavity, the second step follows from the definition, the third step holds due to Eq~\eqref{eq:lower2-1-app}, the fourth step holds due to Eq.~\eqref{eq:lower2-2-app} and the fact that $A_{\geq 1} \geq 2L - H_{L}$ (taking $j = L$ in Eq.~\eqref{eq:lower2-4-app}). The last step holds due to $L = 1/\eps$.
	
	 It remains to bound the second term as
	\begin{align}
	\max_{z\in \hat{D}}\langle \nabla \hat{F}(\mathbf{1}), z - \mathbf{1} \rangle =&~ L \max_{\ell \in [L-1]}\frac{\partial \hat{F}}{\partial z_{\ell}} - \sum_{\ell=1}^{L-1}\frac{\partial \hat{F}}{\partial z_{\ell}}(\mathbf{1})\notag\\
	\leq &~ \frac{1}{A_{\geq 1}}\left(\frac{L}{2} - \sum_{\ell=1}^{L-1}\left(\frac{1}{2} - \frac{H_{L+1-\ell}}{L+1 - \ell} \right)\right)\notag\\
	= &~\frac{1}{A_{\geq 1}}\sum_{\ell=1}^{L}\frac{H_{\ell}}{\ell}\notag\\
	 \leq &~ \frac{1}{A_{\geq 1}}\log^2( L) \lesssim \eps \log^2(1/\eps)\label{eq:lower2-6-app}
	\end{align}
	The second step follows from Lemma~\ref{lem:tech2-app}, the last step holds due to $L = 1/\eps$, $A_{\geq 1} \geq 2L-1 = 2/\eps -1$ (see Lemma~\ref{lem:tech1-app}). Finally, we conclude the proof by noticing for any $S \subseteq W'$, $|S| = k$,
	\begin{align*}
	\mathcal{F}(S) =&~ \min\left\{  \mathcal{G}(x)   + \frac{\eps}{k}|S|, 1\right\} \leq \E_{\bR\sim \calD}[g_{R}(x^{S})]+\eps = \tilde{f}(S) +\eps \leq\max_{z\in \hat{D}}\hat{F}(z) + \eps\\
	\leq &~ \frac{1}{2} + O\left(\eps \log^2(1/\eps)\right)
	\end{align*}
	The second step holds due to $|S| \leq k$, the fourth step holds due to Eq.~\eqref{eq:lower2-1-app}, the fifth step holds due to Eq.~\eqref{eq:lower2-5-app}  and Eq.~\eqref{eq:lower2-6-app}.
\end{proof}

\begin{lemma}[Claim 5.4 in \cite{feldman2020one}]
	\label{lem:tech2-app}
	For any $\ell = 1,\ldots, L$, one has 
	\[
	\frac{1}{A_{\geq 1}} \left(\frac{1}{2} - \frac{H_{L+1 - \ell}}{L+1-\ell} \right) \leq \frac{\partial \hat{F}}{\partial z_{\ell}}(\mathbf{1}) \leq \frac{1}{A_{\geq 1}}\cdot \frac{1}{2} 
	\]
\end{lemma}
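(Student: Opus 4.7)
The plan is to push $\frac{\partial\hat{F}}{\partial z_\ell}(\mathbf{1})$ into a clean closed form and then match each side of the claim directly against Lemma~\ref{lem:tech1-app}. First, differentiating $\hat{F}$ term by term and using $1-a_i/A_{\geq i}=A_{\geq i+1}/A_{\geq i}$ telescopes the product $\prod_{i=1}^{\ell'}(1-a_i/A_{\geq i})$ into $A_{\geq \ell'+1}/A_{\geq 1}$. Substituting $w_{\ell'}=a_{\ell'}/A_{\geq 1}$ and expanding $A_{\geq \ell}^{2}=(\sum_{i=\ell}^{L}a_i)^{2}$ to pick out the cross terms gives the exact rewriting
\[
\frac{\partial\hat{F}}{\partial z_\ell}(\mathbf{1})=\frac{\ln(A_{\geq \ell}/A_{\geq \ell+1})}{2\,A_{\geq 1}^{2}}\,\Bigl(A_{\geq \ell}^{2}-\sum_{i=\ell}^{L}a_{i}^{2}\Bigr).
\]

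The next step is a simple invariant: $a_\ell A_{\geq \ell}=A_{\geq 1}$ for every $\ell$. This follows by dividing Eq.~\eqref{eq:lower2-2-app} at successive indices, giving $a_{\ell+1}/a_\ell=A_{\geq \ell}/A_{\geq \ell+1}$, so $a_\ell A_{\geq \ell}$ is constant in $\ell$ and equals $a_1 A_{\geq 1}=A_{\geq 1}$. Writing $r_\ell:=A_{\geq \ell}/a_\ell$, the invariant turns $A_{\geq \ell}^{2}$ into $r_\ell A_{\geq 1}$ and each $a_i^{2}$ into $A_{\geq 1}\cdot a_i/A_{\geq i}$. Combining with Eq.~\eqref{eq:lower2-3-app}, which yields $\sum_{i=\ell}^{L}a_i/A_{\geq i}=2(L-\ell+1)-r_\ell$, collapses the right-hand side to
\[
\frac{\partial\hat{F}}{\partial z_\ell}(\mathbf{1})=\frac{\ln\!\bigl(r_\ell/(r_\ell-1)\bigr)\,\bigl(r_\ell-(L-\ell+1)\bigr)}{A_{\geq 1}}.
\]

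With this closed form each bound becomes a scalar inequality in $r_\ell$. For the upper bound, $\ln(r_\ell/(r_\ell-1))\le 1/(r_\ell-1)$ (from $-\ln(1-x)\le x/(1-x)$ with $x=1/r_\ell$) reduces the claim $\frac{\partial\hat{F}}{\partial z_\ell}(\mathbf{1})\le 1/(2A_{\geq 1})$ to $r_\ell\le 2(L-\ell+1)-1$, which is exactly the upper half of Eq.~\eqref{eq:lower2-4-app} with $j=L-\ell+1$. For the lower bound, $\ln(r_\ell/(r_\ell-1))\ge 1/r_\ell$ together with $r_\ell\ge 2m-H_m$ from Eq.~\eqref{eq:lower2-4-app} (setting $m:=L-\ell+1$) reduces the claim to $(m-H_m)/(2m-H_m)\ge 1/2-H_m/m$, which on cross-multiplication simplifies to $H_m(3m/2-H_m)\ge 0$ and is immediate since $H_m\le m\le 3m/2$ for $m\ge 1$. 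The edge case $\ell=L$ is trivial because $\hat{F}$ does not depend on $z_L$, so the derivative is zero, which lies in $[\,-1/(2A_{\geq 1}),\,1/(2A_{\geq 1})\,]$.

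The main obstacle is spotting the invariant $a_\ell A_{\geq \ell}=A_{\geq 1}$: once it is in place, Eq.~\eqref{eq:lower2-3-app} eliminates the awkward sum of squares, the constant $1/2$ and the correction $H_{L+1-\ell}/(L+1-\ell)$ drop out of the two halves of Eq.~\eqref{eq:lower2-4-app}, and the proof finishes via the two elementary logarithm estimates. Without this invariant, the leading term from a direct bound on $\ln(A_{\geq \ell}/A_{\geq \ell+1})$ carries an extra factor $A_{\geq \ell}^{2}/A_{\geq \ell+1}$ that exceeds $A_{\geq 1}$ in general, and the tight constant $1/2$ cannot be recovered by any case analysis on $\sum_{i=\ell}^{L}a_i^{2}$ alone.
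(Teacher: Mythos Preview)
Your argument is correct. The paper does not prove this lemma; it is quoted verbatim as Claim~5.4 of \cite{feldman2020one} and used as a black box, so there is no ``paper's own proof'' to compare against. Your proof is self-contained and goes through cleanly: the telescoping $\prod_{i=1}^{\ell'}(1-a_i/A_{\ge i})=A_{\ge \ell'+1}/A_{\ge 1}$ together with $w_{\ell'}=a_{\ell'}/A_{\ge 1}$ gives the sum $\sum_{\ell'=\ell}^{L-1}a_{\ell'}A_{\ge \ell'+1}=\tfrac12\bigl(A_{\ge\ell}^2-\sum_{i\ge\ell}a_i^2\bigr)$; the key identity $a_\ell A_{\ge\ell}=A_{\ge 1}$ (immediate from Eq.~\eqref{eq:lower2-2-app} via the same telescoping) then converts both $A_{\ge\ell}^2$ and $\sum a_i^2$ into multiples of $A_{\ge 1}$, and Eq.~\eqref{eq:lower2-3-app} collapses everything to the one-variable form $\ln\!\bigl(r_\ell/(r_\ell-1)\bigr)\,(r_\ell-m)/A_{\ge 1}$ with $m=L-\ell+1$. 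The two logarithm bounds $\ln(1+x)\le x$ and $-\ln(1-x)\ge x$ reduce each half of the claim exactly to the corresponding half of Eq.~\eqref{eq:lower2-4-app}, and the residual scalar inequality $(m-H_m)/(2m-H_m)\ge \tfrac12-H_m/m$ indeed rearranges to $H_m(3m-2H_m)\ge 0$, which holds since $H_m\le m$. Two minor remarks: (i) for the lower-bound step you implicitly use $r_\ell\ge m$ so that multiplying by the nonnegative factor $r_\ell-m$ preserves the inequality; this follows from $r_\ell\ge 2m-H_m\ge m$ and is worth stating; (ii) the case $\ell=L$ is really outside the domain of $\hat{F}$ (which is a function of $z_1,\dots,z_{L-1}$), and the paper only ever uses the lemma for $\ell\le L-1$, so your treatment of that edge case is adequate but inessential.
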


\begin{claim}
	\label{clm:tech1}
	For any $a, s\in [0, 1]$, $(1 -as) \geq (1 - a)^{s}$.
\end{claim}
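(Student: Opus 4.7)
The plan is to recognize this as an instance of the generalized Bernoulli inequality $(1+x)^s \leq 1 + sx$ valid for $s \in [0,1]$ and $1+x \geq 0$, from which the claim follows immediately by substituting $x = -a$. Since the statement is so short, I would give a self-contained proof via concavity rather than cite an external inequality.

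Concretely, fix $a \in [0,1]$ and consider the auxiliary function $\phi:[0,1] \to \mathbb{R}$ defined by $\phi(s) := (1-as) - (1-a)^s$, with the convention $0^0 := 1$ to handle the degenerate value $a = 1$. The first step is to check the boundary values, which are $\phi(0) = 1 - 1 = 0$ and $\phi(1) = (1-a) - (1-a) = 0$. The cases $a \in \{0, 1\}$ are trivial: at $a = 0$ the function $\phi$ vanishes identically, and at $a = 1$ it becomes $\phi(s) = 1-s \geq 0$ on $[0,1]$. So I would restrict attention to $a \in (0,1)$, on which $(1-a)^s$ is a smooth positive function of $s$.

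The second step is a one-line computation: differentiating twice in $s$ yields
\[
\phi''(s) = -(1-a)^s \bigl(\ln(1-a)\bigr)^2 \leq 0,
\]
so $\phi$ is concave on $[0,1]$. Any concave function on $[0,1]$ that vanishes at both endpoints is nonnegative throughout the interval; indeed, writing $s = (1-s)\cdot 0 + s \cdot 1$ and invoking concavity directly gives $\phi(s) \geq (1-s)\phi(0) + s\phi(1) = 0$. Rearranging yields $(1-as) \geq (1-a)^s$, the desired inequality.

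I do not anticipate a genuine obstacle here: the statement is an elementary and well-known inequality, and the only minor subtlety is handling the boundary case $a = 1$, where $\ln(1-a)$ is undefined, which I dispose of by direct inspection before invoking the concavity argument.
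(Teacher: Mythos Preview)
Your proof is correct and takes essentially the same approach as the paper: both define the difference $\phi(s) = (1-as) - (1-a)^s$, observe $\phi(0)=\phi(1)=0$, and use that $\phi''\le 0$ (the paper phrases this as ``$h'$ is monotone decreasing'') to conclude $\phi\ge 0$ on $[0,1]$. Your version is arguably cleaner, since you invoke concavity directly and explicitly dispose of the edge cases $a\in\{0,1\}$, which the paper's proof glosses over.
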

\begin{proof}
	
	We fix $a \in [0,1]$ and define $h(s ) = 1 -as  - (1 - a)^{s}$. Notice that $h(s) = 0$ when $s = 0$ and $s= 1$. Hence, it suffices to prove the $h'(s)$ is positive at the $s=0$ and it is monotone decreasing. This can be true, as 
	\[
	h'(s) = 1 - a - \log(1-a)(1-a)^{s}
	\]
	when $s=0$, $h'(0) = 1 - a - \log(1 - a) \geq 0$ and $h'(s)$ is monotone decreasing with $s$.
	\end{proof}

\subsection{Missing proof from Section~\ref{sec:lower2-lower}}

We next provide the proof for Claim~\ref{blablaclaim}, which gives a lower bound on the $d^{\ell + 1}$-stage game for finding hidden bijections.
\begin{proof}[Proof of Claim~\ref{blablaclaim}]
	The hidden bijections are drawn independently and uniformly.
	Let $v = (v_1, \ldots, v_{\ell+1})\in [d]^{\ell+1}$ and let $X_v$ denote the number of query for the algorithm to recover $\pi_{v_1, \ldots, v_{\ell}}(v_{\ell+1})$. It is easy to see that $\Pr[X_v | \{X_{v'}\}_{v'\in [d]^{\ell+1}\backslash\{v\}} ] \leq \frac{m_{\ell+1}}{16L} \leq \frac{1}{4L}$.
	Let $Y_v$ be an indicator such that $Y_v = \mathsf{1}\{X_v \leq \frac{m_{\ell+1}}{16L} \}$. Then by Azuma-Hoeffding bound, one has
	\begin{align*}
	\Pr\left[\sum_{v\in [d]^{\ell+1}} Y_v  \geq \frac{1}{2L} d^{\ell+1}\right] \leq \exp(-d^{\ell+1}/12L) \leq 1/n^4.
	\end{align*}
	Hence, we conclude that with probability at least $1 - 1/n^4$, the number of stage that requires less than $\frac{m_{\ell+1}}{16L} $ queries is no more than $\frac{d^{\ell+1}}{2L}$, hence the algorithm needs to pay at least $\frac{m_{\ell+1}}{16L} \cdot \frac{d^{\ell+1}}{2L} = \Omega(m_{\ell+1}d^{\ell+1})$ queries.
	
\end{proof}

\section{Missing proof from Section~\ref{sec:insert-cardinality}}
\label{sec:insert-cardinality-app}

We provide a generic reduction and show that one can assume knowing the value of $\OPT$ in dynamic submodular maximization. This brings an overhead of $O(\eps^{-1}\log(k/\eps))$ on amortized running time.
\begin{lemma}
	\label{lem:opt}
	Consider dynamic submodular maximization in an insertion-only stream, suppose there is an algorithm that maintains an $\alpha$-approximate solution given an estimate of $\OPT$ within a multiplicate error of $(1+\eps)$, and makes $O(\mathcal{T})$ amortized queries per update. Then, there is an algorithm that achieves $(\alpha-\eps)$-approximation (without knowing $\OPT$) with amortized query complexity $O(\mathcal{T}\log(k/\eps)/\eps)$.
\end{lemma}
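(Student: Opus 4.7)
The plan is the standard geometric-guessing reduction, adapted to run on-line using the current maximum singleton value as a pivot. On insertion of each element $e_t$ I first spend one query to evaluate $f(\{e_t\})$ and update $m_t := \max_{e\in V_t} f(\{e\})$; by monotonicity and subadditivity (which follows from nonnegative submodularity together with $f(\emptyset)=0$), the bound $m_t \le \OPT_t \le k\, m_t$ holds at every round. I then maintain, in parallel, one instance of the given $\alpha$-approximation subroutine for every geometric guess $g_i = (1+\eps')^i$ lying in the window $[\eps' m_t/k,\ k m_t/\eps']$, where $\eps' = \Theta(\eps)$ is a sufficiently small constant multiple of $\eps$. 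Because $m_t$ is monotone non-decreasing along an insertion-only stream, the window only shifts upward: new instances for large guesses are spawned fresh (initialized with $\OPT=g_i$) when $m_t$ grows, and instances whose guess drops below the lower end can simply be discarded. After each insertion I feed $e_t$ to every currently active instance and output the best solution among them.

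The window contains $O(\log(k/\eps)/\eps)$ guesses, so feeding each new element to every active instance costs $O(\mathcal{T}\log(k/\eps)/\eps)$ amortized queries, plus the single query per insertion used to update $m_t$; this matches the claimed complexity.

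For correctness I will pick, at a round $t$, the guess $g^\star := \max\{g_i : g_i \le (1-2\eps')\OPT_t\}$, which satisfies $(1-3\eps')\OPT_t \le g^\star \le (1-2\eps')\OPT_t$. Activity of $g^\star$ in the window at time $t$ is immediate from $m_t\le \OPT_t\le k m_t$ together with these bounds. Given activity, the subroutine's promise yields $f(S^{(g^\star)}_t) \ge \alpha\, g^\star \ge \alpha(1-3\eps')\OPT_t \ge (\alpha-\eps)\OPT_t$ once one verifies that the optimum over the elements actually fed to the instance for $g^\star$ has reached $g^\star$ by round $t$.

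The main obstacle is precisely this ``missed elements'' check: an instance spawned at time $t_0$ has never been fed $V_{t_0-1}$, so I must bound the value of the part of the optimum that lies inside $V_{t_0-1}$. The key observation is that $t_0$ is exactly the first round at which $m_{t_0}\ge \eps' g^\star/k$, so $m_{t_0-1} < \eps' g^\star/k$, which by subadditivity implies that every $k$-subset of $V_{t_0-1}$ has value strictly less than $\eps'\, g^\star$. Splitting an optimal $k$-set $O^\star\subseteq V_t$ as $(O^\star\cap V_{t_0-1})\cup (O^\star\setminus V_{t_0-1})$ and using subadditivity once more, the portion lying in the fed stream $V_t\setminus V_{t_0-1}$ has value at least $\OPT_t - \eps'\, g^\star \ge g^\star/(1-2\eps') - \eps'\, g^\star \ge g^\star$ for $\eps'$ small, which is exactly the condition triggering the subroutine's invariant. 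Plugging in $\eps' = \eps/c$ for a large enough constant $c$ closes the $(\alpha-\eps)$-approximation bound.
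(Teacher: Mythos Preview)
Your proposal is correct and follows essentially the same approach as the paper: both track the running maximum singleton value $m_t$, maintain $O(\eps^{-1}\log(k/\eps))$ parallel copies of the base algorithm with geometrically spaced guesses pivoted around $m_t$, and handle the ``missed elements'' issue for a freshly spawned thread by observing that all elements inserted before its spawn time have total value (by subadditivity) at most $k\cdot m_{t_0-1} \le \eps'\,g$, costing only an additive $O(\eps)\cdot\OPT_t$ in the final bound. Your writeup is in fact more explicit than the paper's own sketch on this last point.
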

\begin{proof}
	Let $\mathsf{v}_t$ be the maximum value of a singleton at time $t$, i.e. $\mathsf{v}_t = \max_{e\in V_t}f(e)$. In an insertion-only stream, $\{\mathsf{v}_t\}_{t\geq 1}$ is monotone and we assume $\mathsf{v}_1 = 1$ for simplicity.
	The algorithm maintains multiple copies of the baseline algorithm, where the $i$-th copy guesses $(1+\eps)^{i}$ on $\OPT$. At time $t$, the algorithm instantiate threads $i_t, \ldots, i_t + \eps^{-1}\log k +1$, where $ i_t = \lfloor \log_{1+\eps} \mathsf{v}_t \rfloor$, and it returns the maximum solution set among these threads. 
	For amortized query complexity, the algorithm maintains $O(\eps^{-1}\log k)$ threads at each step and for each element in the stream, it is covered by at most $\eps^{-1}\log (k/\eps)$ threads. To see the later, consider the $t$-th element in the stream, it is taken account by threads $i_t, \ldots, i_t + \eps^{-1}\log (k/\eps)$. For thread $i > i_t + \eps^{-1}\log (k/\eps)$, it doesn't need to consider element before time $t$, as their marginal contribution can be at most $k \cdot v_t  \leq k (1+\eps)^{i_t+1}\leq \eps (1+\eps)^{i}$ in the solution set. Hence, we lose at most $\eps$ factor in the approximation ratio and the amortized query complexity is at most $O(\mathcal{T}\log(k/\eps)/\eps)$.
	For approximation guarantee. We know $\mathsf{v}_t \leq \OPT_t \leq k\mathsf{v}_t$, and therefore, one of the thread will guess $\OPT_t$ within $(1+\eps)$ error. Since we only filter element smaller than $\eps\OPT_t/k$, the algorithm maintains an $(\alpha-\eps)$-approximation solution. 
\end{proof}

\section{Missing proof from Section~\ref{sec:insert-matroid}}
\label{sec:insert-matroid-app}

We first prove the following fact.
\begin{fact}[Restatement of Fact~\ref{fact111}]
	\label{fact:submodular1}
	Suppose $f$ is a monotone submodular funciton, the for any subset $A, B, C$, one has
	\[
	f_{A}(C) \leq f_{A\cup B}(C) + f_{A}(B)
	\]
\end{fact}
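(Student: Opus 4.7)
The plan is to prove the inequality by a direct algebraic expansion of the marginal-gain notation, reducing everything to a single comparison that follows immediately from monotonicity; submodularity is not actually needed.

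More precisely, I would first unfold the right-hand side:
\begin{align*}
f_{A\cup B}(C) + f_A(B)
&= \bigl(f(A\cup B\cup C) - f(A\cup B)\bigr) + \bigl(f(A\cup B) - f(A)\bigr) \\
&= f(A\cup B\cup C) - f(A).
\end{align*}
The left-hand side is $f_A(C) = f(A\cup C) - f(A)$. Subtracting $-f(A)$ from both sides, the claimed inequality is equivalent to $f(A\cup C) \le f(A\cup B\cup C)$, which holds since $A\cup C \subseteq A\cup B\cup C$ and $f$ is monotone.

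There is no real obstacle here; the only thing to watch is that the statement is an inequality involving three marginal-gain terms, so the temptation is to invoke submodularity of $f$ (as in $f_A(B)\ge f_{A\cup C}(B)$), but this is unnecessary. The cleaner approach is simply to telescope the two marginals on the right into a single difference $f(A\cup B\cup C)-f(A)$ and compare with $f(A\cup C)-f(A)$, where only monotonicity is needed. I will present the proof in that order: expand, telescope, conclude by monotonicity.
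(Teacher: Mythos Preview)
Your proof is correct and essentially identical to the paper's: both arguments reduce the inequality to $f(A\cup C)\le f(A\cup B\cup C)$ via a telescoping of the marginal terms, invoking only monotonicity. Your observation that submodularity is not actually needed is accurate and matches what the paper's own proof uses.
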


\begin{proof}
	One has
	\begin{align*}
	f_{A}(C) - f_{A}(B) =&~ f(A\cup C) - f(A) - f(A\cup B) + f(A) = f(A\cup C) - f(A\cup B) \\
	\leq&~ f(A\cup B\cup C) - f(A\cup B) = f_{A\cup B}(C)
	\end{align*}
	Hence, we conclude
	\[
	f_{A}(C) \leq  f_{A\cup B}(C) + f_{A}(B).
	\]
\end{proof}

We next prove Claim~\ref{claim111}.
\begin{proof}[Proof of Claim~\ref{claim111}]
	For any $e \in E_\ell$, by definition, we have $T_1\cup \cdots  \cup T_{\ell}\cup S_{\ell+1} \cup e \notin M$ and $e \in O_{\ell + 2} \cup \cdots \cup O_{L}$. Due to Lemma~\ref{lem:folklore}, $T_1\cup \cdots  \cup T_{\ell+1} \cup O_{\ell + 2} \cup \cdots \cup O_{L}\in M$, and therefore, by the augmentation property, we know that
	\begin{align*}
	|E_\ell| \leq &~ |O_{\ell+2}\cup \cdots \cup O_{L}| - (|T_1\cup \cdots \cup T_{\ell+1} \cup O_{\ell+2}\cup \cdots \cup O_{L}| - |T_1\cup \cdots \cup T_{\ell}\cup S_{\ell+1}|) \\
	=&~ |S_{\ell+1} \backslash T_{\ell+1}|.\qedhere
	\end{align*}
\end{proof}

We next prove Lemma~\ref{lem:matroid}.
\begin{proof}[Proof of Lemma~\ref{lem:matroid}]
	For the approximation guarantee, we prove one of the branches will yield a good solution. By Lemma~\ref{lem:matroid-approx}, we know there is a branch $a\in \mathcal{A}$ that returns a solution set $S_a$ and satisifes
	\[
	f(S_a) \geq (1-O(\eps))f_{S_a}(O).
	\]
	for any feasible set $O \subseteq V_t$. Taking $O = O_t$, we get
	\[
	f(S_a) \geq (1-O(\eps))f_{S_a}(O_t) \geq (1-O(\eps))(f(O_t) - f(S_a))  \quad \Rightarrow \quad f(S_a) \geq (1/2 - O(\eps))\OPT.
	\]
	
	For the amortized query complexity, by Eq.~\eqref{eq:branch_size}, the total number of the branch is bounded by $k^{\tilde{O}(1/\eps)}$, and by Lemma~\ref{lem:matroid-approx}, the amortized query per branch is $O(L) = O(\eps^{-1}\log(k/\eps))$. Hence,the total amortized number of query is at most $k^{\tilde{O}(1/\eps)}$.\qedhere
\end{proof}

\end{document}